\documentclass[a4paper,onecolumn,10pt,accepted=2026-02-27]{quantumarticle}
\pdfoutput=1
\usepackage[utf8]{inputenc}
\usepackage[english]{babel}
\usepackage[T1]{fontenc}
\usepackage{amsmath}
\usepackage{hyperref}
\usepackage{tikz}
\usepackage{lipsum}
\usepackage{graphicx}
\usepackage{amssymb}
\usepackage{verbatim}
\usepackage{multirow}
\usetikzlibrary{quantikz2}
\usepackage{float}
\usepackage{subcaption}
\usepackage{array}
\usepackage{pgfplots}
\pgfplotsset{compat=1.18}
\usepackage{amsthm}
\usepackage{url}
\usepackage{xcolor}
\hypersetup{colorlinks=false,allbordercolors=white}
\usepackage{algorithmic}
\usepackage{algorithm}
\usepackage{authblk}
\usepackage[numbers,sort&compress]{natbib}

\begin{document}

\title{A Novel Single-Layer Quantum Neural Network for Approximate SRBB-Based Unitary Synthesis}

\author{Giacomo Belli}
\email{giacomo.belli@unipr.it}
\affiliation{Quantum Software Laboratory, Dept. of Engineering and Architecture, University of Parma, 43124 Parma, Italy}
\orcid{0009-0008-3739-9525}
\author{Marco Mordacci}
\email{marco.mordacci1@unipr.it}
\orcid{0009-0001-1955-7197}
%\thanks{You can use the \texttt{\textbackslash{}email}, \texttt{\textbackslash{}homepage}, and \texttt{\textbackslash{}thanks} commands to add additional information for the preceding \texttt{\textbackslash{}author}. If applicable, this can also be used to indicate that a work has previously been published in conference proceedings.}
\affiliation{Quantum Software Laboratory, Dept. of Engineering and Architecture, University of Parma, 43124 Parma, Italy}
\author{Michele Amoretti}
\email{michele.amoretti@unipr.it}
\homepage{https://www.qslab.unipr.it/}
\affiliation{Quantum Software Laboratory, Dept. of Engineering and Architecture, University of Parma, 43124 Parma, Italy}
\orcid{0000-0002-6046-1904}

\maketitle

\begin{abstract}
  In this work, a novel quantum neural network is introduced as a means to approximate any unitary evolution through the Standard Recursive Block Basis (SRBB) and is subsequently redesigned with the number of CNOTs asymptotically reduced by an exponential contribution. This algebraic approach to the problem of unitary synthesis exploits Lie algebras and their topological features to obtain scalable parameterizations of unitary operators. First, the original SRBB-based scalability scheme, already known in the literature only from a theoretical point of view, is reformulated for efficient algorithm implementation and complexity management. Remarkably, 2-qubit operators emerge as a special case of the original scaling scheme. Furthermore, an algorithm is proposed to reduce the number of CNOT gates in the scalable variational quantum circuit, thus deriving a new implementable scaling scheme that requires only one layer of approximation. The single layer CNOT-reduced quantum neural network is implemented, and its performance is assessed with a variety of different unitary matrices, both sparse and dense, up to 6 qubits via the PennyLane library. The effectiveness of the approximation is measured with different metrics in relation to two optimizers: a gradient-based method and the Nelder-Mead method. The approximate CNOT-reduced SRBB-based synthesis algorithm is also tested on real hardware and compared with other valid approximation and decomposition methods available in the literature.
\end{abstract}

\tableofcontents

\section{Introduction}
Quantum machine learning (QML) aims to leverage quantum computing to improve machine learning results in terms of performance and training efficiency~\cite{schuld2018supervised,cerezo2021variational,dunjko2020non}. Central to QML is the concept of Quantum Neural Network (QNN), a hybrid quantum-classical algorithm that can be framed as a Variational Quantum Algorithm (VQA). It typically consists of three key components: a quantum encoding circuit, a Variational Quantum Circuit (VQC)~\cite{benedetti2019parameterized}, and an optimization process.
The quantum encoding circuit serves to encode classical data into a quantum state using one of several methods, such as basis encoding, amplitude encoding, or angle encoding~\cite{schuld2018supervised}.
The VQC is characterized by a set of parameters $\theta_i$ (rotation angles) that must be trained to minimize a classical loss function.
Lastly, the optimization process consists of using classical gradient descent to update the parameters $\theta_i$.
 
Since quantum information evolves according to unitary operators, it is clear why the parametric representation of dense unitary matrices and the relative decomposition into primitive gate sets of near-term quantum devices have aroused such great interest among mathematicians, physicists, and computer scientists~\cite{shende2004minimal,shende2005synthesis,bergholm2005quantum,mottonen2005decompositions,vatan2004optimal,vartiainen2004efficient,ashhab2022numerical}. Designing quantum circuits for unitary evolutions on a multiqubit system and identifying suitable approximations for a given unitary is known as gate synthesis~\cite{kliuchnikov2015framework,ross2015algebraic,kliuchnikov2013fast,bullock2004asymptotically,jiang2020optimal,kang2023cnot}, for which some techniques, such as Recursive CS Decomposition~\cite{mottonen2004quantum} and Quantum Shannon Decomposition~\cite{krol2022efficient}, are well known. The existence of such a construction is validated by the Solovay-Kitaev algorithm~\cite{dawson2006solovay}, the first example of approximate unitary synthesis using only CNOTs and one-qubit (rotation) gates, which represent a universal model for quantum computation~\cite{barenco1995elementary}. Optimizations of this algorithm followed~\cite{pham2013optimization,zhiyenbayev2018quantum}, and then a whole new class of efficient approximate synthesis algorithms based on Number Theory with $O(\log(1/\epsilon))$ gates opened a new line of research~\cite{kliuchnikov2013asymptotically,Selinger2012EfficientCA,Ross2015OptimalAC,kliuchnikov2015practical,ross2016optimal}.

A few years ago, an approach to gate synthesis based on optimization methods garnered significant interest in approximating a matrix according to a metric criterion and a set of constraints~\cite{madden2022best,madden2022sketching}, also analyzing the resilience to various noise sources~\cite{sharma2020noise,bilek2022recursive}. Along this line of research, and similarly to unitary approximation methods based on Lie geometries~\cite{younis2020qfast,younis2021qfast}, an approach to approximate synthesis based on Lie algebras and classical optimization techniques on unitary parameterizations was recently proposed~\cite{sarkar2023scalable}. The latter introduces a Hermitian unitary basis for the $\mathbb{C}^{2^n\times2^n}$ algebra, called the \emph{Standard Recursive Block Basis} (SRBB), to obtain a scalable parameterized representation for any unitary operator via a recursive procedure (which is advantageous compared to the Pauli string basis). This type of algebraic representation is the core of an approximate gate synthesis algorithm for which a quantum neural network \emph{framework} is applicable and, even better, is able to leverage the topological features of unitary groups in performing the approximation task. Nevertheless, the approximation procedure presented in~\cite{sarkar2023scalable} is not addressed in the traditional hybrid (quantum-classical) manner, since it does not work with the VQC that represents the QNN and uses only classical optimization techniques on the parameters of the SRBB-based unitary decomposition (a hard problem for classical computers). Furthermore, the variational quantum circuit theoretically proposed in~\cite{sarkar2023scalable} requires several approximation layers for dense matrices. Since the depth grows exponentially with the number of qubits $n$, its implementation on both a simulator and real hardware would be quite inefficient from a variational perspective. Based on the algebraic properties of this new basis, the authors of~\cite{sarkar2023scalable} proposed a specific ordering of the algebraic elements in the definition of the approximating operator so as to minimize the number of CNOTs. However, the simplifications that lead to this minimization have not been incorporated into the scalable scheme of the whole quantum circuit, making the VQC effectively unusable from a practical point of view.

This work provides the following contributions. 
\begin{enumerate}

\item A \emph{revised version} of the recursive algorithm that builds the SRBB~\cite{sarkar2024quantum} for any $n$ is provided. Compared to the original one, some indices responsible for the correct ordering of the basis elements are better specified.

\item It is proved that the case $n=2$ is a special instance of the original scalability scheme proposed in the literature~\cite{sarkar2024quantum} due to its peculiar algebraic properties.

\item The application of the synthesis algorithm to the specific cases $n=3$ and $n=4$, which have never been explored in the literature before, is presented following the synopsis of a \emph{new formulation} capable of handling the increasing complexity of the algebraic basis with practical shortcuts.

\item A \emph{new algorithm} capable of reducing the total number of CNOT gates within the original scalable structure of the approximating VQC has been found.

\item The \emph{implementation} of the corresponding CNOT-reduced QNN for SRBB-based gate synthesis, which requires only one layer of approximation, is provided through the PennyLane library. The code is available in the GitHub repository \cite{srbb-syn}. Its performance is assessed for arbitrary unitary operators, both sparse and dense, up to 6 qubits in simulation, achieving novel and improved results over the state of the art with just a single layer of approximation.

\item The SRBB-based CNOT-reduced synthesis algorithm is also tested on \emph{real IBM hardware} to assess the network usability with 2 qubits.
\end{enumerate}

The remainder of the paper is organized as follows. In Section~\ref{sec:srbb}, the Standard Recursive Block Basis is introduced, providing the fundamental algebraic structure, its properties with respect to Lie groups, the recursive algorithm for creating the correct basis given $n$, and the approximate unitary synthesis formula. In particular, section~\ref{sec:recursiveSRBB} rigorously defines the SRBB and outlines in detail the steps of its implementation, which are very useful for handling and manipulating SRBB-type algebras as their elements grow exponentially. In Section~\ref{sec:2qubits}, the 2-qubit variational quantum circuit for approximate unitary synthesis is analyzed, with emphasis on why this particular case falls outside the implementable scheme proposed by the literature. In Section~\ref{sec:scaling}, the novel scaling scheme, improved with simplifications of CNOT gates, is presented and proved, accompanied by all the diagrammatic schemes necessary to understand and implement it. In Section~\ref{sec:impl}, the implementation of the single layer CNOT-reduced QNN is explained in detail, as well as the testing methodology; at the same time, test results are shown. Finally, Section~\ref{sec:conclusions} concludes the paper with a discussion of future work. In Appendix~\ref{sec:cnot_sequences}, it is shown how the transposition matrices, elements of the permutation group useful for transforming subsets of SRBB into $ZYZ$-type operators, can be implemented via CNOT gate sequences. In appendices~\ref{sec:3qubits} and~\ref{sec:4qubits}, respectively, 3- and 4-qubit systems are discussed in depth, underlining how the new formulation (with practical calculation shortcuts) helps in handling the increasing complexity of both the matrix algebra and the QNN. While Appendix~\ref{app:circuits} lists all the predefined unitaries used in the simulations from 2 to 6 qubits, Appendix~\ref{a_exampleMatrices} shows the approximation of some operators, comparing it with the ideal matrix. 

This is the extended version, enriched with proofs and results on real devices, of the conference paper titled ``\emph{A Scalable Quantum Neural Network for
Approximate Unitary Synthesis}'', accepted at the IEEE International Conference on Quantum Computing and Engineering (QCE24)~\cite{10821064}.

\section{Unitary approximation via SRBB: the algebraic structure}\label{sec:srbb}
The \emph{Standard Recursive Block Basis} (SRBB) is the Hermitian unitary basis for the $\mathbb{C}^{2^n\times2^n}$ matrix algebra, first introduced in~\cite{sarkar2023scalable} to obtain a scalable parameterized representation of unitary matrices. Accordingly, any unitary operator can be expressed as a product of exponentials of SRBB elements, thanks to the properties of the connected topological space that characterizes the Lie group $U(2^n)$. This algebraic basis is defined through a recursive method, starting from a very similar Hermitian unitary basis called \emph{Recursive Block Basis} (RBB), which can be thought of as the generalization of the Pauli basis for complex matrices of higher orders\footnote{The SRBB differs from the RBB only in a new definition of the diagonal elements of the basis, as explained in Section~\ref{sec:2qubits}.}. Below, the fundamental steps of the recursive construction are retraced to clarify some details of the mathematical context and to introduce the algebraic objects that will play a key role in the approximate unitary synthesis algorithm and in its implementation with a reduced number of CNOTs.

\subsection*{Remark 1}\label{sec:problem_phase}
The approximation algorithm~\cite{sarkar2024quantum} works only with special unitary operators without losing generality, thanks to the surjective relation that exists between the special unitary group $SU(2^n)$ and the unitary group $U(2^n)$. Indeed, $U(d)\simeq U(1)\times SU(d)$, which means that \emph{locally} (at the level of Lie algebras) any unitary matrix is a phase (or unitary) scaling of a special unitary matrix~\cite{nakahara2018geometry}. This result can also be illustrated by the well-known identity $\det(cA)=c^d\det(A)$ with $c\in\mathbb{C},\,A\in GL(d,\mathbb{C})$. In fact, for any unitary matrix $U=e^{i\alpha}V$, where $\det(V)=1$, it holds that $\det(U)=e^{i\alpha\cdot d}$, leading to the determination of the rescaling phase to the resolution of equations of degree $d$ on the complex plane. Actually, $U(d)$ has a richer \emph{global} topological structure defined by a non-trivial principal bundle $SU(d)$ on the basis $U(1)$~\cite{nakahara2018geometry}. However, for many applications of quantum computing, this geometric structure does not emerge explicitly, since unitary transformations modulo global phase are considered, hence belonging to the projective unitary group $PU(d)\cong SU(d)/\mathbb{Z}_d$.

In any case, for any $U\in U(d)$, it is worth noting that
$\frac{U}{\left[\det(U)\right]^{\frac{1}{d}}}\in SU(d)$. This fact allows us, first of all, to work only with special unitary matrices without losing generality, and, second, to have a precise recipe for correcting the arbitrariness in the choice of the phase that the QNN encounters in its learning process when considering loss functions based on density matrices\footnote{This topic will be revisited again in Section~\ref{sec:impl}, dedicated to implementation.}.

\subsection*{Remark 2}\label{remark2}
It is well-known that the set of all special unitary matrices of order $d$, denoted by $SU(d)$, forms a Lie group of dimension $d^2-1$
, and the corresponding Lie algebra is the real vector space of all anti-Hermitian traceless matrices of order
$d$, which is denoted by $su(d)$. However, in physical contexts, it is customary to choose zero-trace \emph{Hermitian} matrices as generators, adding the imaginary unit to the surjective exponential map that constitutes the parametric representation of $SU(d)$. Therefore, if SRBB (or RBB) denotes a basis of the $su(d)$ Lie algebra with elements $U_j^{(d)}$, the map $\psi$ from $\mathbb{R}^{d^2-1}$ into an open subset of $SU(d)$ containing the identity element is an analytic diffeomorphism that generates the entire special Lie group (Corollary~2.9,~\cite{kirillov2008introduction}); in symbols,
$\psi:(\theta_1,\dots,\theta_{d^2-1})\in\mathbb{R}^{d^2-1}\longrightarrow\prod_{j=1}^{d^2-1}\exp\{i\,\theta_jU_j^{(d)}\}\in SU(d)$.

\subsection{Recursive construction of the matrix algebras}\label{sec:recursiveSRBB}
To implement the recursive algorithm that builds the SRBB with elements $U_j^{(d)}$,  Corollary 2.3 of~\cite{sarkar2024quantum} is followed. It is crucial to construct the elements of the algebraic basis correctly and in the \emph{correct order}\footnote{In this regard, the index $k$ of methods B and C is better defined to fix the ambiguity of the original formulation.}, following the six different building methods that depend on the position $j$ of the matrix $B_j^{(d)}$ in the basis\footnote{The elements of the basis are indicated with $B_j^{(d)}$ until the redefinition of the diagonal elements, which characterizes the transition from RBB to SRBB.}.

Starting from the Pauli basis, which describes the matrix algebra of order $d=2^n$ with $n=1$ (quantum register with a single qubit), the matrix algebras of order $d\geqslant3$ can be derived recursively through the following methods:
\begin{description}
    \item[A)]for $j\in\{1,2,...,(d-1)^2-1\}$,
    $$
    B_{j,A}^{(d)}=
    \begin{pmatrix}
    B_j^{(d-1)}&0\\
    0&(-1)^{d-1}
    \end{pmatrix};
    $$
    \item[B)]for $j=(d-1)^2+k\!\!\mod(d-1)$, with $k\in\{d-1,1,2,...,d-2\}$,
    $$
    B_{j,B}^{(d)}=P_{(k,d-1)}\begin{pmatrix}
        D&0\\
        0&\sigma_1
    \end{pmatrix}P_{(k,d-1)}
    $$
    where $P_{(k,d-1)}$ is a 2-cycle of order $d$ and $D=\mbox{diag}\{(-1)^{l-1}:\,1\leqslant l\leqslant d-2\}$. The atypical ordering (only partially increasing) of the index $k$ is fundamental for the correct ordering of the elements in B and C methods. The latter, in fact, depends on the permutation properties of the matrices $P_{(k,d-1)}$, and the first element must be the one for which $P_{k,d-1}=\mathbb{I}_d$, i.e. $k=d-1$;
    \item[C)]for $j=(d-1)^2+(d-1)+k\!\!\mod(d-1)$, with $k\in\{d-1,1,2,...,d-2\}$,
    $$
    B_{j,C}^{(d)}=P_{(k,d-1)}\begin{pmatrix}
        D&0\\
        0&\sigma_2
    \end{pmatrix}P_{(k,d-1)};
    $$
    \item[D1)]for $j=d^2-1$ and $d$ is odd,
    $$
    B_{j,D}^{(d)}=\begin{pmatrix}
        \mathbb{I}_{\lfloor\frac{d}{2}\rfloor+1}&0\\
        0&-\mathbb{I}_{\lfloor\frac{d}{2}\rfloor}
    \end{pmatrix};
    $$
    \item[D2]for $j=d^2-1$ and $d$ is even,
    $$
    B_{j,D}^{(d)}=\begin{pmatrix}
        \Sigma&0\\
        0&\sigma_3
    \end{pmatrix}\quad\mbox{with}\quad\Sigma=\begin{pmatrix}
        \mathbb{I}_{\lfloor\frac{d}{2}\rfloor-1}&0\\
        0&-\mathbb{I}_{\lfloor\frac{d}{2}\rfloor-1}
    \end{pmatrix};
    $$
    \item[E)]for $j=d^2$, $B_{j,E}^{(d)}=\mathbb{I}_d$.
\end{description}
In Section~\ref{sec:2qubits}, an example of this construction is shown for $d=3$ and $d=4$. Furthermore, our Python implementation\footnote{A complete Python library for algebra construction and for SRBB-based gate synthesis, will be made available shortly.} is enriched with useful functions to check the mathematical properties that ensure a correct recursive construction:
\begin{enumerate}
    \item[a)]$d^2$ is the basis cardinality, where $d=2^n$ is the matrix order and $n$ is the number of qubits;
    \item[b)]for $1\leqslant j\leqslant d^2-1$, $\operatorname{tr}[B_j^{(d)}]=\left\{
        \begin{array}{l}
        1\mbox{ if $d$ odd}\\
        0\mbox{ if $d$ even}
        \end{array}\right.$;
    \item[c)]$\left[B_j^{(d)}\right]^2=\mathbb{I}_d$;
    \item[d)]$\{B_j^{(d)}:\,1\leqslant j\leqslant d^2-1\}$ forms a basis for $su(d)$ when $d$ is even;
    \item[e)]$\{B_j^{(d)}:\,j=m^2-1,\,2\leqslant m\leqslant d\}\cup\{B_{d^2}^{(d)}\}$ is the set of diagonal basis elements;
    \item[f)]$B_{d^2}^{(d)}=\mathbb{I}_d$.
\end{enumerate}
The SRBB elements are the building blocks of an approximate unitary synthesis algorithm in which their ordering, arising from their algebraic properties, and their grouping play a fundamental role. In~\cite{sarkar2024quantum}, an ordering capable of reducing the number of CNOTs is theoretically proposed; nevertheless, the approximate synthesis process occurs in a purely classical manner on the parameterized SRBB-based decomposition of the unitary operator, utilizing optimization methods like Nelder-Mead and passing the trained parameters to the circuit only at the end. In this way, multiple layers for dense matrices are required, and no implementable scalability scheme incorporating CNOT simplifications in quantum circuit design has been provided. Conversely, in this work, the algorithm is framed within a quantum context through the scalable design of the corresponding VQC, such that the optimization process is performed by the PennyLane optimizer. Furthermore, a new scalability scheme that incorporates the simplifications of the CNOT gates is identified, leading to new gate-count formulas.

Given $U\in SU(2^n)$, the unitary synthesis algorithm can approximate $U$ according to~\cite{sarkar2024quantum}:
\begin{equation}\label{eqn:U_approx}
    U_{approx}\equiv\prod_{l=1}^L\,Z(\Theta_Z^l)\Psi(\Theta_\Psi^l)\Phi(\Theta_\Phi^l)
\end{equation}
where $l$ is the layer index\footnote{The layer index $l$ represents the number of times formula~(\ref{eqn:U_approx}) is repeated in the construction of the VQC; our implementation requires only one layer ($l=1$).}. The three main factors are:
\begin{equation}\label{eqn:def_Z}
    Z(\Theta_Z^l)=\prod_{j=2}^{2^n}e^{i\left(\theta_{j^2-1}^lU_{j^2-1}^{(2^n)}\right)}
\end{equation}
\begin{equation}\label{eqn:def_Psi}
    \Psi(\Theta_\Psi^l)=\left[\prod_{j=1}^{2^{n-1}}e^{i\left(\theta_{(2j-1)^2}^lU_{(2j-1)^2}^{(2^n)}\right)}e^{i\left(\theta_{4j^2-2j}^lU_{4j^2-2j}^{(2^n)}\right)}\right]\cdot\prod_{x=1}^{2^{n-1}-1}\left(\prod T_x^e\right)M_x^e(\Theta_\Psi^l)\left(\prod T_x^e\right)
\end{equation}
\begin{equation}\label{eqn:def_Phi}
    \Phi(\Theta_\Phi^l)=\prod_{x=1}^{2^{n-1}-1}\left(\prod T_x^o\right)M_x^o(\Theta_\Phi^l)\left(\prod T_x^o\right)
\end{equation}
in which
\begin{equation}\label{eqn:def_prodT}
\prod T_x^{e/o}=\prod_{(\alpha,\beta)\in T_x^{e/o}}P_{(\alpha,\beta)}
\end{equation}
where $T_x^{e/o}$ are the sets of $2^{n-2}$ disjoint transpositions obtained from the sets of permutations
\begin{equation}\label{eqn:permutation_set}
P_{2^n}^{e/o}=\{P_{(\alpha,\beta)}\in P_{2^n}|\;1\leqslant\alpha<\beta\leqslant2^n,\;\alpha\mbox{ even},\;\beta\mbox{ even/odd}\}
\end{equation}
and
\begin{equation}\label{eqn:def_M}
    M_x^{e/o}(\Theta_{\Psi/\Phi})=\prod T_x^{e/o}\left[\prod_{(\alpha,\beta)\in T_x^{e/o}}e^{i\theta_1U_1}e^{i\theta_2U_2}e^{i\theta_3U_3}e^{i\theta_4U_4}\right]\prod T_x^{e/o}   
\end{equation}
where
$(1,2,3,4)^{o/e}=[h_\beta(\alpha-1),f_\beta(\alpha-1),h_{\beta\pm1}(\alpha),f_{\beta\pm1}(\alpha)]$,
with numerical functions defined by
\begin{equation}
\begin{cases}
f_p(q)&=(p-1)^2+(p-1)+[q\!\!\mod(p-1)]\\
h_p(q)&=(p-1)^2+[q\!\!\mod(p-1)]
\end{cases}
\end{equation}
\newtheorem{definition}{Definition}
\begin{definition}\label{def_zyz}
    A standard Euler-angle decomposition for $SU(2)$ is the $ZYZ$-decomposition, according to which any $U\in SU(2)$ admits the factorization $U=R_z(\alpha)R_y(\beta)R_z(\gamma)$ for suitable $\alpha,\beta,\gamma\in\mathbb{R}$.
\end{definition}
The Pauli matrices $\sigma_z$ and $\sigma_y$ (which, together with $\sigma_x$, generate the Lie algebra $\mathfrak{su}(2)$) suffice to span all subgroups needed for an Euler decomposition. Therefore, the exponential map $(\alpha,\beta,\gamma)\rightarrow R_z(\alpha)R_y(\beta)R_z(\gamma)$, in analogy to what is introduced in Remark~\ref{remark2}, covers $SU(2)$ and establishes the surjectivity of the parametrization. For completeness, the explicit product reproduces the general $SU(2)$ matrix $\begin{pmatrix}
    a&b\\
    -b^*&a^*
\end{pmatrix}$ with $|a|^2+|b|^2=1$. The following properties are proved~\cite{sarkar2024quantum}:
\begin{enumerate}
    \item[i)]$Z(\Theta_Z)$ is the product of exponentials of all diagonal SRBB elements except the last one (which is always the identity matrix);
    \item[ii)]in the first sub-factor of $\Psi(\Theta_\Psi)$ enclosed by square brackets, only non-diagonal SRBB elements belonging to $SU(2^n)$ are included. They admit by construction a $ZYZ$-decomposition;
    \item[iii)]in the second sub-factor of $\Psi(\Theta_\Psi)$, the factors called $M_x^e\in SU(2^n)$ also have a $ZYZ$-decomposition;
    \item[iv)]inside $\Phi(\Theta_\Phi)$, the factors called $M_x^o\in SU(2^n)$ are block-diagonal matrices with $2\times2$ unitary blocks, whose implementation derives only in part from a $ZYZ$-decomposition;
    \item[v)]factors called $\prod T_x^{e/o}$ can be implemented through CNOT sequences, which are permutation matrices.
\end{enumerate}
There is only one exception to this pattern, in particular to property iv (see Proposition~\ref{prop:n2}): for 2-qubit systems, $M_1^o$ admits a complete $ZYZ$-decomposition as well, being a $SU(2)$-block diagonal unitary matrix.

In Section~\ref{sec:2qubits}, a detailed description of these three main factors will be provided for $n=2$, from their mathematical definition to their circuit implementation. With reference to Equation~(\ref{eqn:U_approx}) with $l=1$, the quantum circuit that approximates a general $n$-qubit special unitary operator is shown in Figure~\ref{fig:qcircuit_approx}, in which the three main factors are illustrated only macroscopically in the correct logical order.
\begin{figure}[htbp]
    \centering
    \resizebox{0.6\textwidth}{!}{
        \begin{quantikz}
        \lstick{0}&&\gate[5]{\Phi(\Theta_\Phi)}&&\gate[5]{\Psi(\Theta_\Psi)}&&\gate[5]{Z(\Theta_Z)}&&\\
        \lstick{1}&&&&&&&&\\
        \lstick{2}&&&&&&&&\\
        \lstick{\vdots}&&&&&&&&\\
        \lstick{n}&&&&&&&&
        \end{quantikz}}    
    \caption{The single-layer VQC to approximate $SU(2^n)$ operators.}
    \label{fig:qcircuit_approx}
\end{figure}

\section{Quantum circuit to approximate 2-qubit systems}\label{sec:2qubits}
According to the recursive construction of the SRBB, summarized in Section~\ref{sec:recursiveSRBB}, the design of the variational quantum circuit capable of approximating any 2-qubit operator requires the matrix algebra of order $d=4$. Below, 
starting from the Pauli basis $\mathcal{B}^{(2)}$,
\begin{equation}\label{eqn:Pauli_basis}
    \sigma_1=
    \begin{pmatrix}
        0&1\\
        1&0
    \end{pmatrix}\quad
    \sigma_2=
    \begin{pmatrix}
        0&-i\\
        i&0
    \end{pmatrix}\quad
    \sigma_3=
    \begin{pmatrix}
        1&0\\
        0&-1
    \end{pmatrix}\quad
    \sigma_4=
    \begin{pmatrix}
        1&0\\
        0&1
    \end{pmatrix}
\end{equation}
the RBB $\mathcal{B}^{(4)}$ with cardinality $C=16$ is briefly reconstructed and reported in Figure~\ref{fig:rbb_d4}.
\begin{figure}[htbp]
\centering
\resizebox{1\textwidth}{!}{
\begin{minipage}{1.5\textwidth}
\begin{equation*}
\begin{aligned}
    B^{(4)}_{1,A}&=
    \begin{pmatrix}
        0&1&0&0\\
        1&0&0&0\\
        0&0&1&0\\
        0&0&0&-1
    \end{pmatrix}&
    B^{(4)}_{2,A}&=
    \begin{pmatrix}
        0&-i&0&0\\
        i&0&0&0\\
        0&0&1&0\\
        0&0&0&-1
    \end{pmatrix}&
    B^{(4)}_{3,A}&=
    \begin{pmatrix}
        1&0&0&0\\
        0&-1&0&0\\
        0&0&1&0\\
        0&0&0&-1
    \end{pmatrix}&
    B^{(4)}_{4,A}&=
    \begin{pmatrix}
        1&0&0&0\\
        0&0&1&0\\
        0&1&0&0\\
        0&0&0&-1
    \end{pmatrix}\\
    B^{(4)}_{5,A}&=
    \begin{pmatrix}
        0&0&1&0\\
        0&1&0&0\\
        1&0&0&0\\
        0&0&0&-1
    \end{pmatrix}&
    B^{(4)}_{6,A}&=
    \begin{pmatrix}
        1&0&0&0\\
        0&0&-i&0\\
        0&i&0&0\\
        0&0&0&-1
    \end{pmatrix}&
    B^{(4)}_{7,A}&=
    \begin{pmatrix}
        0&0&-i&0\\
        0&1&0&0\\
        i&0&0&0\\
        0&0&0&-1
    \end{pmatrix}&
    B^{(4)}_{8,A}&=
    \begin{pmatrix}
        1&0&0&0\\
        0&1&0&0\\
        0&0&-1&0\\
        0&0&0&-1
    \end{pmatrix}\\
    B^{(4)}_{9,B}&=
    \begin{pmatrix}
        1&0&0&0\\
        0&-1&0&0\\
        0&0&0&1\\
        0&0&1&0
    \end{pmatrix}&
    B^{(4)}_{10,B}&=
    \begin{pmatrix}
        0&0&0&1\\
        0&-1&0&0\\
        0&0&1&0\\
        1&0&0&0
    \end{pmatrix}&
    B^{(4)}_{11,B}&=
    \begin{pmatrix}
        1&0&0&0\\
        0&0&0&1\\
        0&0&-1&0\\
        0&1&0&0
    \end{pmatrix}&
    B^{(4)}_{12,C}&=
    \begin{pmatrix}
        1&0&0&0\\
        0&-1&0&0\\
        0&0&0&-i\\
        0&0&i&0
    \end{pmatrix}\\
    B^{(4)}_{13,C}&=
    \begin{pmatrix}
        0&0&0&-i\\
        0&-1&0&0\\
        0&0&1&0\\
        i&0&0&0
    \end{pmatrix}&
    B^{(4)}_{14,C}&=
    \begin{pmatrix}
        1&0&0&0\\
        0&0&0&-i\\
        0&0&-1&0\\
        0&i&0&0
    \end{pmatrix}&
    B^{(4)}_{15,D}&=
    \begin{pmatrix}
        1&0&0&0\\
        0&-1&0&0\\
        0&0&1&0\\
        0&0&0&-1
    \end{pmatrix}&
    B^{(4)}_{16,E}&=
    \begin{pmatrix}
        1&0&0&0\\
        0&1&0&0\\
        0&0&1&0\\
        0&0&0&1
    \end{pmatrix}
\end{aligned}
\end{equation*}
\end{minipage}}
\caption{Complete set of elements $B_j^{(4)}$ for the Recursive Block Basis $\mathcal{B}^{(4)}$.}
\label{fig:rbb_d4}
\end{figure}
As pointed out in Section~\ref{sec:recursiveSRBB}, the set $\{B^{(4)}_j:1\leqslant j\leqslant15\}$ is a basis for the $su(4)$ matrix algebra whose elements satisfy the properties of hermiticity, unitarity, and zero trace:
\begin{equation}
\mbox{for }1\leqslant j\leqslant15,\quad[B^{(4)}_j]^{\dagger}=B^{(4)}_j,\quad[B^{(4)}_j]^2=[B^{(4)}_j][B^{(4)}_j]^{\dagger}=\mathbb{I}_4,\quad\operatorname{tr}[B^{(4)}_j]=0
\end{equation}
The next step concerns the transition to the SRBB, and it is crucial to identify the diagonal basis elements and their position: $j\in\mathcal{J}=\{m^2-1,\;2\leqslant m\leqslant4\}\cup\{d^2\}=\{3,8,15,16\}$. The purpose is to replace them with Hermitian unitary diagonal matrices generated starting from Pauli strings, specifically Pauli strings with only $\mathbb{I}_2$ and $\sigma_3$~\cite{sarkar2024quantum}. In Table~\ref{tab:SRBB_diagonals_n2}, the replacement criterion is illustrated.
\begin{table}[htbp]
\centering
\resizebox{0.8\textwidth}{!}{
\begin{tabular}{||c|c|c|c|c|c||}
\hline
Decimal&Binary&String&Matrix&Element&Replaced\\
\hline\hline
0&00&$\mathbb{I}_2\otimes\mathbb{I}_2$&$\mathbb{I}_4$&16&no\\
1&01&$\mathbb{I}_2\otimes\sigma_3$&$\mbox{diag}(1,-1,1,-1)$&3&no\\
2&10&$\sigma_3\otimes\mathbb{I}_2$&$\mbox{diag}(1,1,-1,-1)$&8&no\\
3&11&$\sigma_3\otimes\sigma_3$&$\mbox{diag}(1,-1,-1,1)$&15&yes\\
\hline
\end{tabular}}
\caption{Replacement criterion for the new sequence of diagonal basis elements.}
\label{tab:SRBB_diagonals_n2}
\end{table}
Each diagonal element is associated with the binary representation of the numbers from 0 to $d-1$, according to the following recipe: every 0-valued bit of the binary string is replaced with $\mathbb{I}_2$, every 1-valued bit is instead replaced with $\sigma_3$, and the corresponding diagonal element is obtained through the tensor product of the two bits. Then, the old diagonal elements are compared with the new ones and replaced only if found different. The complete SRBB of order 4, identified as the set $\mathcal{U}^{(4)}=\{U^{(4)}_j:1\leqslant j\leqslant16\}$, is defined by
\begin{equation}
U^{(4)}_j=\left\{
\begin{array}{l}
\mbox{new diagonal elements if }j\in\mathcal{J}=\{3,8,15,16\}\\
B^{(4)}_j\mbox{ otherwise}
\end{array}\right.
\end{equation}
the elements of which are reported for completeness in Figure~\ref{fig:srbb_n2}.
\begin{figure}[htbp]
\centering
\resizebox{1\textwidth}{!}{
\begin{minipage}{1.5\textwidth}
\begin{equation*}
\begin{aligned}
    U^{(4)}_1&=
    \begin{pmatrix}
        0&1&0&0\\
        1&0&0&0\\
        0&0&1&0\\
        0&0&0&-1
    \end{pmatrix}&
    U^{(4)}_2&=
    \begin{pmatrix}
        0&-i&0&0\\
        i&0&0&0\\
        0&0&1&0\\
        0&0&0&-1
    \end{pmatrix}&
    U^{(4)}_3&=
    \begin{pmatrix}
        1&0&0&0\\
        0&-1&0&0\\
        0&0&1&0\\
        0&0&0&-1
    \end{pmatrix}&
    U^{(4)}_4&=
    \begin{pmatrix}
        1&0&0&0\\
        0&0&1&0\\
        0&1&0&0\\
        0&0&0&-1
    \end{pmatrix}\\
    U^{(4)}_5&=
    \begin{pmatrix}
        0&0&1&0\\
        0&1&0&0\\
        1&0&0&0\\
        0&0&0&-1
    \end{pmatrix}&
    U^{(4)}_6&=
    \begin{pmatrix}
        1&0&0&0\\
        0&0&-i&0\\
        0&i&0&0\\
        0&0&0&-1
    \end{pmatrix}&
    U^{(4)}_7&=
    \begin{pmatrix}
        0&0&-i&0\\
        0&1&0&0\\
        i&0&0&0\\
        0&0&0&-1
    \end{pmatrix}&
    U^{(4)}_8&=
    \begin{pmatrix}
        1&0&0&0\\
        0&1&0&0\\
        0&0&-1&0\\
        0&0&0&-1
    \end{pmatrix}\\
    U^{(4)}_9&=
    \begin{pmatrix}
        1&0&0&0\\
        0&-1&0&0\\
        0&0&0&1\\
        0&0&1&0
    \end{pmatrix}&
    U^{(4)}_{10}&=
    \begin{pmatrix}
        0&0&0&1\\
        0&-1&0&0\\
        0&0&1&0\\
        1&0&0&0
    \end{pmatrix}&
    U^{(4)}_{11}&=
    \begin{pmatrix}
        1&0&0&0\\
        0&0&0&1\\
        0&0&-1&0\\
        0&1&0&0
    \end{pmatrix}&
    U^{(4)}_{12}&=
    \begin{pmatrix}
        1&0&0&0\\
        0&-1&0&0\\
        0&0&0&-i\\
        0&0&i&0
    \end{pmatrix}\\
    U^{(4)}_{13}&=
    \begin{pmatrix}
        0&0&0&-i\\
        0&-1&0&0\\
        0&0&1&0\\
        i&0&0&0
    \end{pmatrix}&
    U^{(4)}_{14}&=
    \begin{pmatrix}
        1&0&0&0\\
        0&0&0&-i\\
        0&0&-1&0\\
        0&i&0&0
    \end{pmatrix}&
    U^{(4)}_{15}&=
    \begin{pmatrix}
        1&0&0&0\\
        0&-1&0&0\\
        0&0&-1&0\\
        0&0&0&1
    \end{pmatrix}&
    U^{(4)}_{16}&=
    \begin{pmatrix}
        1&0&0&0\\
        0&1&0&0\\
        0&0&1&0\\
        0&0&0&1
    \end{pmatrix}
\end{aligned}
\end{equation*}
\end{minipage}}
\caption{Complete set of elements $U_j^{(4)}$ for the Standard Recursive Block Basis $\mathcal{U}^{(4)}$.}
\label{fig:srbb_n2}
\end{figure}

\subsection{From the permutation group to ZYZ-decompositions}
Before proceeding further, some algebraic properties need to be analyzed in order to identify $n=2$ as a special case of the general scalability scheme proposed by the literature \cite{sarkar2024quantum}. Furthermore, the same analysis reveals some practical shortcuts for grouping the SRBB matrix elements.

Considering the permutation group $P_4$, the properties of the algebraic basis under permutation are encoded respectively in its two subsets\footnote{In general, it is possible to partition further, identifying the $2^{n-1}-1$ sets $T_x^{e/o}$ of $2^{n-2}$ disjoint transpositions. For $n=2$, there are no further partitions as there is only one transposition.} Definition~(\ref{eqn:def_prodT}) clarifies the general case. of even/odd transpositions (or 2-cycles), defined by $P_4^{e/o}=\{P_{(\alpha,\beta)}\in P_4|\;\alpha\mbox{ even},\beta\mbox{ even/odd}\}$, where $P_4^e=\{P_{(2,4)}\}=T_1^e=\prod T_1^e$ and $P_4^o=\{P_{(2,3)}\}=T_1^o=\prod T_1^o$. These permutation properties support Proposition~\ref{prop:n2}, making the 2-qubit system an exception to the implementable pattern. To understand how these permutations are reflected in the quantum circuit, refer to Appendix~\ref{sec:cnot_sequences}.
\newtheorem{prop}{Proposition}
\begin{prop}\label{prop:n2}
    For $n=2$, $M_x^o\in SU(2^n)$ are block-diagonal matrices with $SU(2)$ blocks and admit a $ZYZ$-type decomposition.
\end{prop}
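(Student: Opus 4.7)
The plan is to reduce $M_1^o$ for $n=2$ to an explicit $4\times 4$ computation. Because $n=2$ forces $T_1^o = \{(2,3)\}$, one has $\prod T_1^o = P_{(2,3)}$, so~(\ref{eqn:def_M}) specializes to
$$
M_1^o = P_{(2,3)}\, e^{i\theta_1 U_{j_1}^{(4)}} e^{i\theta_2 U_{j_2}^{(4)}} e^{i\theta_3 U_{j_3}^{(4)}} e^{i\theta_4 U_{j_4}^{(4)}}\, P_{(2,3)},
$$
where the four SRBB indices $j_1,\dots,j_4$ are obtained by evaluating $(h_\beta(\alpha-1),f_\beta(\alpha-1),h_{\beta\pm 1}(\alpha),f_{\beta\pm 1}(\alpha))$ at $(\alpha,\beta)=(2,3)$. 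The first step is to pin these indices down — fixing the $\pm 1$ ambiguity to the value that keeps the result inside the valid range $\{1,\dots,15\}$ — and locate the corresponding matrices in Figure~\ref{fig:srbb_n2}.

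Next, since $P_{(2,3)}^2 = \mathbb{I}_4$, the outer permutations can be absorbed into each exponential: $M_1^o = \prod_{k=1}^{4} e^{i\theta_k\,\tilde U_{j_k}}$ with $\tilde U_j := P_{(2,3)}\,U_j^{(4)}\,P_{(2,3)}$. The key step is then a direct inspection: one checks that each $\tilde U_{j_k}$ is block-diagonal with respect to the splitting $\mathbb{C}^4 = \mathrm{span}\{|00\rangle,|01\rangle\}\oplus\mathrm{span}\{|10\rangle,|11\rangle\}$ and, more importantly, that \emph{each} $2\times 2$ block is traceless Hermitian — so each generator lies in $su(2)\oplus su(2)$. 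Concretely, the four conjugated generators are expected to take the form $\sigma_i\oplus\mathrm{diag}(1,-1)$ or $\mathrm{diag}(1,-1)\oplus\sigma_i$. It follows that every $e^{i\theta_k\tilde U_{j_k}}$ belongs to the closed subgroup $SU(2)\oplus SU(2)\subset SU(4)$, and therefore so does the product $M_1^o = A\oplus B$ with $A,B\in SU(2)$. A $ZYZ$-type decomposition of $M_1^o$ then reduces to the standard Euler-angle parameterization of each $SU(2)$ block.

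The main obstacle is the bookkeeping: correctly identifying the four indices from the $h_\beta,f_\beta$ formula with the right sign choice and then verifying the block-traceless structure of each $\tilde U_{j_k}$ — a mechanical $4\times 4$ check against Figure~\ref{fig:srbb_n2}. It is the tracelessness inside \emph{each} block, rather than only overall, that upgrades the generic $U(2)$ blocks of property~iv to $SU(2)$ blocks. The argument also clarifies the peculiarity of $n=2$: as soon as $n\geq 3$, the set $T_x^o$ contains more than one disjoint transposition, so the conjugated generators no longer split cleanly into traceless $2\times 2$ pieces and one is left with the generic $U(2)$-block structure of property~iv.
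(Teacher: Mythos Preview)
Your plan is correct and follows essentially the same route as the paper: identify the four indices $\{5,7,11,14\}$ entering $M_1^o$, distribute the conjugation by $P_{(2,3)}$ across the four exponential factors, and verify by direct $4\times4$ inspection that each conjugated factor lands in the block-diagonal $SU(2)\oplus SU(2)$ form. The only stylistic difference is that you work at the Lie-algebra level (checking that each $\tilde U_{j_k}=P_{(2,3)}U_{j_k}^{(4)}P_{(2,3)}$ lies in $su(2)\oplus su(2)$, which indeed yields $\sigma_1\oplus\sigma_3$, $\sigma_2\oplus\sigma_3$, $\sigma_3\oplus\sigma_1$, $\sigma_3\oplus\sigma_2$), whereas the paper exponentiates first and then inspects the resulting $4\times4$ unitaries; your version is marginally cleaner but the content is the same.
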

\begin{proof}\label{proof_prop1}
Considering a general 2-cycle $P_{(\alpha,\beta)}$ in the range $1\leqslant\alpha<\beta\leqslant 2^n$, the permutation properties of SRBB elements can be split into four cases based on the parity\footnote{The function $h_\beta(\alpha)$ follows the same analysis and the corresponding permutation properties are reported for completeness in Section~\ref{sec:odd_n2}.} of indices $\alpha$ and $\beta$:
\begin{itemize}
    \item[i.] $P_{(\alpha+1,\beta)}e^{i\theta_{f_\beta(\alpha)}U_{f_\beta(\alpha)}}P_{(\alpha+1,\beta)}$ with $\alpha$ odd and $\beta$ even $\Longrightarrow P_{(2,4)}\,,\,f_4(1)=13$
    \item[ii.] $P_{(\alpha,\beta+1)}e^{i\theta_{f_\beta(\alpha)}U_{f_\beta(\alpha)}}P_{(\alpha,\beta+1)}$ with $\alpha$ even and $\beta$ odd $\Longrightarrow P_{(2,4)}\,,\,f_3(2)=6$
    \item[iii.] $P_{(\alpha+1,\beta)}e^{i\theta_{f_\beta(\alpha)}U_{f_\beta(\alpha)}}P_{(\alpha+1,\beta)}$ with $\alpha$ odd and $\beta$ odd $\Longrightarrow P_{(2,3)}\,,\,f_3(1)=7$
    \item[iv.] $P_{(\alpha,\beta-1)}e^{i\theta_{f_\beta(\alpha)}U_{f_\beta(\alpha)}}P_{(\alpha,\beta-1)}$ with $\alpha$ even and $\beta$ even $\Longrightarrow P_{(2,3)}\,,\,f_4(2)=14$
\end{itemize}
The four cases above correspond to the SRBB elements that build the $M_x^{e/o}$ factors of equation~(\ref{eqn:def_M}). Following the properties i-iv listed above, it is straightforward to show the membership of these matrices to $M_2ZYZ$ after the appropriate permutation:
{\footnotesize
\begin{equation*}
\begin{split}
P_{(2,4)}\exp\{i\theta_{13}U_{13}\}P_{(2,4)}&=
P_{(2,4)}
\begin{pmatrix}
    \cos\theta_{13}&0&0&\sin\theta_{13}\\
    0&e^{-i\theta_{13}}&0&0\\
    0&0&e^{i\theta_{13}}&0\\
    -\sin\theta_{13}&0&0&\cos\theta_{13}
\end{pmatrix}P_{(2,4)}=\\
&=
\begin{pmatrix}
    \cos\theta_{13}&\sin\theta_{13}&0&0\\
    -\sin\theta_{13}&\cos\theta_{13}&0&0\\
    0&0&e^{i\theta_{13}}&0\\
    0&0&0&e^{-i\theta_{13}}
\end{pmatrix}
\end{split}
\end{equation*}
\begin{equation*}
%\begin{split}
P_{(2,4)}\exp\{i\theta_6U_{6}\}P_{(2,4)}=
P_{(2,4)}
\begin{pmatrix}
    e^{i\theta_6}&0&0&0\\
    0&\cos\theta_6&\sin\theta_6&0\\
    0&-\sin\theta_6&\cos\theta_6&0\\
    0&0&0&e^{-i\theta_6}
\end{pmatrix}P_{(2,4)}=\\
%&=
\begin{pmatrix}
    e^{i\theta_6}&0&0&0\\
    0&e^{-i\theta_6}&0&0\\
    0&0&\cos\theta_6&-\sin\theta_6\\
    0&0&\sin\theta_6&\cos\theta_6
\end{pmatrix}
%\end{split}
\end{equation*}
\begin{equation*}
%\begin{split}
P_{(2,3)}\exp\{i\theta_7U_{7}\}P_{(2,3)}=
P_{(2,3)}
\begin{pmatrix}
    \cos\theta_7&0&\sin\theta_7&0\\
    0&e^{i\theta_7}&0&0\\
    -\sin\theta_7&0&\cos\theta_7&0\\
    0&0&0&e^{-i\theta_7}
\end{pmatrix}P_{(2,3)}=\\
%&=
\begin{pmatrix}
\cos\theta_{7}&\sin\theta_{7}&0&0\\
    -\sin\theta_{7}&\cos\theta_{7}&0&0\\
    0&0&e^{i\theta_{7}}&0\\
    0&0&0&e^{-i\theta_{7}}
\end{pmatrix}
%\end{split}
\end{equation*}
\begin{equation*}
\begin{split}
P_{(2,3)}\exp\{i\theta_{14}U_{14}\}P_{(2,3)}&=
P_{(2,3)}
\begin{pmatrix}
    e^{i\theta_{14}}&0&0&0\\
    0&\cos\theta_{14}&0&\sin\theta_{14}\\
    0&0&e^{-i\theta_{14}}&0\\
    0&-\sin\theta_{14}&0&\cos\theta_{14}
\end{pmatrix}P_{(2,3)}=\\
&=
\begin{pmatrix}
    e^{i\theta_{14}}&0&0&0\\
    0&e^{-i\theta_{14}}&0&0\\
    0&0&\cos\theta_{14}&\sin\theta_{14}\\
    0&0&-\sin\theta_{14}&\cos\theta_{14}
\end{pmatrix}
\end{split}
\end{equation*}}
where $M_2ZYZ$ indicates the 2-qubit quantum circuit for a $ZYZ$-decomposition~\cite{sarkar2024quantum}. The analysis is completely identical for the function $h_\beta(\alpha)$ and does not add any relevant contributions to the proof.
\end{proof}

With reference to equations~(\ref{eqn:U_approx}) with $l=1$ and Proposition~\ref{prop:n2}, Table~\ref{tab:shortcut_n2} outlines the shortcut to find out which elements come into play in each main factor and the corresponding decomposition property.
\begin{table}[htbp]
\centering
\resizebox{1\textwidth}{!}{
\begin{tabular}{||c|c|c|c|c||}
\hline
Factor&Index&Range&Elements&Property\\
\hline\hline
Z&$j^2-1$&$2\leqslant j\leqslant4$&3, 8, 15&diagonal\\
$\Psi,1^\circ$&$(2j-1)^2(4j^2-2j)$&$1\leqslant j\leqslant2$&1, 2, 9, 12&$ZYZ$-decomp.\\
$\Psi,2^\circ$&$h_\beta(\alpha-1),f_\beta(\alpha-1),h_{\beta-1}(\alpha),f_{\beta-1}(\alpha)$&$\alpha=2,\beta=4$&10, 13, 4, 6&$ZYZ$-decomp. after $P_{(2,4)}$\\
$\Phi$&$h_\beta(\alpha-1),f_\beta(\alpha-1),h_{\beta+1}(\alpha),f_{\beta+1}(\alpha)$&$\alpha=2,\beta=3$&5, 7, 11, 14&$ZYZ$-decomp. after $P_{(2,3)}$\\
\hline
\end{tabular}}
\caption{Shortcut for grouping SRBB elements into the three main factors.}
\label{tab:shortcut_n2}
\end{table}

\subsection{Diagonal contributions}\label{sec:diagonals_n2}
In this subsection, the $Z$-factor of equation~(\ref{eqn:U_approx}) responsible for diagonal contributions in the case $n=2$ will be analyzed. For one single layer, it becomes\footnote{The superscript $2^n$ indicating the matrix order is omitted for simplicity and can be recovered from the context.}: $Z(\Theta_Z)=\prod_{j=2}^{4}\exp\{i\,\theta_{j^2-1}U_{j^2-1}\}$. The elements of the algebraic basis $\mathcal{U}^{(4)}$ that come into play belong to the set $\mathcal{J}_Z=\{m^2-1,\;2\leqslant m\leqslant4\}=\{3,8,15\}=\mathcal{J}-\{16\}$, which are exactly all the diagonal elements of the basis except the last one. Accordingly, referring to Figure~\ref{fig:srbb_n2}, the $Z$-factor can be written in the following matrix form:
\begin{equation}
Z(\Theta_Z)=\mbox{diag}(e^{i(\theta_3+\theta_8+\theta_{15})},e^{i(-\theta_3+\theta_8-\theta_{15})},e^{i(\theta_3-\theta_8-\theta_{15})},e^{i(-\theta_3-\theta_8+\theta_{15})})
\end{equation}
In order to draw the corresponding circuit, Table~\ref{tab:SRBB_diagonals_n2} is useful for associating each diagonal element with the appropriate gate sequence. Of that table, only the rows corresponding to the elements involved in the $Z$-factor are relevant; therefore, all the rows apart from the first are excluded. Then, for each row, the elements of the Pauli string are reported on a line, respecting the order in which they appear in the string, from left to right. The lines are divided into columns, each representing a qubit of the quantum system under consideration, labeled by an integer from left to right that takes into account the numbering of the quantum register $\{0,1,2,\dots\}$. It is important to have the elements corresponding to the binary units on the right end, aligned with the last qubit. This step is fundamental for identifying the simplification scheme at each order $n$, as explained in Section~\ref{sec:impl}. Finally, parameters $m$ and $m'$ have to be defined: while $m$ is the largest integer after which all remaining elements of the string are $\mathbb{I}_2$ (i.e., it is the integer that corresponds to the rightmost $\sigma_3$), $m'$ indicates all $\sigma_3$ elements preceding $m$; Figure~\ref{fig:zeta_diagram_n2} shows this diagrammatic step.
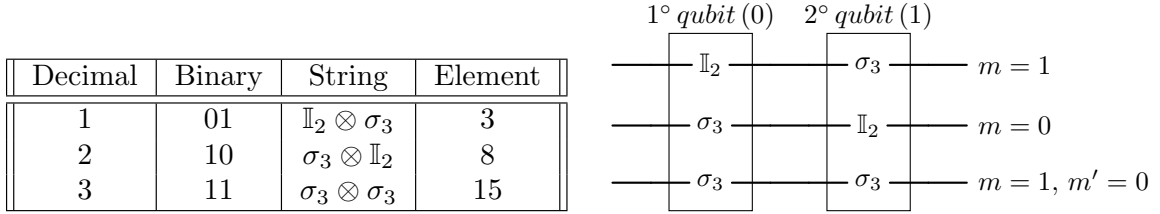
\begin{figure}[htbp]
\centering
\subfloat{
    \centering
    \resizebox{0.45\textwidth}{!}{
    \begin{tabular}{||c|c|c|c||}
    \hline
    Decimal&Binary&String&Element\\
    \hline\hline
    1&01&$\mathbb{I}_2\otimes\sigma_3$&3\\
    2&10&$\sigma_3\otimes\mathbb{I}_2$&8\\
    3&11&$\sigma_3\otimes\sigma_3$&15\\
    \hline
    \end{tabular}}}
\hfill
\subfloat{
    \centering
    \tikzset{phase label/.append style={label position=above}}
    \resizebox{0.45\textwidth}{!}{
    \begin{quantikz}
    &&\push{\;\mathbb{I}_2\;}\gategroup[3,steps=1,style={thin}]{$1^{\circ}\,qubit\,(0)$}&&&\push{\;\sigma_3\;}\gategroup[3,steps=1,style={thin}]{$2^{\circ}\,qubit\,(1)$}&&\rstick{$m=1$}\\
    &&\push{\;\sigma_3\;}&&&\push{\;\mathbb{I}_2\;}&&\rstick{$m=0$}\\
    &&\push{\;\sigma_3\;}&&&\push{\;\sigma_3\;}&&\rstick{$m=1,\,m'=0$}
    \end{quantikz}}}
\caption{Diagram to find the position of CNOTs and rotation gates.}
\label{fig:zeta_diagram_n2}
\end{figure}
Finally, the last step consists of associating each line of the previous diagram (and therefore each exponential factor considered) with one or more merged gates, following this scheme: each $m$ corresponds to the qubit on which to perform a $z$-rotation, and each $m'$ corresponds to a pair of CNOTs, before and after the rotation gate, with the target in the $m$-th qubit and the control in the $m'$-th qubit; Figure~\ref{fig:zeta_components_n2} shows the quantum circuit corresponding to each diagonal element.
\begin{figure}[htbp]
    \begin{subfigure}[b]{0.3\textwidth}
        \centering
        \resizebox{0.8\textwidth}{!}{
        \begin{quantikz}
        \lstick{0}&&&&\\
        \lstick{1}&&\gate{R_z(\theta_3)}&&
        \end{quantikz}}
        \caption{$\exp\{i\,\theta_3U_3\}$}
    \end{subfigure}
    \hfill
    \begin{subfigure}[b]{0.3\textwidth}
        \centering
        \resizebox{0.8\textwidth}{!}{
        \begin{quantikz}
        \lstick{0}&&\gate{R_z(\theta_8)}&&\\
        \lstick{1}&&&&   
        \end{quantikz}}
        \caption{$\exp\{i\,\theta_8U_8\}$}
    \end{subfigure}
    \hfill
        \begin{subfigure}[b]{0.3\textwidth}
        \centering
        \resizebox{0.9\textwidth}{!}{
        \begin{quantikz}
        \lstick{0}&\ctrl{1}&&\ctrl{1}&\\
        \lstick{1}&\targ{}&\gate{R_z(\theta_{15})}&\targ{}&
        \end{quantikz}}    
        \caption{$\exp\{i\,\theta_{15}U_{15}\}$}
    \end{subfigure}
\caption{Circuit representation of the diagonal basis $\mathcal{J}_Z$ for $n=2$.}
\label{fig:zeta_components_n2}
\end{figure}
Being diagonal matrices, the ordering of these little blocks is irrelevant, and what is shown in Figure~\ref{fig:zeta_circuit_n2} is the ordering that minimizes the number of CNOTs if inserted in the whole circuit\footnote{As mentioned previously, $n=2$ case represents an exception to the scalability scheme from an implementation perspective; therefore, the motivation behind this particular sorting lies solely in minimizing the number of CNOTs when combined with the remaining circuit, outside of any recursive scheme.}.
\begin{figure}[htbp]
    \centering
    \resizebox{0.6\textwidth}{!}{
        \begin{quantikz}
        \lstick{0}&&\ctrl{1}&&&&\ctrl{1}&&\gate{R_z(\theta_{8})}&&\\
        \lstick{1}&&\targ{}&&\gate{R_z(\theta_{15})}&&\targ{}&&\gate{R_z(\theta_{3})}&&
        \end{quantikz}}    
    \caption{The CNOT-optimized $Z$-factor circuit for $n=2$.}
    \label{fig:zeta_circuit_n2}
\end{figure}

\subsection{Even contributions}
In this subsection, the $\Psi$-factor responsible for even contributions is derived from equation~(\ref{eqn:U_approx})  for one single layer: 
\begin{equation}
    \Psi(\Theta_\Psi)=\left[\prod_{j=1}^{2}e^{i\left(\theta_{(2j-1)^2}U_{(2j-1)^2}\right)}e^{i\left(\theta_{4j^2-2j}U_{4j^2-2j}\right)}\right]\cdot\left(\prod T_1^e\right)M_1^e\left(\prod T_1^e\right)
\end{equation}
It comes from exponentials of SRBB elements (first sub-factor enclosed by square brackets, hereafter called $A_\Psi$) and from permutations of exponentials of SRBB elements (second sub-factor, hereafter called $B_\Psi$). In the above equation, $\prod T_1^e=P_{(2,4)}$ and
\begin{equation}
M_1^e=P_{(2,4)}\left[e^{i\left(\theta_{h_4(1)}U_{h_4(1)}\right)}e^{i\left(\theta_{f_4(1)}U_{f_4(1)}\right)}e^{i\left(\theta_{h_3(2)}U_{h_3(2)}\right)}e^{i\left(\theta_{f_3(2)}U_{f_3(2)}\right)}\right]P_{(2,4)}    
\end{equation}
The elements of the algebraic basis $\mathcal{U}^{(4)}$ that come into play belong to the following subsets, labeled respectively to the sub-factors:
\begin{itemize}
\item[a)] $\mathcal{A}_{\Psi}=\{[(2j-1)^2,(4j^2-2j)],\;1\leqslant j\leqslant2)\}=\{(1,2),(9,12)\}$
\item[b)] $\mathcal{B}_{\Psi}=\{h_4(1),f_4(1),h_3(2),f_3(2)\}=\{10,13,4,6\}$
\end{itemize}
Accordingly, referring to Figure~\ref{fig:srbb_n2}, the first sub-factor $A_\Psi$ is a $2\times2$ block-diagonal matrix where each block belongs to $SU(2)$, as can be quickly verified:
\begin{equation}
    A_\Psi=e^{i\,\theta_1U_1}e^{i\,\theta_2U_2}e^{i\,\theta_9U_9}e^{i\,\theta_{12}U_{12}}=
    \begin{pmatrix}
        A_{\Psi}^1&0\\
        0&A_{\Psi}^2
    \end{pmatrix}
\end{equation}
{\small
\begin{equation*}
\begin{aligned}
    A_{\Psi}^1&=
    \begin{pmatrix}
        e^{i(\theta_9+\theta_{12})}(\cos\theta_1\cos\theta_2-i\sin\theta_1\sin\theta_2)&e^{-i(\theta_9+\theta_{12})}(\cos\theta_1\sin\theta_2+i\sin\theta_1\cos\theta_2)\\
        e^{i(\theta_9+\theta_{12})}(i\sin\theta_1\cos\theta_2-\cos\theta_1\sin\theta_2)&e^{-i(\theta_9+\theta_{12})}(i\sin\theta_1\sin\theta_2+\cos\theta_1\cos\theta_2)
    \end{pmatrix}\\
    A_{\Psi}^2&=
    \begin{pmatrix}
        e^{i(\theta_1+\theta_2)}(\cos\theta_9\cos\theta_{12}-i\sin\theta_9\sin\theta_{12})&e^{i(\theta_1+\theta_2)}(\cos\theta_9\sin\theta_{12}+i\sin\theta_9\cos\theta_{12})\\
        e^{-i(\theta_1+\theta_2)}(i\sin\theta_9\cos\theta_{12}-\cos\theta_9\sin\theta_{12})&e^{-i(\theta_1+\theta_2)}(i\sin\theta_9\sin\theta_{12}+\cos\theta_9\cos\theta_{12})
    \end{pmatrix}
\end{aligned}
\end{equation*}}
Therefore, the first sub-factor $A_\Psi$ is a unitary matrix belonging to $M_2ZYZ$, a fundamental building block of the final quantum circuit that can be implemented with $R_z$ and CNOT gates only~\cite{sarkar2023scalable,krol2022efficient}. Inside the second sub-factor $B_\Psi$, $M_1^e$ is again a $2\times2$ block-diagonal matrix with $SU(2)$ blocks:
\begin{equation}
    M_1^{e}=P_{(2,4)}\left[e^{i\,\theta_{10}U_{10}}e^{i\,\theta_{13}U_{13}}e^{i\,\theta_4U_4}e^{i\,\theta_6U_6}\right]P_{(2,4)}=
    \begin{pmatrix}
        B_{\Psi}^1&0\\
        0&B_{\Psi}^2
    \end{pmatrix}
\end{equation}
{\small
\begin{equation*}
\begin{aligned}
    B_{\Psi}^1&=
    \begin{pmatrix}
        e^{i(\theta_4+\theta_6)}(\cos\theta_{10}\cos\theta_{13}-i\sin\theta_{10}\sin\theta_{13})&e^{-i(\theta_4+\theta_6)}(\cos\theta_{10}\sin\theta_{13}+i\sin\theta_{10}\cos\theta_{13})\\
        e^{i(\theta_4+\theta_6)}(i\sin\theta_{10}\cos\theta_{13}-\cos\theta_{10}\sin\theta_{13})&e^{-i(\theta_4+\theta_6)}(i\sin\theta_{10}\sin\theta_{13}+\cos\theta_{10}\cos\theta_{13})
    \end{pmatrix}\\
    B_{\Psi}^2&=
    \begin{pmatrix}
        e^{i(\theta_{10}+\theta_{13})}(i\sin\theta_4\sin\theta_6+\cos\theta_4\cos\theta_6)&e^{i(\theta_{10}+\theta_{13})}(i\sin\theta_4\cos\theta_6-\cos\theta_4\sin\theta_6)\\
        e^{-i(\theta_{10}+\theta_{13})}(\cos\theta_4\sin\theta_6+i\sin\theta_4\cos\theta_6)&e^{-i(\theta_{10}+\theta_{13})}(\cos\theta_4\cos\theta_6-i\sin\theta_4\sin\theta_6)
    \end{pmatrix}
\end{aligned}
\end{equation*}}
Consequentially, the second sub-factor $B_\Psi$ is also a unitary matrix belonging to $M_2ZYZ$. The VQC for $M_2ZYZ$-type matrices is derived from a standard decomposition technique~\cite{sarkar2024quantum} and is depicted in Figure~\ref{fig:zyz_circuit_n2}; in general, for arbitrary $M_nZYZ$-type matrices, the decomposition can be easily implemented via cyclic Gray Code~\cite{mottonen2004quantum}.
\begin{figure}[htbp]
    \centering
    \resizebox{0.9\textwidth}{!}{
        \begin{quantikz}
        \lstick{0}&&\ctrl{1}&&&\ctrl{1}&&&\ctrl{1}&&\ctrl{1}&\\
        \lstick{1}&\gate{R_z(\theta_1^*)}&\targ{}&\gate{R_z(\theta_2^*)}&
        \gate{R_y(\theta_3^*)}&\targ{}&\gate{R_y(\theta_4^*)}&\gate{R_z(\theta_5^*)}&\targ{}&\gate{R_z(\theta_6^*)}&\targ{}&
        \end{quantikz}}  
    \caption{The fully decomposed $M_2ZYZ$ quantum circuit.}
    \label{fig:zyz_circuit_n2}
\end{figure}
Concerning the $\Psi$-factor, only the sub-factor $\prod T_1^e$ remains to be analyzed; since the latter is represented by a permutation matrix, there are no sorting or decomposition complications from an implementation viewpoint. Using the recipe described in Appendix~\ref{sec:cnot_sequences} and summarized by Figure~\ref{fig:prodT1e_diagram_n2}, with the precaution of labeling the qubits from 0 onwards\footnote{The binary representation of $x$ is now obtained according to the formula $\sum_{i=1}^{n-1}\,2^{n-i-1}\,x_{i-1}$.} it is straightforward to deduce the corresponding quantum circuit. A rigorous proof of this recipe can be found in~\cite[Theorem 5.1]{sarkar2024quantum}.
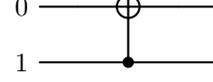
\begin{figure}[htbp]
\centering
\subfloat{
    \centering
    \begin{tabular}{||c|c|c|c||}
    \hline
    Decimal&Bits&Binary&Control-Target\\
    \hline\hline
    1&$x_0$&1&(1,0)\\
    \hline
    \end{tabular}}
\qquad
\qquad
\subfloat{
    \centering
    \begin{quantikz}
    \lstick{0}&&\targ{}&&\\
    \lstick{1}&&\ctrl{-1}&&
    \end{quantikz}}
\caption{Diagram to find the control-target pair for $\prod T_1^e$.}
\label{fig:prodT1e_diagram_n2}
\end{figure}

\subsection{Odd contributions}\label{sec:odd_n2}
This subsection is dedicated to the $\Phi$-factor of equation~(\ref{eqn:U_approx}), responsible for odd contributions. It is made up only of permutations of exponentials of SRBB elements (a fact that would make it equal to $B_\Psi$ if it were not for the permutation properties), and for one single layer becomes:
\begin{equation}
    \Phi(\Theta_\Phi)=\left(\prod T_1^o\right)M_1^o\left(\prod T_1^o\right)
\end{equation}
where $\prod T_1^o=P_{(2,3)}$ and
\begin{equation}
M_1^o=P_{(2,3)}\left[e^{i\left(\theta_{h_3(1)}U_{h_3(1)}\right)}e^{i\left(\theta_{f_3(1)}U_{f_3(1)}\right)}e^{i\left(\theta_{h_4(2)}U_{h_4(2)}\right)}e^{i\left(\theta_{f_4(2)}U_{f_4(2)}\right)}\right]P_{(2,3)}
\end{equation}
The elements of the algebraic basis $\mathcal{U}^{(4)}$ that come into play belong to the following set: $\mathcal{B}_{\Phi}=\{h_3(1),f_3(1),h_4(2),f_4(2)\}=\{5,7,11,14\}$.
Accordingly, referring to Figure~\ref{fig:srbb_n2}, $M_1^o$ is a $M_2ZYZ$-type matrix, as in the even case\footnote{Due to permutation properties proved in Appendix~\ref{proof_prop1}, also under odd-index permutations, the subset of elements building the $\Phi$-factor produces a $M_2ZYZ$-type matrix. This property holds only for 2-qubit systems.}
\begin{equation}
    M_1^o=P_{(2,3)}\left[e^{i\,\theta_5U_5}e^{i\,\theta_7U_7}e^{i\,\theta_{11}U_{11}}e^{i\,\theta_{14}U_{14}}\right]P_{(2,3)}=
    \begin{pmatrix}
        B_{\Phi}^1&0\\
        0&B_{\Phi}^2
    \end{pmatrix}
\end{equation}
{\small
\begin{equation*}
\begin{aligned}
    B_{\Phi}^1&=
    \begin{pmatrix}
        e^{i(\theta_{11}+\theta_{14})}(\cos\theta_5\cos\theta_7-i\sin\theta_5\sin\theta_7)&e^{-i(\theta_{11}+\theta_{14})}(\cos\theta_5\sin\theta_7+i\sin\theta_5\cos\theta_7)\\
        e^{i(\theta_{11}+\theta_{14})}(i\sin\theta_5\cos\theta_7-\cos\theta_5\sin\theta_7)&e^{-i(\theta_{11}+\theta_{14})}(i\sin\theta_5\sin\theta_7+\cos\theta_5\cos\theta_7)
    \end{pmatrix}\\
    B_{\Phi}^2&=
    \begin{pmatrix}
        e^{i(\theta_5+\theta_7)}(\cos\theta_{11}\cos\theta_{14}-i\sin\theta_{11}\sin\theta_{14})&e^{i(\theta_5+\theta_7)}(\cos\theta_{11}\sin\theta_{14}+i\sin\theta_{11}\cos\theta_{14})\\
        e^{-i(\theta_5+\theta_7)}(i\sin\theta_{11}\cos\theta_{14}-\cos\theta_{11}\sin\theta_{14})&e^{-i(\theta_5+\theta_7)}(i\sin\theta_{11}\sin\theta_{14}+\cos\theta_{11}\cos\theta_{14})
    \end{pmatrix}
\end{aligned}
\end{equation*}}
Therefore, the $\Phi$-factor produces another unitary matrix belonging to $M_2ZYZ$ whose circuit is essentially identical to its even counterpart (see Figure~\ref{fig:zyz_circuit_n2}). It is clear that for $n=2$, the odd contributions of the $\Phi$-factor identify a \emph{special case} of the scaling scheme presented in~\cite{sarkar2023scalable}, providing a notable simplification of the circuit that only counts $M_2ZYZ$-type matrices, diagonal elements, and CNOT-permutations. From $n=3$ onward, the $\Phi$-factor requires the use of a more general decomposition, being a block-diagonal unitary matrix with blocks that are only unitary, which still exploits $ZYZ$-type matrices but adds important multi-controlled $R_z$ gates both before and after~\cite{sarkar2023scalable}.

As for even contributions, the circuit representation of the sub-factor $\prod T_1^o$ can be derived from the recipe described in Appendix~\ref{sec:cnot_sequences}, with the precaution of labeling the qubits from 0 onward (see Figure~\ref{fig:prodT1o_diagram_n2}).
\begin{figure}[htbp]
\centering
\subfloat{
    \centering
    \resizebox{0.6\textwidth}{!}{
    \begin{tabular}{||c|c|c|c|c||}
    \hline
    Decimal&Bits&Binary&Control-Target&$k$-index\\
    \hline\hline
    1&$x_0$&1&(1,0)&0\\
    \hline
    \end{tabular}}}
\hfill
\subfloat{
    \centering
    \resizebox{0.3\textwidth}{!}{
    \begin{quantikz}
    \lstick{0}&&\ctrl{1}&\targ{}&\ctrl{1}&&\\
    \lstick{1}&&\targ{}&\ctrl{-1}&\targ{}&&
    \end{quantikz}}}
\caption{Diagram to find the control-target pair and the $k$-index for $\prod T_1^o$.}
\label{fig:prodT1o_diagram_n2}
\end{figure}
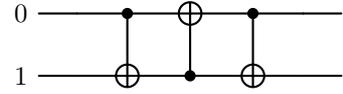

\subsection{2-qubit circuit to approximate SU(4)}
Composing the sub-circuits of the previous subsections according to equation~(\ref{eqn:U_approx}), the overall quantum circuit to approximate every $SU(4)$ operator within a QNN framework can be assembled. As shown in Section~\ref{sec:impl}, the designed QNN can be trained with only a single layer, regardless of the sparsity of the operator. This result is very interesting in comparison to the numerical simulations carried out in a complete classical context using optimization algorithms like Nelder-Mead~\cite{sarkar2024quantum}. In the following, the index layer $l$ is set to 1, and the approximating operator can be rewritten as:
\begin{equation}
\begin{split}
    &\mathcal{U}_{approx}(\Theta)=Z(\Theta_{Z}^1)\,\Psi(\Theta_\Psi^1)\,\Phi(\Theta_\Phi^1)=\\
    &=e^{i\theta_3U_3}e^{i\theta_8U_8}e^{i\theta_{15}U_{15}}e^{i\theta_1U_1}e^{i\theta_2U_2}e^{i\theta_9U_9}e^{i\theta_{12}U_{12}}\cdot\\
    &\cdot\left(\prod T_1^e\right)\left(\prod T_1^e\right)e^{i\theta_{10}U_{10}}e^{i\theta_{13}U_{13}}e^{i\theta_4U_4}e^{i\theta_6U_6}\left(\prod T_1^e\right)\left(\prod T_1^e\right)\cdot\\
    &\cdot\left(\prod T_1^o\right)\left(\prod T_1^o\right)e^{i\theta_5U_5}e^{i\theta_7U_7}e^{i\theta_{11}U_{11}}e^{i\theta_{14}U_{14}}\left(\prod T_1^o\right)\left(\prod T_1^o\right)
\end{split}
\end{equation}
Finally, in Figure~\ref{fig:n2circuit}, the complete VQC is illustrated from left to right, line by line, taking care to label its sub-components; simplified CNOTs are marked in red. Due to its algebraic properties, it is not possible to include this case in the scalable scheme of simplifications that will be the main theme of Section~\ref{sec:scaling}.
\begin{figure}[htbp]
\centering
\resizebox{1\textwidth}{!}{
\begin{subfigure}[b]{1.2\textwidth}
    \centering
    \begin{quantikz}
        \lstick{$1^\circ$q}&\ctrl{1}\gategroup[2,steps=3,style={dashed,rounded corners},label style={label position=below,anchor=north,yshift=-0.3cm}]{$\prod T_1^o$}&\targ{}&\ctrl{1}&\gategroup[2,steps=10,style={dashed,rounded corners},label style={label position=below,anchor=north,yshift=-0.3cm}]{$M_1^o$}&\ctrl{1}&&&\ctrl{1}&&&\ctrl{1}&&\ctrl[style={red}]{1}&\ctrl[style={red}]{1}\gategroup[2,steps=3,style={dashed,rounded corners},label style={label position=below,anchor=north,yshift=-0.3cm}]{$\prod T_1^o$}&\targ{}&\ctrl{1}&\rstick{...}\\
        \lstick{$2^\circ$q}&\targ{}&\ctrl{-1}&\targ{}&\gate{R_z}&\targ{}&\gate{R_z}&\gate{R_y}&\targ{}&\gate{R_y}&\gate{R_z}&\targ{}&\gate{R_z}&\targ[style={red}]{}&\targ[style={red}]{}&\ctrl{-1}&\targ{}&\rstick{...}
        \end{quantikz}
\end{subfigure}}
\resizebox{1\textwidth}{!}{
\begin{subfigure}[b]{1.2\textwidth}
    \centering
    \begin{quantikz}
        \lstick{...}&\targ{}\gategroup[2,steps=1,style={dashed,rounded corners},label style={label position=below,anchor=north,yshift=-0.3cm}]{$\prod T_1^e$}&\gategroup[2,steps=10,style={dashed,rounded corners},label style={label position=below,anchor=north,yshift=-0.3cm}]{$M_1^e$}&\ctrl{1}&&&\ctrl{1}&&&\ctrl{1}&&\ctrl{1}&\targ{}\gategroup[2,steps=1,style={dashed,rounded corners},label style={label position=below,anchor=north,yshift=-0.3cm}]{$\prod T_1^e$}&\rstick{...}\\
        \lstick{...}&\ctrl{-1}&\gate{R_z}&\targ{}&\gate{R_z}&\gate{R_y}&\targ{}&\gate{R_y}&\gate{R_z}&\targ{}&\gate{R_z}&\targ{}&\ctrl{-1}&\rstick{...}
    \end{quantikz}     
\end{subfigure}}
\resizebox{1\textwidth}{!}{
\begin{subfigure}[b]{1.2\textwidth}
    \centering
    \begin{quantikz}
        \lstick{...}&\gategroup[2,steps=10,style={dashed,rounded corners},label style={label position=below,anchor=north,yshift=-0.3cm}]{$A_\Psi$}&\ctrl{1}&&&\ctrl{1}&&&\ctrl{1}&&\ctrl[style={red}]{1}&\rstick{...}\\
        \lstick{...}&\gate{R_z}&\targ{}&\gate{R_z}&\gate{R_y}&\targ{}&\gate{R_y}&\gate{R_z}&\targ{}&\gate{R_z}&\targ[style={red}]{}&\rstick{...}
    \end{quantikz}     
\end{subfigure}}
\resizebox{1\textwidth}{!}{
\begin{subfigure}[b]{1.3\textwidth}
\centering
    \begin{quantikz}
        \lstick{...}&\ctrl[style={red}]{1}\gategroup[2,steps=4,style={dashed,rounded corners},label style={label position=below,anchor=north,yshift=-0.3cm}]{$Z$-factor}&&\ctrl{1}&\gate{R_z}&\\
        \lstick{...}&\targ[style={red}]{}&\gate{R_z}&\targ{}&\gate{R_z}&
    \end{quantikz}     
\end{subfigure}}
\caption{The VQC to approximate $SU(4)$ operators; red gates are simplified gates.}
\label{fig:n2circuit}
\end{figure}

\section{A scalable algorithm for optimizing CNOT-gates in the SRBB-based synthesis framework}\label{sec:scaling}
In the previous Section, a detailed construction of the VQC to approximate unitary evolutions with the SRBB decomposition is provided for $n=2$. Besides, the scalable design of the quantum circuits is derived step by step from the algebraic properties of SRBB elements and their peculiar grouping, following a new and simpler implementation-oriented formulation.

In this Section, a new scalable algorithm that takes into account the exponential growth of CNOT gate simplifications is proposed and analyzed in depth. As anticipated, $n=2$ represents a very particular case of the scalable scheme due to the peculiar matrix properties of its factor $M_1^o$. In order to clearly illustrate the new simplification scheme and its features, various examples will be proposed after a general discussion which aims to present the implementation details; by doing so, the specificity of the $n=2$ case will also be highlighted. 

Considering the original structure of the overall approximating quantum circuit for a generic $n$, represented by Equation~(\ref{eqn:U_approx}), only some circuit blocks allow simplifications of CNOT gates, as stated by the following proposition.
\begin{prop}
For $n\geqslant3$, the SRBB-type decomposition produces simplifications of CNOT gates inside the $Z(\Theta_Z)$-factor, representing the diagonal contributions of the $SU(2^n)$ algebraic basis, and between $\prod T_x^{e/o}$ factors, representing respectively permutation matrices with even/odd indices.
\end{prop}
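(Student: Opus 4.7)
The plan is to prove the proposition by decomposing the claim into three complementary parts: (a) exhibiting CNOT cancellations inside $Z(\Theta_Z)$, (b) exhibiting cancellations between consecutive $\prod T_x^{e/o}$ factors of equal parity, and (c) verifying that no further systematic simplifications arise inside the $M_x^{e/o}$ sub-blocks, so that the two sources named in the statement are in fact exhaustive.

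For part (a), I would first generalise the diagrammatic construction of Section~\ref{sec:diagonals_n2} to arbitrary $n\geqslant 3$. By property (e) of Section~\ref{sec:recursiveSRBB} the diagonal SRBB elements entering $Z$ are in bijection with the non-identity strings in $\{\mathbb{I}_2,\sigma_3\}^{\otimes n}$ via the binary encoding of Table~\ref{tab:SRBB_diagonals_n2}. Each exponential $e^{i\theta_{j^2-1}U_{j^2-1}^{(2^n)}}$ therefore implements as a CNOT-staircase that targets the rightmost $\sigma_3$-qubit, encloses a single $R_z(\theta_{j^2-1})$ rotation, and is mirrored on the other side. Two adjacent exponentials whose Pauli strings share the same target qubit and at least one control qubit produce a pair of identical CNOTs at their interface, which collapse by the involution $\mathrm{CNOT}^2=\mathbb{I}$. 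I would then argue, by induction on $n$ and by exploiting the recursive block structure of the SRBB, that for $n\geqslant 3$ the natural ordering induced by $j=m^2-1$ always produces enough such compatible neighbouring pairs for at least one cancellation per new qubit to occur.

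For part (b), I would rely on the explicit implementation of transpositions as CNOT sequences given in Appendix~\ref{sec:cnot_sequences}. By Equation~(\ref{eqn:def_prodT}) each $\prod T_x^{e/o}$ is a product of $2^{n-2}$ pairwise disjoint transpositions $P_{(\alpha,\beta)}$; since disjoint transpositions commute, the resulting CNOT pattern is determined up to reordering of independent sub-blocks. The outer structure of $\Psi(\Theta_\Psi)$ and $\Phi(\Theta_\Phi)$ in Equations~(\ref{eqn:def_Psi})--(\ref{eqn:def_Phi}) places two copies of $\prod T_x^{e/o}$ immediately around each $M_x^{e/o}$, so between the $x$-th and $(x+1)$-th terms of the outer product one finds a juxtaposition $(\prod T_x^{e/o})(\prod T_{x+1}^{e/o})$. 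I would compare the control/target tables produced by the Appendix~\ref{sec:cnot_sequences} recipe for consecutive values of $x$, identify the CNOTs that match, and collapse them pairwise. The revised indexing of methods B and C (contribution~1 of the paper) is what makes this comparison unambiguous for every $n$.

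For part (c), I would argue that the gates internal to the $M_x^{e/o}$ factors are rigid. By properties (ii)--(iv) listed just after Equation~(\ref{eqn:def_M}) each $M_x^{e/o}$ is either a $ZYZ$-type unitary or a block-diagonal extension thereof; in both cases the CNOTs are separated from each other (and from the adjacent $\prod T_x^{e/o}$ boundary) by parameterised $R_z$ or $R_y$ rotations, which prevents any two CNOTs from becoming directly adjacent. Hence no cancellation is structurally available inside or immediately around an $M$-block for $n\geqslant 3$; the anomaly that allows the extra boundary cancellations visible in Figure~\ref{fig:n2circuit} precisely exploits Proposition~\ref{prop:n2}, which collapses $M_1^o$ to a $ZYZ$-type matrix and is available only at $n=2$. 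The main obstacle I anticipate is the closed-form bookkeeping in step (b): proving the cancellation pattern uniformly in $n$, rather than verifying it case by case for $n=3,4,\ldots$, requires a careful tracking of how the transposition sets $T_x^{e/o}$ interlock across $x$ under the recursive construction, and this is where the sharper definition of the index $k$ must be used most attentively.
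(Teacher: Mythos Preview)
Your part~(c) is essentially the paper's entire proof: the proposition is argued purely by exclusion, showing that the $M_x^{e/o}$ blocks cannot host CNOT cancellations either internally or at their boundaries, so that only the $Z$-factor and the $\prod T_x^{e/o}$ edges remain as candidates. The actual existence of simplifications in those two places---your parts~(a) and~(b)---is not proved here but deferred to Propositions~3--6 and their explicit constructions. In that sense your proposal is more ambitious than the paper's, but the extra work belongs to later results.

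Two points to correct. First, in part~(c) you claim that $M_x^e$ is separated from the adjacent $\prod T_x^e$ by a rotation; this is not so. As Figure~\ref{fig:zyz_circuit_n3} shows, $M_x^e$ ends with a CNOT$_{(0,n-1)}$. The paper's reason why no cancellation occurs at that boundary is different: the terminal CNOT has control on qubit~$0$ and target on qubit~$n-1$, whereas every CNOT in $\prod T_x^e$ has control on qubit~$n-1$ (Appendix~\ref{sec:cnot_sequences}), so the control--target pairs never match. Second, your part~(a) asserts that the ``natural ordering induced by $j=m^2-1$'' already produces cancellations. It does not: consecutive indices $m^2-1$ and $(m+1)^2-1$ correspond to binary strings $m$ and $m+1$, whose rightmost $1$-bit (the target qubit~$m$ in Table~\ref{tab:mm'_n3}) generically changes, so adjacent blocks share no $(m,m')$ pair. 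For $n=3$ the sequence $U_3,U_8,U_{15},U_{24},U_{35},U_{48},U_{63}$ has no two consecutive elements with the same target and hence zero cancellations. Simplifications appear only after reordering, e.g.\ via the Gray-code scheme of Proposition~\ref{prop_Z_scalable}; your induction on~$n$ would need to build in such a reordering explicitly.
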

\begin{proof}
    With reference to Equation~(\ref{eqn:U_approx}) and the scalable features of each quantum sub-circuit (see, for instance, the circuit representations for $n=2,3,4$ in Section \ref{sec:2qubits} and Appendices \ref{sec:3qubits} and \ref{sec:4qubits}), the factors called $M_x^{e/o}$ do not allow CNOT-simplifications either within them or with adjacent factors from $n=3$ onward. Firstly, from their complete decomposition, it is immediate to see that a rotation is always interposed between a pair of CNOTs; secondly, while $M_x^e$ factors start with a rotation and end with a CNOT$_{(0,n-1)}$, $M_x^o$ factors start and end with a couple of rotations, preventing simplifications with adjacent CNOTs of the previous and subsequent factors\footnote{The particular case $n=2$ and its simplifications will be discussed in section~\ref{sec:anomaly_n2}.}. For this reason, it can be said that $M_x^{e/o}$ factors are encapsulated inside a pair of $\prod T_x^{e/o}$ factors for each value of $x$, which, from now on, will be denoted as the \emph{edges} of the $\Psi(\Theta_\Psi)/\Phi(\Theta_\Phi)$-factors.
\end{proof}

\subsection{Simplifying CNOTs between diagonal contributions}
For an $n$-qubit register, the SRBB elements that come into play in the $Z(\Theta_Z)$-factor design~(\ref{eqn:def_Z}) belong to the set $\zeta_n=\{j^2-1,\,2\leqslant j\leqslant2^n\}$. The diagonal nature of these algebraic elements allows for the construction of the corresponding circuit according to different orderings, but the most interesting among them is the one with the largest number of simplifications between CNOT gates. Therefore, it is absolutely important to understand how these simplifications occur, and the $n=3$ case is considered as a reference example to clarify that\footnote{Being the first case for which scalability applies, it is the simplest example but also the best starting point for building any other.}. Starting from the set $\zeta_3=\{3,8,15,24,35,48,63\}$, Table~\ref{tab:mm'_n3} is constructed to order its diagonal elements and derive the pairs of parameters $(m,m')$, which are in turn useful for designing the quantum circuit for the $Z$-factor.
\begin{table}[htbp]
\centering
\resizebox{0.5\textwidth}{!}{
\begin{tabular}{|c||c|c|c|c|c|c|}
\hline
Position&$1^{\circ}\,$q&$2^{\circ}\,$q&$3^{\circ}\,$q&Element&$m$&$m'$\\
\hline
\hline
1&0&0&1&$U_3$&3&\\
2&0&1&0&$U_8$&2&\\
3&0&1&1&$U_{15}$&3&2\\
4&1&0&0&$U_{24}$&1&\\
5&1&0&1&$U_{35}$&3&1\\
6&1&1&0&$U_{48}$&2&1\\
7&1&1&1&$U_{63}$&3&1,2\\
\hline
\end{tabular}}
\caption{Pairs of parameters $(m,m')$ to build the $Z(\Theta_Z)$-factor circuit for $n=3$.}
\label{tab:mm'_n3}
\end{table}
The rows of this table are the binary representation of numbers from 1 to $2^n-1=7$, while some columns are associated with the qubits of the quantum register, labeled from 1 to $n$, from left to right. Then, for each row, the parameter  $m$ is associated with the rightmost 1 bit of the binary string, and the parameter $m'$ is associated with any other bits equal to 1. The last two columns of Table~\ref{tab:mm'_n3} report the values of the pairs $(m,m')$. Lastly, each row corresponds to a diagonal element of the circuit according to the following recipe~\cite{sarkar2024quantum}: the parameter $m$ indicates the qubit on which to perform a $z$-rotation, while the parameter $m'$ indicates the control qubit of a CNOT-pair with target on the qubit indicated by $m$; if $m'\neq0$, the same CNOT must be placed before and after the $R_z$-gate. Figure~\ref{fig_exsimpli_n3} illustrates an example of CNOT-simplification (red gates) for the $Z$-factor in the $n=3$ case, resulting from the concatenation of elements $U_{15}$ and $U_{63}$. It is clear that a CNOT-simplification occurs due to a \emph{juxtaposition} on the same control-target pair between different diagonal contributions if no evolution of information (interference) comes from the gates in between. From a diagrammatic point of view, this fact in Table~\ref{tab:mm'_n3} is reflected in the condition of having the two binary strings representing the elements taken into consideration with at least two pairs of bits with a value of 1 in the same position, and one of these must correspond to the last qubit.
\begin{figure}[!h]
\centering
\resizebox{0.5\textwidth}{!}{
\begin{quantikz}
    \lstick{$1^{\circ}$q}&\gategroup[3,steps=3,style={dashed,rounded corners},label style={label position=below,anchor=north,yshift=-0.3cm}]{$U_{15}$}&&&\ctrl{2}\gategroup[3,steps=5,style={dashed,rounded corners},label style={label position=below,anchor=north,yshift=-0.3cm}]{$U_{63}$}&&&&\ctrl{2}&\\
    \lstick{$2^{\circ}$q}&\ctrl{1}&&\ctrl[style={red}]{1}&&\ctrl[style={red}]{1}&&\ctrl{1}&&\\
    \lstick{$3^{\circ}$q}&\targ{}&\gate{R_z}&\targ[style={red}]{}&\targ{}&\targ[style={red}]{}&\gate{R_z}&\targ{}&\targ{}&
\end{quantikz}}
\caption{Example of a CNOT-simplification within the $Z(\Theta_Z)$-factor for the sub-sequence $U_{15}-U_{63}$.}
\label{fig_exsimpli_n3}
\end{figure}
\begin{prop}
    A CNOT-simplification within the $Z(\Theta_Z)$-factor occurs due to a juxtaposition of the same parametric pair $(m,m')$ between different binary strings representing the diagonal contribution of the SRBB.
\end{prop}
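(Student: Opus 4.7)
The plan is to split the proof into a structural part, describing the circuit template of a single diagonal element, and a gate-level commutation argument at the boundary between two consecutive elements.

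First, I would restate the recipe already established in Section~\ref{sec:diagonals_n2}: a diagonal SRBB element labelled by the binary string whose rightmost $1$ sits at position $m$ and whose remaining $1$s sit at positions $m'_1,\ldots,m'_r$ is implemented as a stack of CNOT gates, each with target $m$ and controls $m'_i$, placed symmetrically around a single $R_z$ on qubit $m$. As a short preliminary I would observe that any two CNOTs sharing the same target qubit commute (their action on the target is the controlled flip of the same computational-basis coordinate, so the order of applying different controls is immaterial). This fact makes the internal ordering of the stack irrelevant and will be the workhorse of the main argument.

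Next, I would consider two consecutive diagonal elements $U_a$ and $U_b$ in the $Z(\Theta_Z)$-factor and suppose both admit the parametric pair $(m,m')$ in their respective lists. By the template, $U_a$ ends with a stack of CNOTs all targeting $m$ (including a copy of $\mathrm{CNOT}_{(m',m)}$), and $U_b$ begins with another such stack (again including a copy of $\mathrm{CNOT}_{(m',m)}$). Between the two stacks no $R_z$ is interposed, so the combined sub-sequence is a pure product of CNOTs on the common target $m$. By the commutation observation, the two matching $\mathrm{CNOT}_{(m',m)}$ can be slid into adjacency and collapsed through $\mathrm{CNOT}^2=\mathbb{I}$, producing the announced simplification. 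This is the sufficiency direction, and it precisely reproduces the cancellation visible between $U_{15}$ and $U_{63}$ in Figure~\ref{fig_exsimpli_n3}.

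For the converse I would argue that any CNOT cancellation between consecutive diagonal elements must involve two identical CNOTs that can be rendered adjacent. Since every CNOT appearing at the boundary belongs to the outer layers of either $U_a$ or $U_b$ and hence has target equal to the $m$ of its home element, the cancelling pair forces $m_a=m_b=m$ and a common control $m'$, i.e.\ exactly the juxtaposition of the same pair $(m,m')$ in the two binary strings.

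The main obstacle I anticipate is bookkeeping when the two elements share several $m'$ indices and when further SRBB elements on either side contribute additional CNOTs with target $m$: each shared $m'$ should independently produce a cancellation, and one must check that these cancellations can be carried out together without spurious interference. The saving grace is that all gates in play have the same target $m$ and therefore generate an abelian subgroup of the Clifford group, so the argument reduces to a purely combinatorial statement on the symmetric difference of the multisets of control indices rather than to a genuine non-commutative rewrite.
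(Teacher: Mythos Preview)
Your proposal is correct and in fact considerably more detailed than the paper's own proof, which simply asserts that the $n=3$ example (Figure~\ref{fig_exsimpli_n3}) has general validity because it rests on (i) the $(m,m')$-to-circuit recipe and (ii) properties of the binary representation. You supply the missing mechanism: the commutation of CNOTs sharing a common target, which is what lets the matching $\mathrm{CNOT}_{(m',m)}$ gates be slid past the intervening $\mathrm{CNOT}_{(m'',m)}$ gates and cancel. You also give a converse argument and a remark on multiple shared $m'$ indices, neither of which appears in the paper. The approaches are the same in spirit, but yours is the one that actually carries the argument through rather than pointing at an example.
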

\begin{proof}
    The specific case for $n=3$, described above as an example to clarify the conditions of simplification, has a completely general value since it is based on:
    \begin{itemize}
    \item[i)]the relationship between the pair of parameters $(m,m')$ and the circuit representation of diagonal SRBB contributions (first defined in~\cite{sarkar2023scalable});
    \item[ii)]the properties of the binary representation valid for any $n$.
    \end{itemize}
\end{proof}
To proceed further and deduce a fundamental feature of the scaling pattern for this diagonal factor, an important observation is mandatory: thanks to the properties of the binary representation, the rows of Table~\ref{tab:mm'_n3} that end with 0 correspond to all and only the rows of the binary table for the $n=2$ case (with the obvious precaution of eliminating the rightmost qubit and renaming the diagonal elements). This implies a recursive scheme within the scalability scheme; fixed $n$ by the choice of the quantum register, only the rows of the binary table ending with 1 will provide a new contribution to the circuit, while the remaining ones exactly represent the circuit for the $Z$-factor of the $n-1$ case. This is true except for $n=2$, which is the starting point of the recursion and justifies the following proposition.
\begin{prop}
 For $n\geqslant3$, the optimal number of CNOTs in the $Z$-factor design is reached by implementing only the rows that end with a 1-valued bit of the binary table representing the SRBB diagonal contributions, in addition to the circuit of the previous $n-1$ case, if a scalable method is known to simplify pairs of CNOTs.
\end{prop}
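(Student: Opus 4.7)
The plan is to proceed by induction on $n$ and by partitioning the binary table into rows ending with $0$ and rows ending with $1$. The base case $n=3$ is already exhibited in Table~\ref{tab:mm'_n3} and serves as the template: the three rows $010$, $100$, $110$ with last bit $0$, once the rightmost column is deleted, reproduce exactly the binary table for the $n=2$ case, and correspond one-to-one with the diagonal elements $U_3,U_8,U_{15}$ making up the $Z$-factor in Figure~\ref{fig:zeta_circuit_n2}. I would promote this observation to a bijection valid for every $n$: every non-zero binary string of length $n$ ending in $0$ is a non-zero length-$(n-1)$ string padded with $0$, and vice versa.

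Next, I would show that for the rows ending in $0$ the rightmost $1$-bit lies at some position $m\leqslant n-1$ and every $m'$-bit is also in $\{1,\dots,n-1\}$; hence the associated $R_z$ and CNOT pairs act only on the first $n-1$ qubits and coincide with the $Z$-factor of the $(n-1)$-qubit case, whose optimal implementation is available by the inductive hypothesis. Dually, for the rows ending in $1$ one has $m=n$ always, and the controls $m'$ range over subsets of $\{1,\dots,n-1\}$; all gates produced by these rows therefore carry either the rotation or the CNOT target on qubit $n$. These are exactly the new contributions to which the assumed scalable simplification method must be applied.

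For the optimality claim, the key point is that a CNOT arising from a row ending in $0$ can never annihilate a CNOT arising from a row ending in $1$: two CNOTs cancel only when they share the same (control, target) pair, and the two families have disjoint target qubits ($\leqslant n-1$ versus $=n$). Combined with the fact that all diagonal SRBB contributions mutually commute (so the rows may be freely reordered), this implies that every potential cancellation is confined to one of the two subcircuits. Hence the optimal CNOT count factorizes into the sum of the optimum for the $(n-1)$ case and the optimum achievable by the scalable simplification applied only to the new rows, which is precisely the content of the proposition.

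The subtle step, and the one I expect to require the most care, is the interface argument between the two subcircuits. I would address it by noting that $R_z$ rotations on qubit $n$ commute with any gate supported on the first $n-1$ qubits, and CNOTs targeted on qubit $n$ commute with $R_z$ on any of the first $n-1$ qubits; together with the overall commutativity of diagonal exponentials, this ensures a clean factorization of the $Z$-factor into two non-interacting blocks on which the inductive hypothesis and the scalable simplification method can be invoked independently. The exclusion of $n=2$ is consistent with this scheme, since its $Z$-factor plays the role of the base case and is fixed by direct inspection (Figure~\ref{fig:zeta_circuit_n2}) rather than by recursion.
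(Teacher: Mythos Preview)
Your proposal is correct and follows the same approach as the paper: partition the binary table by the last bit, identify the rows ending in $0$ with the $(n-1)$-qubit $Z$-factor via the obvious bijection, and treat the rows ending in $1$ as the new contributions to which the assumed scalable simplification method applies. The paper's own proof is in fact terser than yours; it invokes ``the properties of binary representation'' to justify the split and the recursion without making explicit why the two blocks cannot interact. Your disjoint-targets observation (CNOTs from rows ending in $0$ have target on a qubit $\leqslant n-1$, those from rows ending in $1$ have target on qubit $n$, hence no pair across the two families can ever cancel) is exactly the missing justification for the factorization of the optimal CNOT count, and your commutation remarks at the interface make the separation airtight. So your argument is the same route, only fleshed out more carefully on the optimality side.
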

\begin{proof}
   This proof relies entirely on the properties of binary representation, keeping in mind that the last diagonal element (the one corresponding to the binary string with only zeros) is not taken into account. In the transition to the binary table of the $n-1$ case, it is necessary to eliminate the column of the last qubit and the rows ending with 0. Thus, if you have a \emph{scalable} method to find the maximum number of simplifications between the parametric pairs $(m,m')$, exploiting the properties of the binary representation, it is possible to apply it recursively only to the rows ending in 1 of the case $n$ in question and add the $Z$-factor circuit for the $n-1$ case, which has already been simplified in the previous iteration. Since the minimum number of qubits of interest is 2, because for $n=1$ the SRBB would be replaced by the well known Pauli Basis, 2-qubit systems are excluded by this proposition as they are included in the first iteration (every recursive method requires a starting point).
\end{proof}
In the following, the scalable and easily implementable algorithm that directly provides the CNOT-simplified circuit for rows ending with 1 is described. This pattern has been investigated with different methods, some of which are based on combinatorics\footnote{A solution that is not entirely optimal can be found by placing the elements with at least one similar pair $(m,m')$ next to the element with the greatest (or greater) number of pairs $(m,m')$, once on the right and once on the left, until the items run out.}. One method is based on Gray Code: it is possible to prove that the latter encodes the optimal solution.

\subsubsection{Gray-Code-based solution}\label{sec:gray_code}
As anticipated, it turned out that the cyclic Gray Code matrices encode the optimal solution in terms of simplifications of CNOTs. Before providing the proof, the fundamental steps of the algorithm are described, taking the $n=3$ case as a reference to facilitate its generalization and, therefore, its implementation.

The starting point is to recursively construct the (cyclic) Gray Code matrices up to the case $n$ in question; their columns are labeled according to the qubits of the quantum register from left to right. Then, for the recursive scheme already explained, only the rows ending with 1 are considered, one for each diagonal element associated with a new contribution (a subset of $\zeta_3$, highlighted in green in Table~\ref{tab:graycode_n3}).
The latter are important for the following two reasons.
\begin{itemize}
    \item[i.]Reading the table from top to bottom, it is possible to recognize the optimal ordering of the involved elements that ensures the largest number of CNOT-simplifications, recovering the correspondence with algebraic elements described in Table~\ref{tab:mm'_n3}. In particular, as illustrated in Table~\ref{tab:graycode_n3}, the first line ending with 1 will always be the simplest contribution due to the Gray Code rule and thus the last little block of the sequence (see, for instance, Figure~\ref{fig:graycode_n3}). The second line ending with 1, instead, will always be the first little block of the sequence, with all the others following in order. In other words, the first element of the optimal sequence will always be the second line ending with 1, followed by all the others in the order of the \emph{cyclic} Gray Code. The reason why it is best to keep the simplest algebraic element at the end will be clear once the complete structure is obtained.
    \item[ii.]Reading the table from top to bottom, two lines at a time, it is possible to identify a rule that directly designs the CNOT-simplified circuit. In correspondence with the only bit that changes from one row to another, associate a CNOT with the control in that qubit and the target in the $n$-th qubit, followed by a $z$-rotation in the $n$-th qubit. Keeping in mind that simplified gates are marked in red in Figure~\ref{fig:graycode_n3}, the rule can be checked for the sub-sequence 15-63. Thus, every row of Table~\ref{tab:graycode_n3} corresponds to a piece of the circuit, but what matters are the jumps between two green lines. In the first jump, the only bit that changes is the second, so a CNOT$_{(2,3)}$ and a rotation in the third qubit. In the second jump, the only bit that changes is the first, so a CNOT$_{(1,3)}$ and a rotation in the third qubit, bypassing the simplified pair of CNOTs, and so on.
\end{itemize}
\begin{table}[htbp]
\centering
\resizebox{0.5\textwidth}{!}{
\begin{tabular}{|c||c|c||c|c|c|c|}
\hline
$n=1$&\multicolumn{2}{c||}{$n=2$}&\multicolumn{3}{c|}{$n=3$}&\multirow{2}*{Element}\\
\cline{1-6}
$1^{\circ}\,$q&$1^{\circ}\,$q&$2^{\circ}\,$q&$1^{\circ}\,$q&$2^{\circ}\,$q&$3^{\circ}\,$q&\\
\hline
\hline
&&&0&0&0&-\\
&&&\textcolor{green}{0}&\textcolor{green}{0}&\textcolor{green}{1}&$U_{3}$\\
&0&0&\textcolor{green}{0}&\textcolor{green}{1}&\textcolor{green}{1}&$U_{15}$\\
0&0&1&0&1&0&-\\
\hline
1&1&1&1&1&0&-\\
&1&0&\textcolor{green}{1}&\textcolor{green}{1}&\textcolor{green}{1}&$U_{63}$\\
&&&\textcolor{green}{1}&\textcolor{green}{0}&\textcolor{green}{1}&$U_{35}$\\
&&&1&0&0&-\\
\hline
\end{tabular}}
\caption{Cyclic Gray Code matrices.}
\label{tab:graycode_n3}
\end{table}
\begin{figure}[htbp]
\centering
\resizebox{0.7\textwidth}{!}{
\begin{quantikz}
    \lstick{$1^{\circ}$q}&\gategroup[3,steps=3,style={dashed,rounded corners},label style={label position=below,anchor=north,yshift=-0.3cm}]{$U_{15}$}&&&\ctrl{2}\gategroup[3,steps=5,style={dashed,rounded corners},label style={label position=below,anchor=north,yshift=-0.3cm}]{$U_{63}$}&&&&\ctrl[style={red}]{2}&\ctrl[style={red}]{2}\gategroup[3,steps=3,style={dashed,rounded corners},label style={label position=below,anchor=north,yshift=-0.3cm}]{$U_{35}$}&&\ctrl{2}&\gategroup[3,steps=1,style={dashed,rounded corners},label style={label position=below,anchor=north,yshift=-0.3cm}]{$U_{3}$}&\\
    \lstick{$2^{\circ}$q}&\ctrl{1}&&\ctrl[style={red}]{1}&&\ctrl[style={red}]{1}&&\ctrl{1}&&&&&&\\
    \lstick{$3^{\circ}$q}&\targ{}&\gate{R_z}&\targ[style={red}]{}&\targ{}&\targ[style={red}]{}&\gate{R_z}&\targ{}&\targ[style={red}]{}&\targ[style={red}]{}&\gate{R_z}&\targ{}&\gate{R_z}&
\end{quantikz}}
\caption{Simplifications (red gates) coming from new diagonal contributions only.}
\label{fig:graycode_n3}
\end{figure}
Finally, Figure~\ref{fig:Zsimplified_n3} shows the $Z$-factor circuit for the $n=3$ case resulting from all the simplifications just introduced. The circuit for the diagonal contributions is composed by 2 blocks; the first block comes from the rows of the binary table ending with 1 (new contributions of the case $n$ in question) and relies on a Gray Code pattern (scalable part), which encode the optimal ordering in terms of CNOT-count; the second block comes from the rows of the binary table ending with 0 and it is exactly the $Z$-factor circuit for the previous $n-1$ case (recursive part). This scalable and partly recursive simplification scheme guarantees the minimum number of CNOT gates at any $n$ inside the $Z$-factor of the SRBB-based approximate unitary synthesis algorithm.
\begin{figure}[htbp]
\centering
\resizebox{0.75\textwidth}{!}{
\begin{quantikz}
    \lstick{$1^{\circ}$q}&\gategroup[3,steps=8,style={dashed,rounded corners},label style={label position=below,anchor=north,yshift=-0.3cm}]{Gray-Code-scalable $Z$-factor circuit}&&\ctrl{2}&&&&\ctrl{2}&&\ctrl{1}\gategroup[2,steps=4,style={dashed,rounded corners},label style={label position=above,anchor=north,yshift=+0.3cm}]{$n=2$ $Z$-factor circuit}&&\ctrl{1}&\gate{R_z}&\\
    \lstick{$2^{\circ}$q}&\ctrl{1}&&&&\ctrl{1}&&&&\targ{}&\gate{R_z}&\targ{}&\gate{R_z}&\\
    \lstick{$3^{\circ}$q}&\targ{}&\gate{R_z}&\targ{}&\gate{R_z}&\targ{}&\gate{R_z}&\targ{}&\gate{R_z}&&&&&
\end{quantikz}}
\caption{ CNOT-optimized $Z(\Theta_Z)$-factor circuit for $n=3$.}
\label{fig:Zsimplified_n3}
\end{figure}
\begin{prop}\label{prop_Z_scalable}
    For $n\geqslant3$, the minimum number of CNOT gates within the $Z(\Theta_Z)$-factor of the SRBB-type decomposition is achieved by the combination of a scalable Gray-Code ordering of new diagonal contributions and a recursive algorithm that adds the contributions of the $n-1$ case.
\end{prop}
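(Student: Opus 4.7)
The plan is to prove Proposition~\ref{prop_Z_scalable} by induction on $n$, reducing the minimum CNOT-count question to a shortest-closed-walk problem on a Boolean hypercube. The base case is $n=3$, treated explicitly in Section~\ref{sec:gray_code}; the inductive step rests on two commutation facts and a separation argument.

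The first commutation fact is that the diagonal generators $U_{j^2-1}$ pairwise commute, so the exponentials in~(\ref{eqn:def_Z}) may be reordered freely and we are genuinely choosing an ordering that minimizes the CNOT count. The second is that any two CNOTs sharing the same target qubit commute, so a sub-sequence of CNOTs all targeted at qubit $n$ rewrites as a single product whose control set is the symmetric-difference (XOR) of the individual control sets. Using these, I would split the binary-table rows according to the parity of the last bit. A row ending in $0$ has its rightmost $\sigma_3$ on some qubit $m<n$, so its entire block lives on qubits $\{1,\dots,n-1\}$; after deleting the constant last column, the end-in-$0$ sub-problem is in exact bijection with the $Z$-factor task for $n-1$ qubits, and the inductive hypothesis supplies its minimum.

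For the end-in-$1$ sub-circuit, let $S_1,\dots,S_k$ with $k=2^{n-1}$ denote the ordered control subsets $S_i\subseteq\{1,\dots,n-1\}$ and write $C(S)=\prod_{q\in S}\mathrm{CNOT}(q,n)$. By same-target commutation the concatenation collapses to
\[
C(S_1)\,R_z\,C(S_1\triangle S_2)\,R_z\,C(S_2\triangle S_3)\,R_z\,\cdots\,R_z\,C(S_{k-1}\triangle S_k)\,R_z\,C(S_k),
\]
so the CNOT count equals $|S_1|+\sum_{i=1}^{k-1}|S_i\triangle S_{i+1}|+|S_k|$. This is precisely the length of a closed walk on the $(n-1)$-dimensional Boolean hypercube that starts and ends at $\emptyset$ and visits every vertex, hence is bounded below by $2^{n-1}$, the edge-length of a Hamiltonian cycle. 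The cyclic binary-reflected Gray code through $\emptyset$ realizes such a cycle, so the ordering prescribed in Section~\ref{sec:gray_code} saturates the bound; combining this with the inductive minimum for end-in-$0$ rows yields the claimed global optimum.

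The main obstacle I anticipate is making the separation of the two halves fully rigorous: one must rule out alternative orderings that mix end-in-$0$ and end-in-$1$ blocks and save CNOTs by exploiting cross-group commutations. The cleanest route is to observe that every cancellation arises from a coincidence of two adjacent identical CNOTs, and that end-in-$0$ CNOTs have targets in $\{1,\dots,n-1\}$ while end-in-$1$ CNOTs have target $n$, so no cross-group CNOT pair can ever coincide; once this is secured, the minimality of the Gray-Code ordering reduces to the standard Hamiltonicity of the hypercube.
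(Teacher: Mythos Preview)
Your approach is correct and genuinely different from the paper's. The paper argues by contradiction: it supposes some ordering beats the Gray code, then performs an intricate count of how many $(m,m')$ pairs are available, how many are ``saturated'' by the Gray-code ordering, and how many diagonal elements are shared between different $m'$ indices, finally showing that no extra simplifications can be squeezed out of the remaining unsaturated pairs. Your argument instead recasts the end-in-$1$ sub-problem as a shortest closed walk on the $(n-1)$-cube and invokes the elementary lower bound $|S_1|+\sum_i|S_i\triangle S_{i+1}|+|S_k|\geq (k-1)+1=2^{n-1}$, which the reflected Gray code attains as a Hamiltonian cycle. This is cleaner and more structural: it replaces the paper's ad-hoc saturation bookkeeping with a standard graph-theoretic fact, and it makes transparent \emph{why} $2^{n-1}$ is the right number rather than merely verifying that nothing smaller is reachable.

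Your separation argument is the one place that needs tightening, and you correctly flag it. The observation that end-in-$0$ CNOTs target qubits in $\{1,\dots,n-1\}$ while end-in-$1$ CNOTs target qubit $n$ does rule out cross-group cancellations, but to conclude that the two minima add you should also note that interleaving can only \emph{reduce} the number of same-group cancellations (an intervening block with target $m$ may fail to commute with a $\mathrm{CNOT}(m,n)$ and thus block a target-$n$ cancellation that would otherwise occur). Hence in \emph{any} ordering the surviving target-$n$ CNOTs number at least $|S_{i_1}|+\sum_j|S_{i_j}\triangle S_{i_{j+1}}|+|S_{i_k}|\geq 2^{n-1}$ for the induced end-in-$1$ sub-order, and symmetrically for target-$<n$ CNOTs via the inductive hypothesis; summing gives the global lower bound. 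The paper's own proof is comparably informal on this separation point (it is stated in Proposition~4 but not argued for optimality), so your version is at least as rigorous once this remark is added.
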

\begin{proof}
This proof is built on logic by contradiction; suppose that there is a better ordering of the diagonal elements than the Gray Code along the linear sequence, i.e., an ordering that implies a larger number of pairs of simplified CNOTs. It is important to note that, depending on the chosen ordering, a higher number of simplified CNOTs can come from:
\begin{itemize}
\item[i.]a greater number of edges (between different diagonal elements) that present simplifications;
\item[ii.]a greater number of simplified pairs (therefore pairs of parameters $(m,m')$ juxtaposed) in the edges already considered.
\end{itemize}
The first step is to demonstrate that Gray Code sorting already uses the largest number of edges that admit simplifications between CNOT-pairs. The number of non-recursive diagonal elements, corresponding to new contributions or binary strings ending in 1, is exactly $2^{n-1}$; but among the latter, the first element (the one with the simplest binary string) cannot generate simplifications of CNOTs\footnote{Simplifications are prohibited for this element because it is the only one with no value for the parameter $m'$, see for instance Tables~\ref{tab:graycode_n3} and \ref{tab:graycode_n4}.} and must be subtracted, remaining with $2^{n-1}-1$ elements. This is the reason why this element is always placed at the end of the sequence of non-recursive diagonal elements and, as can be seen from Figures~\ref{fig:Zsimplified_n3} and \ref{fig:Zsimplified_n4}, allows the circuit to be compacted. Therefore, the maximum number of edges that could present simplifications is $2^{n-1}-2$, i.e., the number of elements suitable for simplifications minus one.

As can be deduced from Section~\ref{sec:gray_code}, in Gray-Code-based sorting the CNOT-simplifications correspond to pairs of 1's in the same qubit (see for example the pairs of 1-valued bits for the subsequence $U_{15}-U_{63}$ in Table~\ref{tab:graycode_n3}). Considering only rows ending in 1 except for the first one, cyclic order on the Gray Code matrix will encounter pairs of rows that always share at least one 1-valued qubit in addition to the last one, and thus also a simplification. For this reason, the Gray-Code-based method uses by definition the maximum number of edges with simplifications, for which the only way to have a larger number of simplifications is to consider case ii. Therefore, the discussion shifts to counting the total number of control-target pairs and how many times each parameter $m'$ produces a pair of CNOTs to be simplified.

Since $m'$ varies from 1 to $n-1$ and the number of 1s in each column of the binary table is equal to $2^{n-1}$ (see for example Tables~\ref{tab:mm'_n3} and \ref{tab:mm'_n4}), the number of parameters $m'$ useful for non-recursive pattern in each column is equal to $2^{n-2}$ and the total number of pairs $(m,m')$ is worth $(n-1)\cdot2^{n-2}$. Due to the linear sequence, each element can be used twice for the same index $m'$, i.e., one pair with a previous element and one with a following element, except for the elements at the ends of the sequence: this implies that each parameter $m'$ can be used at most $2^{n-2}-1$ times. With reference to Tables~\ref{tab:mm'_n3} and \ref{tab:mm'_n4}, Tables~\ref{tab:count_m'_n3} and \ref{tab:count_m'_n4}, respectively, show these counts in cases $n=3,4$ to clarify.
\begin{table}[htbp]
    \centering
    \begin{subtable}[t]{0.45\textwidth}
    \centering
        \begin{tabular}{|c||c|c|c|}
        \hline
        $m'\,(n=3)$&\mbox{available}&\mbox{used}&\mbox{max}\\
        \hline
        \hline
        1&2&1&\mbox{yes}\\
        2&2&1&\mbox{yes}\\
        \hline
        \end{tabular}
    \caption{Case $n=3$.}
    \label{tab:count_m'_n3}
    \end{subtable}
    \quad
    \begin{subtable}[t]{0.45\textwidth}
    \centering
        \begin{tabular}{|c||c|c|c|}
        \hline
        $m'\,(n=4)$&\mbox{available}&\mbox{used}&\mbox{max}\\
        \hline
        \hline
        1&4&3&\mbox{yes}\\
        2&4&3&\mbox{yes}\\
        3&4&2&\mbox{no}\\
        \hline
        \end{tabular}
    \caption{Case $n=4$.}
    \label{tab:count_m'_n4}
    \end{subtable}
\caption{Counts of parameters $m'$ used to form pairs of CNOT-simplifications.}
\label{tab:count_m'}
\end{table}
From the examples in Tables~\ref{tab:count_m'_n3} and \ref{tab:count_m'_n4}, it is clear that the Gray Code method exploits each index to the maximum only for some values of $n$ and with simple counts it is possible to highlight which indices could offer further simplifications\footnote{Actually, it is simple to prove that only for $n=3$ the method exploits each index to the maximum and from $n=4$ onwards the indices not fully used increase as $n$ increases.}. At this point, having a greater number of simplified pairs means having further control-target couplings involving the parameters $m'$ which are not used to the maximum, but this is impossible as all the elements suitable for these couplings have already been used a maximum number of times. To prove this latter sentence, it is useful to count the number of elements involved in the simplification pairs for each parameter $m'$ and also the number of elements shared by simplifications with different $m'$ indexes. In this way it can be shown how the saturation of the simplification pairs for the first parameters $m'$ (the ones used to the maximum at every $n$) reduces the number of elements available for couplings with other subsequent values of $m'$.

Considering the binary tables used for sorting, each parameter $m'$ takes into account an equal number of non-recursive diagonal elements, namely $2^{n-2}$. Since some binary strings have a multi-valued parameter $m'$, the corresponding elements will show different possible pairs of simplification. In this way, in counting the elements involved in the simplification pairs, some elements will be shared by different pairs of parameters $(m,m')$. The method illustrated below counts precisely these shared elements and shows that, for each index $m'$, the number of shared elements is exactly what allows using all the diagonal elements (and having simplifications between each of them). In the binary tables, like Table~\ref{tab:mm'_n3} and \ref{tab:mm'_n4}, $n-1$ columns represent the possible values of $m'$. To count the elements shared by two different simplification pairs the following recipe can be considered:
\begin{itemize}
\item[a)]the value $m'=1$ (the first column on the left) can be ignored as it is the first index to saturate;
\item[b)]for $2\leqslant m'\leqslant n-1$, a shared element is present if in correspondence with a 1 in column $m'$, there are other 1s associated with it in the columns to its left.
\end{itemize}
Then, having ordered the columns of the binary representation from 1 to $n$ and from left to right, the first column will have $2^0=1$ group of $2^{n-1}$ 1-valued bits, the second column will have $2^1=2$ groups of $2^{n-2}$ 1-valued bits each, the third column will have $2^2=4$ groups of $2^{n-3}$ 1-valued bits each, and so on. It is then possible to enumerate the binary strings that have a 1-valued bit in the i-th column and other 1-valued bits in the columns to its left, according to the following formula:
\begin{equation}
\begin{split}
    \sum_{i=1}^{n-2}\frac{2^{n-1}}{2\cdot2^i}(2^i-1)&=2^{n-2}(n-2)-2^{n-2}\left(\sum_{i=1}^{n-2}\frac{1}{2^i}\right)=\\
    &=2^{n-2}\left[(n-2)-\sum_{i=0}^{n-2}\frac{1}{2^i}+1\right]=2^{n-2}(n-3)+1
\end{split}
\end{equation}
The saturation of $(m,m')$ pairs available for subsequent simplifications is shown by the difference between the sum of all the elements involved by each index $m'$ and the number of elements shared by the pairs with different $m'$ along the \emph{linear} sequence:
\begin{equation}
    2^{n-2}(n-1)-2^{n-2}(n-3)-1=2^{n-1}-1
\end{equation}
that is exactly the number of non-recursive diagonal elements considered in the design of the scalable part of the $Z$-factor. In conclusion, a greater number of CNOT-simplifications is not possible since they should be sought in the $(m,m')$ pairs with non-saturated $m'$ values, but there are no elements available after the saturation of the first indexes $m'$. Indeed, it would imply other elements of the sequence in addition to the totality already considered (maximum number of edges with simplifications reached) or other couplings in addition to the possible ones already considered (once the first indices are saturated, there are no longer couplings available for not fully-used $m'$ indexes with the elements left in play). While the first scenario contradicts the initial assumption regarding the number of diagonal elements, the second one contradicts the linear sequence assumption.
\end{proof}

The scalable algorithm for the $Z$-factor circuit stated in Proposition~\ref{prop_Z_scalable}, optimized in terms of CNOT gates through a Gray-Code-ordering, is illustrated via pseudocode in Algorithm~\ref{alg1}.

\begin{algorithm}
\caption{\textbf{[$Z$-FACTOR]} Scalable CNOT-optimized $Z(\Theta_Z)$-factor circuit}
\label{alg1}
\begin{algorithmic}[1]
\STATE\textbf{Provided:} $CNOT(c,t)$ where $c$ stands for control and $t$ for target; $R_z(\theta_a)$; an algorithm that generates the binary matrices where the 1s indicate the only bit that changes line after line in the Gray Code recursion, called \emph{CNOT-remaining-Gray} (see Table~\ref{tab:graycode_n3} and Figure~\ref{fig:graycode_n3}); the usual notation where the first qubit is labeled with 0.
\STATE\textbf{Input:} number of qubits $n>2$; set of parameters $\theta_a\in\Theta_Z$; output of \emph{CNOT-remaining-Gray}.
\STATE\textbf{Output:} circuit for the $Z(\Theta_Z)$-factor, optimized in terms of CNOT gates.
\FORALL{$k$ such that $0\leq k<n-1$}
\STATE Consider the output of \emph{CNOT-remaining-Gray} in the reversed order, i.e., the binary matrix $n-2-k$
\FOR{row $i$ in matrix $n-2-k$}
\FOR{column $j$ in row $i$}
\IF{element $(i,j)$ equals 1}
\STATE $CNOT(j,n-1-k)$
\STATE $R_z(\theta_{a_i})$ on qubit $n-1-k$
\ELSE
\STATE Pass
\ENDIF
\ENDFOR
\ENDFOR
\ENDFOR
\STATE $R_z(\theta_{a_{2^n-1}})$ on qubit 0
\end{algorithmic}
\end{algorithm}

\subsubsection{Solution for n=4}
As for 3-qubit systems, the starting point is to identify the SRBB diagonal elements which participate in the construction of the $Z$-factor. The latter belong to the set $$
\zeta_4=\{3,8,15,24,35,48,63,80,99,120,143,168,195,224,255\}
$$
and Table~\ref{tab:mm'_n4} sorts them according to the binary representation of their position in the set.
\begin{table}[htbp]
\centering
\resizebox{0.5\textwidth}{!}{
\begin{tabular}{|c||c|c|c|c|c|c|c|}
\hline
Position&$1^{\circ}\,$q&$2^{\circ}\,$q&$3^{\circ}\,$q&$4^{\circ}\,$q&Element&$m$&$m'$\\
\hline
\hline
1&0&0&0&1&$U_3$&4&\\
2&0&0&1&0&$U_8$&3&\\
3&0&0&1&1&$U_{15}$&4&3\\
4&0&1&0&0&$U_{24}$&2&\\
5&0&1&0&1&$U_{35}$&4&2\\
6&0&1&1&0&$U_{48}$&3&2\\
7&0&1&1&1&$U_{63}$&4&2,3\\
8&1&0&0&0&$U_{80}$&1&\\
9&1&0&0&1&$U_{99}$&4&1\\
10&1&0&1&0&$U_{120}$&3&1\\
11&1&0&1&1&$U_{143}$&4&1,3\\
12&1&1&0&0&$U_{168}$&2&1\\
13&1&1&0&1&$U_{195}$&4&1,2\\
14&1&1&1&0&$U_{224}$&3&1,2\\
15&1&1&1&1&$U_{255}$&4&1,2,3\\
\hline
\end{tabular}}
\caption{Pairs of parameters $(m,m')$ to build the $Z(\Theta_Z)$-factor circuit for $n=4$.}
\label{tab:mm'_n4}
\end{table}
Then, the cyclic Gray Code matrix for $n=4$ is constructed, and its rows ending with 1 are highlighted in green, as illustrated in Table~\ref{tab:graycode_n4}.
\begin{table}[htbp]
\centering
\resizebox{0.6\textwidth}{!}{
\begin{tabular}{|c||c|c||c|c|c||c|c|c|c|c|}
\hline
$n=1$&\multicolumn{2}{c||}{$n=2$}&\multicolumn{3}{c||}{$n=3$}&\multicolumn{4}{c|}{$n=4$}&\multirow{2}*{Element}\\
\cline{1-10}
$1^{\circ}\,$q&$1^{\circ}\,$q&$2^{\circ}\,$q&$1^{\circ}\,$q&$2^{\circ}\,$q&$3^{\circ}\,$q&$1^{\circ}\,$q&$2^{\circ}\,$q&$3^{\circ}\,$q&$4^{\circ}\,$q&\\
\hline
\hline
&&&&&&0&0&0&0&-\\
&&&&&&\textcolor{green}{0}&\textcolor{green}{0}&\textcolor{green}{0}&\textcolor{green}{1}&$U_{3}$\\
&&&&&&\textcolor{green}{0}&\textcolor{green}{0}&\textcolor{green}{1}&\textcolor{green}{1}&$U_{15}$\\
&&&&&&0&0&1&0&-\\
&&&0&0&0&0&1&1&0&-\\
&&&0&0&1&\textcolor{green}{0}&\textcolor{green}{1}&\textcolor{green}{1}&\textcolor{green}{1}&$U_{63}$\\
&0&0&0&1&1&\textcolor{green}{0}&\textcolor{green}{1}&\textcolor{green}{0}&\textcolor{green}{1}&$U_{35}$\\
0&0&1&0&1&0&0&1&0&0&-\\
\hline
1&1&1&1&1&0&1&1&0&0&-\\
&1&0&1&1&1&\textcolor{green}{1}&\textcolor{green}{1}&\textcolor{green}{0}&\textcolor{green}{1}&$U_{195}$\\
&&&1&0&1&\textcolor{green}{1}&\textcolor{green}{1}&\textcolor{green}{1}&\textcolor{green}{1}&$U_{255}$\\
&&&1&0&0&1&1&1&0&-\\
&&&&&&1&0&1&0&-\\
&&&&&&\textcolor{green}{1}&\textcolor{green}{0}&\textcolor{green}{1}&\textcolor{green}{1}&$U_{143}$\\
&&&&&&\textcolor{green}{1}&\textcolor{green}{0}&\textcolor{green}{0}&\textcolor{green}{1}&$U_{99}$\\
&&&&&&1&0&0&0&-\\
\hline
\end{tabular}}
\caption{Cyclic Gray Code matrices.}
\label{tab:graycode_n4}
\end{table}
Finally, reading the table from top to bottom, two lines at a time, it is possible to directly draw the circuit simplified in terms of CNOTs. In Figure~\ref{fig:graycode_n4}, the optimal sequence of diagonal elements is illustrated (simplified gates are marked in red for clarity) and the resulting circuit for the $Z$-factor of the $n=4$ case is shown in Figure~\ref{fig:Zsimplified_n4}.
\begin{figure}[htbp]
\centering
\resizebox{0.9\textwidth}{!}{
\begin{subfigure}[b]{1.3\textwidth}
    \centering
    \begin{quantikz}
        \lstick{$1^{\circ}$q}&\gategroup[4,steps=3,style={dashed,rounded corners},label style={label position=below,anchor=north,yshift=-0.3cm}]{$U_{15}$}&&&\gategroup[4,steps=5,style={dashed,rounded corners},label style={label position=below,anchor=north,yshift=-0.3cm}]{$U_{63}$}&&&&&\gategroup[4,steps=3,style={dashed,rounded corners},label style={label position=below,anchor=north,yshift=-0.3cm}]{$U_{35}$}&&&\ctrl{3}\gategroup[4,steps=5,style={dashed,rounded corners},label style={label position=below,anchor=north,yshift=-0.3cm}]{$U_{195}$}&&&&\ctrl[style={red}]{3}&\rstick{...}\\
        \lstick{$2^{\circ}$q}&&&&\ctrl{2}&&&&\ctrl[style={red}]{2}&\ctrl[style={red}]{2}&&\ctrl[style={red}]{2}&&\ctrl[style={red}]{2}&&\ctrl[style={red}]{2}&&\rstick{...}\\
        \lstick{$3^{\circ}$q}&\ctrl{1}&&\ctrl[style={red}]{1}&&\ctrl[style={red}]{1}&&\ctrl{1}&&&&&&&&&&\rstick{...}\\
        \lstick{$4^{\circ}$q}&\targ{}&\gate{R_z}&\targ[style={red}]{}&\targ{}&\targ[style={red}]{}&\gate{R_z}&\targ{}&\targ[style={red}]{}&\targ[style={red}]{}&\gate{R_z}&\targ[style={red}]{}&\targ{}&\targ[style={red}]{}&\gate{R_z}&\targ[style={red}]{}&\targ[style={red}]{}&\rstick{...}
    \end{quantikz}
\end{subfigure}}
\resizebox{0.9\textwidth}{!}{
\begin{subfigure}[b]{1.3\textwidth}
    \centering
    \begin{quantikz}
        \lstick{...}&\ctrl[style={red}]{3}\gategroup[4,steps=7,style={dashed,rounded corners},label style={label position=below,anchor=north,yshift=-0.3cm}]{$U_{255}$}&&&&&&\ctrl[style={red}]{3}&\ctrl[style={red}]{3}\gategroup[4,steps=5,style={dashed,rounded corners},label style={label position=below,anchor=north,yshift=-0.3cm}]{$U_{143}$}&&&&\ctrl[style={red}]{3}&\ctrl[style={red}]{3}\gategroup[4,steps=3,style={dashed,rounded corners},label style={label position=below,anchor=north,yshift=-0.3cm}]{$U_{99}$}&&\ctrl{3}&\gategroup[4,steps=1,style={dashed,rounded corners},label style={label position=below,anchor=north,yshift=-0.3cm}]{$U_{3}$}&\\
        \lstick{...}&&\ctrl[style={red}]{2}&&&&\ctrl{2}&&&&&&&&&&&\\
        \lstick{...}&&&\ctrl{1}&&\ctrl[style={red}]{1}&&&&\ctrl[style={red}]{1}&&\ctrl{1}&&&&&&\\
        \lstick{...}&\targ[style={red}]{}&\targ[style={red}]{}&\targ{}&\gate{R_z}&\targ[style={red}]{}&\targ{}&\targ[style={red}]{}&\targ[style={red}]{}&\targ[style={red}]{}&\gate{R_z}&\targ{}&\targ[style={red}]{}&\targ[style={red}]{}&\gate{R_z}&\targ{}&\gate{R_z}&
    \end{quantikz}
\end{subfigure}}
\caption{Simplifications (red gates) coming from new diagonal contributions only.}
\label{fig:graycode_n4}
\end{figure}

\begin{figure}[htbp]
\centering
\resizebox{1\textwidth}{!}{
\begin{subfigure}[b]{1.5\textwidth}
    \centering
    \begin{quantikz}
        \lstick{$1^{\circ}$q}&\gategroup[4,steps=16,style={dashed,rounded corners},label style={label position=above,anchor=north,yshift=+0.3cm}]{Gray-Code-scalable $Z$-factor circuit}&&&&&&\ctrl{3}&&&&&&&&\ctrl{3}&&\rstick{...}\\
        \lstick{$2^{\circ}$q}&&&\ctrl{2}&&&&&&&&\ctrl{2}&&&&&&\rstick{...}\\
        \lstick{$3^{\circ}$q}&\ctrl{1}&&&&\ctrl{1}&&&&\ctrl{1}&&&&\ctrl{1}&&&&\rstick{...}\\
        \lstick{$4^{\circ}$q}&\targ{}&\gate{R_z}&\targ{}&\gate{R_z}&\targ{}&\gate{R_z}&\targ{}&\gate{R_z}&\targ{}&\gate{R_z}&\targ{}&\gate{R_z}&\targ{}&\gate{R_z}&\targ{}&\gate{R_z}&\rstick{...}
    \end{quantikz}
\end{subfigure}}
\resizebox{1\textwidth}{!}{
\begin{subfigure}[b]{1.5\textwidth}
    \centering
    \begin{quantikz}
        \lstick{...}&\gategroup[3,steps=11,style={dashed,rounded corners},label style={label position=above,anchor=north,yshift=+0.3cm}]{$n=3$ $Z$-factor circuit}&&\ctrl{2}&&&&\ctrl{2}&\ctrl{1}&&\ctrl{1}&\gate{R_z}&\\
        \lstick{...}&\ctrl{1}&&&&\ctrl{1}&&&\targ{}&\gate{R_z}&\targ{}&\gate{R_z}&\\
        \lstick{...}&\targ{}&\gate{R_z}&\targ{}&\gate{R_z}&\targ{}&\gate{R_z}&\targ{}&\gate{R_z}&&&&\\
        \lstick{...}&&&&&&&&&&&&
    \end{quantikz}
\end{subfigure}}
\caption{The CNOT-optimezed $Z(\Theta_Z)$-factor circuit for $n=4$.}
\label{fig:Zsimplified_n4}
\end{figure}

\subsection{Simplifying CNOTs between even/odd contributions}\label{sec:simplifications_evenodd}
For an $n$-qubit register, there are exactly $2^{n-1}-1$ factors of type $\prod T_x^e$ that come into play in the construction of the $\Psi(\Theta_\Psi)$-factor~(\ref{eqn:def_Psi}) and the same number for $\prod T_x^o$ that come into play in the construction of the $\Phi(\Theta_\Phi)$-factor~(\ref{eqn:def_Phi}). These factors are denoted as the edges of the $\Psi/\Phi$-factor, respectively, because they flank, both from the right and from the left, each sub-factor $M_x^{e/o}$ that appears inside it. Precisely between these edges, simplifications of CNOT gates will be found. As mentioned in Section~\ref{sec:srbb}, they are constructed as the product of the disjoint transpositions of the set of permutations $P_{2^n}^{even}$ and $P_{2^n}^{odd}$. For instance, the $n=3$ case under consideration has to take into account the sets $P_8^{even}=\{P_{(2,4)},P_{(2,6)},P_{(2,8)},P_{(4,6)},P_{(4,8)},P_{(6,8)}\}$ and $P_8^{odd}=\{P_{(2,3)},P_{(2,5)},P_{(2,7)},P_{(4,5)},P_{(4,7)},P_{(6,7)}\}$. These sets are divided into the $2^{n-1}-1=3$ subsets of disjoint transpositions, each of which contains $2^{n-2}=2$ elements:
\begin{equation}
\begin{aligned}
    \prod T_1^e&=P_{(2,4)}P_{(6,8)}&\prod T_1^o&=P_{(2,3)}P_{(6,7)}\\
    \prod T_2^e&=P_{(2,6)}P_{(4,8)}&\prod T_2^o&=P_{(2,5)}P_{(4,7)}\\
    \prod T_3^e&=P_{(2,8)}P_{(4,6)}&\prod T_3^o&=P_{(2,7)}P_{(4,5)}
\end{aligned}
\end{equation}
Similarly to the case of the Z-factor previously described, the simplifications occur due to the juxtaposition of CNOTs with the same control-target pair. Therefore, keeping in mind the general structure~(\ref{eqn:U_approx}) for one single layer ($l=1$), sequences of this type will appear within the $\Psi$-factor:
\begin{equation}
\begin{split}
&\mathcal{U}_{approx}=Z(\Theta_Z^1)[A_\Psi]\cdot\\
&\cdot\left[\left(\prod T_1^e\right)M_1^e\left(\prod T_1^e\right)\left(\prod T_2^e\right)M_2^e\left(\prod T_2^e\right)\left(\prod T_3^e\right)M_3^e\left(\prod T_3^e\right)\right]\Phi(\Theta_\Phi^1)
\end{split}
\end{equation}
Simplifications can only occur between 2 edges of different macro-factors, as happens for $\prod T_2^e$ and $\prod T_3^e$ in the $n=3$ case. However, in the perspective of defining a general (and scalable) rule to find between which edges it is possible to simplify, a final binary scheme is mandatory. The latter consists of the binary table obtained by arranging the binary representation of numbers from 1 to $2^{n-1}-1=3$ in rows and considering $n-1$ qubits as columns. Then, reading the table from bottom to top (so as to follow the logical order in which the factors appear in the circuit), and comparing two consecutive lines at a time, a simplification occurs if the pair of binary strings has a 1 on the same bit. Consequently, the remaining CNOT gates are obtained by considering as targets the qubits corresponding to the bits which in the two strings takes on different values, while the control is always in the $n$-th qubit. Figure~\ref{fig:prodTe_n3} illustrates this last step, highlighting in red the simplifications and in green the remaining CNOTs.
\begin{figure}[!h]
\centering
\begin{subfigure}[b]{0.4\textwidth}
    \centering
    \begin{tabular}{|c||c|c|c|}
    \hline
    $x$&$1^{\circ}\,$q&$2^{\circ}\,$q&sub-factor\\
    \hline
    \hline
    1&0&1&$\prod T_1^e$\\
    2&\textcolor{red}{1}&0&$\prod T_2^e$\\
    3&\textcolor{red}{1}&\textcolor{green}{1}&$\prod T_3^e$\\
    \hline
    \end{tabular}
\end{subfigure}
\begin{subfigure}[b]{0.4\textwidth}
    \centering
    \begin{quantikz}
    \lstick{$1^{\circ}$q}&\targ[style={red}]{}&\slice[style={black}]{$\prod T_3^e\prod T_2^e$}&\targ[style={red}]{}&\\
    \lstick{$2^{\circ}$q}&&\targ[style={green}]{}&&\\
    \lstick{$3^{\circ}$q}&\ctrl[style={red}]{-2}&\ctrl[style={green}]{-1}&\ctrl[style={red}]{-2}&
    \end{quantikz}
\end{subfigure}
\caption{Simplification scheme for $\prod T_x^e$ factors in the $n=3$ case; only the pair of edges $\prod T_3^e\prod T_2^e$ shows simplifications.}
\label{fig:prodTe_n3}
\end{figure}
\begin{prop}\label{prop_prodT_scalable}
    A CNOT-simplification between $\prod T_x^{e/o}$ factors occurs if, in the correct logical order, these factors show pairs of bits with value 1 at the same position in the binary table that sorts them.
\end{prop}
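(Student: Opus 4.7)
The plan is to reduce the statement to a concrete combinatorial fact about the CNOT-sequence implementation of the transposition products $\prod T_x^{e/o}$, and then invoke the elementary identity $\mathrm{CNOT}\cdot\mathrm{CNOT}=\mathbb{I}$ for juxtaposed gates. First, I would invoke the explicit CNOT-sequence representation of each $\prod T_x^{e/o}$ developed in Appendix~\ref{sec:cnot_sequences}. Because $\prod T_x^{e/o}$ is a product of $2^{n-2}$ disjoint transpositions $P_{(\alpha,\beta)}$, its circuit has a symmetric layered pattern in which the outermost CNOTs (those closest to the adjacent factor in the logical order) have control on the $n$-th qubit and target on the qubit indexed by a 1-valued bit of the binary string representing $x$, exactly as tabulated in the correspondence used for the $n=3$ example in Figure~\ref{fig:prodTe_n3}.

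Second, I would analyse what happens when two consecutive edges $\prod T_x^{e/o}$ and $\prod T_{x+1}^{e/o}$ appear adjacent in the overall circuit. By the previous Proposition (for $n\geqslant 3$) they are truly adjacent in time, since the intervening $M_x^{e/o}$ block begins and ends with rotations that do not absorb CNOTs. The outermost CNOTs of $\prod T_x^{e/o}$ and $\prod T_{x+1}^{e/o}$ meet in a single circuit slice. All such CNOTs share the $n$-th qubit as control and act on disjoint targets; hence they mutually commute, and any two of them with identical control-target pair cancel. This is the only mechanism available for simplification between edges.

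Third, I would translate the cancellation criterion into the combinatorial statement. A CNOT on target $j$ appears at the boundary of $\prod T_x^{e/o}$ iff bit $j$ of the binary string of $x$ equals 1, and symmetrically for $\prod T_{x+1}^{e/o}$. Therefore a pair of CNOTs cancels on qubit $j$ iff both strings carry a 1 at position $j$, which is precisely the condition of the proposition. The remaining CNOTs (positions where the two strings differ) survive, matching the diagrammatic rule illustrated in Figure~\ref{fig:prodTe_n3}.

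The main obstacle is the structural bookkeeping in step one: one must verify, using the construction of Appendix~\ref{sec:cnot_sequences}, that the disjoint-transposition decomposition of $\prod T_x^{e/o}$ really produces a symmetric CNOT pattern whose extremal layer coincides with the set of $\mathrm{CNOT}_{(n,j)}$ for $j$ ranging over the 1-bits of $x$, with no extraneous gates that could obstruct or mimic cancellation. Once this structural fact is in place, the commutativity and cancellation argument is immediate and the criterion follows at once.
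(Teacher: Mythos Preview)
Your approach is essentially the paper's: both reduce the claim to the explicit CNOT realization of $\prod T_x^{e/o}$ from Appendix~\ref{sec:cnot_sequences} together with the identity $\mathrm{CNOT}\cdot\mathrm{CNOT}=\mathbb{I}$. The paper's own proof is extremely terse---it just asserts that the $n=3$ example generalizes via (i) the Appendix~\ref{sec:cnot_sequences} correspondence and (ii) properties of binary representation---so your commutativity-and-cancellation elaboration supplies detail the paper leaves implicit.

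Two structural descriptions in your sketch are inaccurate, though they do not break the argument. First, there is no ``intervening $M_x^{e/o}$ block'' between the two edges: in the expansion $\cdots M_x\,(\prod T_x)(\prod T_{x+1})\,M_{x+1}\cdots$ the two $\prod T$ factors are directly juxtaposed by construction; the previous proposition is needed only to rule out simplifications \emph{into} the flanking $M$-blocks, not to clear a path between the edges. Second, your ``symmetric layered pattern with outermost $\mathrm{CNOT}_{(n,j)}$'' does not match Appendix~\ref{sec:cnot_sequences}: in the even case $\prod T_x^e$ is simply a flat string of mutually commuting $\mathrm{CNOT}_{(n,j)}$ gates (no layering needed), while in the odd case the outermost gates are the extra $\mathrm{CNOT}_{(k,n)}$ pair with \emph{target} on the $n$-th qubit. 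Your stated criterion still holds, but the odd-case commutation must treat these $k$-index CNOTs separately, exactly as the paper does in the subsequent proposition and in Figure~\ref{fig:prodTo_n3}.
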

\begin{proof}
    The specific case for $n=3$, described above as an example to clarify the conditions of simplification, has a completely general value since it is based on:
    \begin{itemize}
    \item[i)]the relationship between binary representation and the circuit implementation of $\prod T_x^{e/o}$ contributions (defined in Appendix~\ref{sec:cnot_sequences});
    \item[ii)]the properties of the binary representation valid for any $n$.
    \end{itemize}
\end{proof}

For the $\prod T_x^o$ factors of the odd case, the considerations are completely similar and the same tabular scheme is used with one only difference: once a pair of binary strings appropriate for simplification has been found, the leftmost pair of bits equal to 1 determines the value of the parameter $k$, which only for the $\prod T_x^o$ factors implies two CNOTs where the control-target pair is reversed. Figure~\ref{fig:prodTo_n3} illustrates this last step, highlighting in red the simplifications and in green the remaining CNOTs. Besides, it is possible to simplify the search for edges suitable for simplifications by restricting it to the even case, which turns out to be faster due to the absence of the parameter $k$, as stated by the following proposition.
\begin{prop}
   Once the pairs of edges with CNOT simplifications of the even case have been found, in the odd case the same pairs will result.
\end{prop}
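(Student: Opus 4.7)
The plan is to reduce the statement to the previous proposition, which gives a purely binary criterion on the sorting index $x$ for when two consecutive edges $\prod T_{x_1}^{e/o}$ and $\prod T_{x_2}^{e/o}$ admit CNOT-simplifications. The first step is to make explicit that both the even family $\{\prod T_x^e\}_{x=1}^{2^{n-1}-1}$ and the odd family $\{\prod T_x^o\}_{x=1}^{2^{n-1}-1}$ are indexed by exactly the same set and arranged by the same binary table on $n-1$ columns. The simplification criterion stated in the previous proposition is formulated only in terms of the bits of those sorting indices and does not reference the parity label $e/o$; applied to both cases it selects the same set of pairs $(x_1,x_2)$, and hence the same pairs of edges.

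The second step is to check that the $k$-parameter, which appears only in the odd construction (see Appendix~\ref{sec:cnot_sequences}), does not affect which pairs qualify. The $k$-induced reversed control-target CNOT pair sits at the boundary of each $\prod T_x^o$ block but is sandwiched by the same outer CNOTs as in $\prod T_x^e$ (those arising directly from the binary encoding of $x$). Therefore, the outermost gates that meet at the interface between $\prod T_{x_1}^o$ and $\prod T_{x_2}^o$ coincide with the outermost gates of the even counterpart, and cancellation occurs under exactly the same conditions. The $k$-pairs only affect the form of the non-cancelling residual CNOTs inside each block, not the eligibility of the pair of edges.

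The main obstacle will be this last step: a clean, general argument that the $k$-induced CNOTs never create a spurious extra simplification between adjacent odd edges nor block an existing one. I expect this to follow from the observation that the $k$-CNOTs have the opposite control-target orientation with respect to the outer CNOTs (as witnessed in Figure~\ref{fig:prodT1o_diagram_n2}), so they cannot annihilate each other across the boundary and cannot interact with the outer CNOT layer responsible for inter-edge cancellations. Verifying this carefully, via the explicit recipe for $\prod T_x^o$ from the appendix and by inspecting the four parity combinations of $(\alpha,\beta)$ analysed in the proof of Proposition~\ref{prop:n2}, is where the argument will spend most of its effort; once established, the stated equality of eligible pairs in the even and odd cases follows as a direct corollary of the preceding proposition.
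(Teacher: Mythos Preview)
Your Step~1 is correct and matches the paper's observation that both families are ordered by the same binary table, so the bit-matching criterion of the previous proposition applies uniformly.

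Step~2, however, rests on a structural misreading. By the recipe in Appendix~\ref{sec:cnot_sequences}, $\prod T_x^o = CNOT_{(k,n)}\,\prod T_x^e\,CNOT_{(k,n)}$: the $k$-CNOTs are the \emph{outermost} gates of each odd block, not sandwiched inside. Hence at the interface between $\prod T_{x+1}^o$ and $\prod T_{x}^o$ the gates that actually meet are $CNOT_{(k_{x+1},n)}$ and $CNOT_{(k_x,n)}$, not the even-type CNOTs $CNOT_{(n,j)}$. Your claim that ``the outermost gates \ldots\ coincide with the outermost gates of the even counterpart'' is therefore false, and with it the assertion that the $k$-CNOTs ``cannot annihilate each other across the boundary'' --- in fact they can, and this is precisely what must be shown.

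The paper's short argument is the opposite of yours: whenever the pair qualifies for an even simplification, the two $k$-CNOTs at the border \emph{do} cancel (an extra simplified pair), after which the even-type CNOTs become adjacent and the even simplification carries over. The implicit lemma behind this is that if two consecutive integers $x$ and $x+1$ share a $1$-bit in any column of the binary table, then they share the leftmost $1$-bit, so $k_x = k_{x+1}$. This is easy to verify from the carry structure of binary increment, and it is what your Step~2 should establish instead.
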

\begin{proof}
    Both even and odd cases rely on the same binary table, in which the $\prod T_x^{e/o}$ factors are only ordered. The presence of the parameter $k$ does not modify this ordering, but only adds a pair of CNOT gates with control-target qubits reversed, before and after the structure of the even case, for each value of $x$. Thus, in addition to the simplified pairs of the even case, there will always be an additional simplified pair of CNOTs, the ones placed side by side in the simplifying border.
\end{proof}
\begin{figure}[!h]
\centering
\begin{subfigure}[b]{0.4\textwidth}
    \centering
    \begin{tabular}{|c||c|c|c|c|}
    \hline
    $x$&$1^{\circ}\,$q&$2^{\circ}\,$q&sub-factor&$k$\\
    \hline
    \hline
    1&0&1&$\prod T_1^e$&2\\
    2&\textcolor{red}{1}&0&$\prod T_2^e$&1\\
    3&\textcolor{red}{1}&\textcolor{green}{1}&$\prod T_3^e$&1\\
    \hline
    \end{tabular}
\end{subfigure}
\resizebox{0.55\textwidth}{!}{
\begin{subfigure}[b]{0.6\textwidth}
    \centering
    \begin{quantikz}
    \lstick{$1^{\circ}$q}&\ctrl[style={green}]{2}&\targ[style={red}]{}&&\ctrl[style={red}]{2}\slice[style={black}]{$\prod T_3^o\prod T_2^o$}&\ctrl[style={red}]{2}&\targ[style={red}]{}&\ctrl[style={green}]{2}&\\
    \lstick{$2^{\circ}$q}&&&\targ[style={green}]{}&&&&&\\
    \lstick{$3^{\circ}$q}&\targ[style={green}]{}&\ctrl[style={red}]{-2}&\ctrl[style={green}]{-1}&\targ[style={red}]{}&\targ[style={red}]{}&\ctrl[style={red}]{-2}&\targ[style={green}]{}&
    \end{quantikz}
\end{subfigure}}
\caption{Simplification scheme for $\prod T_x^o$ factors in the $n=3$ case; only the pair of edges $\prod T_3^o\prod T_2^o$ shows simplifications.}
\label{fig:prodTo_n3}
\end{figure}

Proposition~\ref{prop_prodT_scalable} identifies a scalable method for the $\Psi/\Phi$-factor design, optimized in terms of CNOT gates thanks to simple binary matrices, as illustrated via pseudocode in Algorithm~\ref{alg2}, \ref{alg3}, \ref{alg4} and \ref{alg5}.

\begin{algorithm}
\caption{\textbf{[PERMUTATION]} Circuit for a single $\prod T_x^{e/o}$ factor, $1\leqslant x\leqslant2^{n-1}-1$, parity = \{even, odd\}}
\label{alg2}
\begin{algorithmic}[1]
\STATE\textbf{Provided:} $CNOT(c,t)$ where $c$ stands for control and $t$ for target; the usual notation where the first qubit is labeled with 0.
\STATE\textbf{Input:} binary array corresponding to the $\prod T_x^{e/o}$ factor (see Figure~\ref{fig:prodTe_n3}); position $x$ in the sequence of $\prod T_x^{e/o}$ factors; array with $k$-index (see Figure~\ref{fig:prodTo_n3}). 
\STATE\textbf{Output:} circuit of the $\prod T_x^{e/o}$ factor for a particular position $x$.
\IF{parity = even}
\FOR{element $i$ in binary array}
\IF{element $i$ equals 1}
\STATE $CNOT(n-1,i)$
\ELSE
\STATE Pass
\ENDIF
\ENDFOR
\ELSE
\STATE Consider the element $x$ of the array with $k$-index, i.e., $k_x$
\STATE $CNOT(k_x,n-1)$
\STATE Run instructions for parity = even
\STATE $CNOT(k_x,n-1)$
\ENDIF
\end{algorithmic}
\end{algorithm}

\begin{algorithm}
\caption{\textbf{[$\Psi$-FACTOR-MAIN]} Scalable scheme for CNOT-simplifications within the alternating sequence of $\prod T_x^e$ and $M_x^e$ factors, $1\leqslant x\leqslant2^{n-1}-1$}
\label{alg3}
\begin{algorithmic}[1]
\STATE\textbf{Provided:} \emph{CNOT-remaining-Gray} (see Algorithm $Z$-FACTOR); an algorithm that generates the binary matrix, obtained by comparing two rows at a time of the \emph{reverse} binary matrix that orders $\prod T_x^e$ factors (see Figure~\ref{fig:prodTe_n3}), where the 1s indicate the positions in which the bits have different values (called \emph{CNOT-remaining-EvenEdges}); a scalable algorithm for $M_x^e$ circuits, called \emph{Me-factor-circuit}.
\STATE\textbf{Input:} set of parameters $\theta_a\in\Theta_\Psi$; output of \emph{CNOT-remaining-Gray}; output of \emph{CNOT-remaining-EvenEdges}; output of \emph{Me-factor-circuit}.
\STATE\textbf{Output:} circuit for the 2nd sub-factor of $\Psi$, in the correct logic order but without $\prod T_{x_{max}}^e$, $\prod T_1^e$ and $M_{x_{max}}^e$, optimized in terms of CNOT gates.
\STATE Consider the output of \emph{CNOT-remaining-EvenEdges}, i.e., a binary matrix
\FOR{row $i$ in binary matrix}
\FOR{column $j$ in row $i$}\label{instr6}
\IF{element $(i,j)$ equals 1}
\STATE $CNOT(n-1,j)$
\ELSE
\STATE Pass
\ENDIF
\ENDFOR\label{instr12}
\STATE Run \emph{Me-factor-circuit}\label{instr_to_eliminate}
\ENDFOR
\end{algorithmic}
\end{algorithm}

\begin{algorithm}
\caption{\textbf{[$\Phi$-FACTOR-MAIN]} Scalable scheme for CNOT-simplifications within the alternating sequence of $\prod T_x^o$ and $M_x^o$ factors, $1\leqslant x\leqslant2^{n-1}-1$}
\label{alg4}
\begin{algorithmic}[1]
\STATE\textbf{Provided:} \emph{CNOT-remaining-Gray} (see Algorithm $Z$-FACTOR); an algorithm that generates the matrix, obtained by comparing two rows at a time of the \emph{reverse} binary matrix that orders $\prod T_x^o$ factors (see Figure~\ref{fig:prodTo_n3}), where the 1s indicate the positions in which the bits have different values, the 2s the positions in which the bits have value equals to 1, the 0s the remaining positions (called \emph{CNOT-remaining-OddEdges}); a scalable algorithm for $M_x^o$ circuits, called \emph{Mo-factor-circuit}; Algorithm $\Psi$-FACTOR-MAIN.
\STATE\textbf{Input:} set of parameters $\theta_a\in\Theta_\Phi$; output of \emph{CNOT-remaining-Gray}; output of \emph{CNOT-remaining-OddEdges}; output of \emph{Mo-factor-circuit}; array with $k$-index (see Figure~\ref{fig:prodTo_n3}); output of Algorithm \ref{alg3} $\Psi$-FACTOR-MAIN.
\STATE\textbf{Output:} circuit for the $\Phi$ factor, in the correct logic order but without $\prod T_{x_{max}}^o$, $\prod T_1^o$ and $M_{x_{max}}^o$, optimized in terms of CNOT gates.
\STATE Consider the output of \emph{CNOT-remaining-OddEdges, i.e., a matrix with entries \{0,1,2\}}
\FOR{row $i$ in matrix}
\FOR{column $j$ in row $i$}
\IF{element $(i,j)$ equals 2}
\STATE Consider the element $i$ of the array with $k$-index, i.e., $k_i$
\STATE $CNOT(k_i,n-1)$
\STATE Run instructions from \ref{instr6} to \ref{instr12} of Algorithm $\Psi$-FACTOR-MAIN (with the output of \emph{CNOT-remaining-EvenEdges})
\STATE $CNOT(k_i,n-1)$
\STATE Run \emph{Mo-factor-circuit}
\STATE Break
\ELSIF{no element $(i,j)$ equals 2}
\STATE This is an edge without CNOT-simplifications
\STATE Run Algorithm PERMUTATION for $x=i$ and parity = odd
\STATE Run Algorithm PERMUTATION for $x=i+1$ and parity = odd
\STATE Run \emph{Mo-factor-circuit}
\STATE Break
\ENDIF
\ENDFOR
\ENDFOR
\end{algorithmic}
\end{algorithm}

\begin{algorithm}
\caption{\textbf{[$\Psi/\Phi$-FACTORS]} Scalable CNOT-optimized $\Psi(\Theta_\Psi)/\Phi(\Theta_\Phi)$-factor circuit, parity = \{even, odd\}}
\label{alg5}
\begin{algorithmic}
\STATE\textbf{Provided:} Algorithm PERMUTATION, Algorithm $\Psi$-FACTOR-MAIN and Algorithm $\Phi$-FACTOR-MAIN.
\STATE\textbf{Input:} set of parameters $\theta_a\in\{\Theta_\Psi\cup\Theta_\Phi\}$; $x_{max}=2^{n-1}-1$; outputs of Algorithm PERMUTATION, Algorithm $\Psi$-FACTOR-MAIN and Algorithm $\Phi$-FACTOR-MAIN.
\STATE\textbf{Output:} $\Psi$-factor and $\Phi$-factor circuits, in the correct logic order and optimized in terms of CNOT gates.
\IF{parity = even}
\STATE Run Algorithm PERMUTATION for $x=x_{max}$ and parity = even
\STATE Run \emph{Me-factor-circuit} (for $x=x_{max})$
\STATE Run Algorithm $\Psi$-FACTOR-MAIN
\STATE Run Algorithm PERMUTATION for $x=1$ and parity = even
\STATE Run \emph{Me-factor-circuit} (for the 1st sub-factor of $\Psi$)
\ELSE
\STATE Run Algorithm PERMUTATION for $x=x_{max}$ and parity = odd
\STATE Run \emph{Mo-factor-circuit} (for $x=x_{max}$)
\STATE Run Algorithm $\Phi$-FACTOR-MAIN
\STATE Run Algorithm \ref{alg2} PERMUTATION (for $x=1$ and parity = odd)
\ENDIF
\end{algorithmic}
\end{algorithm}

\subsubsection{Solution for n=4}
In the case of 4-qubit systems, there are exactly $2^{n-1}-1=7$ factors of type $\prod T_x^{e}$ inside $\Psi(\Theta_\Psi)$ and the same number of type $\prod T_x^{o}$ inside $\Phi(\Theta_\Phi)$. As described in Section~\ref{sec:srbb}, they are defined as the product of disjoint 2-cycles from the set of permutations $P_{16}^{even}$ and $P_{16}^{odd}$, respectively. However, for the purposes of this Section, it is more useful to use their representation in terms of CNOTs\footnote{As usual, the first label of the pair stands for the control qubit and second one for the target.}:
{\footnotesize
\begin{equation}
\begin{aligned}
    \prod T_1^e&=CNOT_{(4,3)}&\prod T_1^o&=CNOT_{(3,4)}\prod T_1^e\,CNOT_{(3,4)}\\
    \prod T_2^e&=CNOT_{(4,2)}&\prod T_2^o&=CNOT_{(2,4)}\prod T_2^e\,CNOT_{(2,4)}\\
    \prod T_3^e&=CNOT_{(4,2)}CNOT_{(4,3)}&\prod T_3^o&=CNOT_{(2,4)}\prod T_3^e\,CNOT_{(2,4)}\\
    \prod T_4^e&=CNOT_{(4,1)}&\prod T_4^o&=CNOT_{(1,4)}\prod T_4^e\,CNOT_{(1,4)}\\
    \prod T_5^e&=CNOT_{(4,1)}CNOT_{(4,3)}&\prod T_5^o&=CNOT_{(1,4)}\prod T_5^e\,CNOT_{(1,4)}\\
    \prod T_6^e&=CNOT_{(4,1)}CNOT_{(4,2)}&\prod T_6^o&=CNOT_{(1,4)}\prod T_6^e\,CNOT_{(1,4)}\\
    \prod T_7^e&=CNOT_{(4,1)}CNOT_{(4,2)}CNOT_{(4,3)}&\prod T_7^o&=CNOT_{(1,4)}\prod T_7^e\,CNOT_{(1,4)}
\end{aligned}
\end{equation}}
To find between which edges it is possible to simplify CNOT pairs, the table described in the previous Section is constructed (note that the table should be read from the bottom and in the odd case it has an extra column for the index $k$). The resulting gates are shown in Figure~\ref{fig:prodTe_n4} for the even case and in Figure~\ref{fig:prodTo_n4} for the odd one. Since for $n=4$ the edges suitable for simplifications are four in both cases, the rows of the tables are partitioned for clarity.
\begin{figure}[htbp]
\begin{subfigure}[b]{0.4\textwidth}
    \centering
    \resizebox{0.9\textwidth}{!}{
    \begin{tabular}{|c||c|c|c|c|}
    \hline
    $x$&$1^{\circ}\,$q&$2^{\circ}\,$q&$3^{\circ}\,$q&sub-factor\\
    \hline
    \hline
    1&0&0&1&$\prod T_1^e$\\
    2&0&\textcolor{red}{1}&0&$\prod T_2^e$\\
    3&0&\textcolor{red}{1}&\textcolor{green}{1}&$\prod T_3^e$\\
    $\vdots$&$\vdots$&$\vdots$&$\vdots$&$\vdots$\\
    \hline
    \end{tabular}}
\end{subfigure}
\begin{subfigure}[b]{0.55\textwidth}
    \centering
    \resizebox{0.45\textwidth}{!}{
    \begin{quantikz}
    \lstick{$1^{\circ}$q}&&\slice[style={black}]{$\prod T_3^e\prod T_2^e$}&&\\
    \lstick{$2^{\circ}$q}&\targ[style={red}]{}&&\targ[style={red}]{}&\\
    \lstick{$3^{\circ}$q}&&\targ[style={green}]{}&&\\
    \lstick{$4^{\circ}$q}&\ctrl[style={red}]{-2}&\ctrl[style={green}]{-1}&\ctrl[style={red}]{-2}&
    \end{quantikz}}
\end{subfigure}
\begin{subfigure}[b]{0.4\textwidth}
    \centering
    \resizebox{0.9\textwidth}{!}{
    \begin{tabular}{|c||c|c|c|c|}
    \hline
    $x$&$1^{\circ}\,$q&$2^{\circ}\,$q&$3^{\circ}\,$q&sub-factor\\
    \hline
    \hline
    $\vdots$&$\vdots$&$\vdots$&$\vdots$&$\vdots$\\
    4&\textcolor{red}{1}&0&0&$\prod T_4^e$\\
    5&\textcolor{red}{1}&0&\textcolor{green}{1}&$\prod T_5^e$\\
    $\vdots$&$\vdots$&$\vdots$&$\vdots$&$\vdots$\\
    \hline
    \end{tabular}}
\end{subfigure}
\begin{subfigure}[b]{0.55\textwidth}
    \centering
    \resizebox{0.45\textwidth}{!}{
    \begin{quantikz}
    \lstick{$1^{\circ}$q}&\targ[style={red}]{}&\slice[style={black}]{$\prod T_5^e\prod T_4^e$}&\targ[style={red}]{}&\\
    \lstick{$2^{\circ}$q}&&&&\\
    \lstick{$3^{\circ}$q}&&\targ[style={green}]{}&&\\
    \lstick{$4^{\circ}$q}&\ctrl[style={red}]{-3}&\ctrl[style={green}]{-1}&\ctrl[style={red}]{-3}&
    \end{quantikz}}
\end{subfigure}
\begin{subfigure}[b]{0.4\textwidth}
    \centering
    \resizebox{0.9\textwidth}{!}{
    \begin{tabular}{|c||c|c|c|c|}
    \hline
    $x$&$1^{\circ}\,$q&$2^{\circ}\,$q&$3^{\circ}\,$q&sub-factor\\
    \hline
    \hline
    $\vdots$&$\vdots$&$\vdots$&$\vdots$&$\vdots$\\
    5&\textcolor{red}{1}&0&\textcolor{green}{1}&$\prod T_5^e$\\
    6&\textcolor{red}{1}&\textcolor{green}{1}&0&$\prod T_6^e$\\
    $\vdots$&$\vdots$&$\vdots$&$\vdots$&$\vdots$\\
    \hline
    \end{tabular}}
\end{subfigure}
\begin{subfigure}[b]{0.55\textwidth}
    \centering
    \resizebox{0.45\textwidth}{!}{
    \begin{quantikz}
    \lstick{$1^{\circ}$q}&\targ[style={red}]{}&\slice[style={black}]{$\prod T_6^e\prod T_5^e$}&\targ[style={red}]{}&&\\
    \lstick{$2^{\circ}$q}&&\targ[style={green}]{}&&&\\
    \lstick{$3^{\circ}$q}&&&&\targ[style={green}]{}&\\
    \lstick{$4^{\circ}$q}&\ctrl[style={red}]{-3}&\ctrl[style={green}]{-2}&\ctrl[style={red}]{-3}&\ctrl[style={green}]{-1}&
    \end{quantikz}}
\end{subfigure}
\begin{subfigure}[b]{0.4\textwidth}
    \centering
    \resizebox{0.9\textwidth}{!}{
    \begin{tabular}{|c||c|c|c|c|}
    \hline
    $x$&$1^{\circ}\,$q&$2^{\circ}\,$q&$3^{\circ}\,$q&sub-factor\\
    \hline
    \hline
    $\vdots$&$\vdots$&$\vdots$&$\vdots$&$\vdots$\\
    5&1&0&1&$\prod T_5^e$\\
    6&\textcolor{red}{1}&\textcolor{red}{1}&0&$\prod T_6^e$\\
    7&\textcolor{red}{1}&\textcolor{red}{1}&\textcolor{green}{1}&$\prod T_7^e$\\
    \hline
    \end{tabular}}
\end{subfigure}\qquad
\begin{subfigure}[b]{0.55\textwidth}
    \centering
    \resizebox{0.55\textwidth}{!}{
    \begin{quantikz}
    \lstick{$1^{\circ}$q}&\targ[style={red}]{}&&\slice[style={black}]{$\prod T_7^e\prod T_6^e$}&\targ[style={red}]{}&&\\
    \lstick{$2^{\circ}$q}&&\targ[style={red}]{}&&&\targ[style={red}]{}&\\
    \lstick{$3^{\circ}$q}&&&\targ[style={green}]{}&&&\\
    \lstick{$4^{\circ}$q}&\ctrl[style={red}]{-3}&\ctrl[style={red}]{-2}&\ctrl[style={green}]{-1}&\ctrl[style={red}]{-3}&\ctrl[style={red}]{-2}&
    \end{quantikz}}
\end{subfigure}
\caption{Simplification scheme for $\prod T_x^e$ factors for $n=4$ case.}
\label{fig:prodTe_n4}
\end{figure}

\begin{figure}[ht]
\begin{subfigure}[b]{0.4\textwidth}
    \centering
    \resizebox{0.9\textwidth}{!}{
    \begin{tabular}{|c||c|c|c|c|c|}
    \hline
    $x$&$1^{\circ}\,$q&$2^{\circ}\,$q&$3^{\circ}\,$q&sub-factor&$k$\\
    \hline
    \hline
    1&0&0&1&$\prod T_1^o$&3\\
    2&0&\textcolor{red}{1}&0&$\prod T_2^o$&2\\
    3&0&\textcolor{red}{1}&\textcolor{green}{1}&$\prod T_3^o$&2\\
    $\vdots$&$\vdots$&$\vdots$&$\vdots$&$\vdots$&$\vdots$\\
    \hline
    \end{tabular}}
\end{subfigure}
\begin{subfigure}[b]{0.55\textwidth}
    \centering
    \resizebox{0.65\textwidth}{!}{
    \begin{quantikz}
    \lstick{$1^{\circ}$q}&&&&\slice[style={black}]{$\prod T_3^o\prod T_2^o$}&&&&\\
    \lstick{$2^{\circ}$q}&\ctrl[style={green}]{2}&\targ[style={red}]{}&&\ctrl[style={red}]{2}&\ctrl[style={red}]{2}&\targ[style={red}]{}&\ctrl[style={green}]{2}&\\
    \lstick{$3^{\circ}$q}&&&\targ[style={green}]{}&&&&&\\
    \lstick{$4^{\circ}$q}&\targ[style={green}]{}&\ctrl[style={red}]{-2}&\ctrl[style={green}]{-1}&\targ[style={red}]{}&\targ[style={red}]{}&\ctrl[style={red}]{-2}&\targ[style={green}]{}&
    \end{quantikz}}
\end{subfigure}
\begin{subfigure}[b]{0.4\textwidth}
    \centering
    \resizebox{0.9\textwidth}{!}{
    \begin{tabular}{|c||c|c|c|c|c|}
    \hline
    $x$&$1^{\circ}\,$q&$2^{\circ}\,$q&$3^{\circ}\,$q&sub-factor&$k$\\
    \hline
    \hline
    $\vdots$&$\vdots$&$\vdots$&$\vdots$&$\vdots$&$\vdots$\\
    4&\textcolor{red}{1}&0&0&$\prod T_4^e$&1\\
    5&\textcolor{red}{1}&0&\textcolor{green}{1}&$\prod T_5^e$&1\\
    $\vdots$&$\vdots$&$\vdots$&$\vdots$&$\vdots$&$\vdots$\\
    \hline
    \end{tabular}}
\end{subfigure}
\begin{subfigure}[b]{0.55\textwidth}
    \centering
    \resizebox{0.65\textwidth}{!}{
    \begin{quantikz}
    \lstick{$1^{\circ}$q}&\ctrl[style={green}]{3}&\targ[style={red}]{}&&\ctrl[style={red}]{3}\slice[style={black}]{$\prod T_5^o\prod T_4^o$}&\ctrl[style={red}]{3}&\targ[style={red}]{}&\ctrl[style={green}]{3}&\\
    \lstick{$2^{\circ}$q}&&&&&&&&\\
    \lstick{$3^{\circ}$q}&&&\targ[style={green}]{}&&&&&\\
    \lstick{$4^{\circ}$q}&\targ[style={green}]{}&\ctrl[style={red}]{-3}&\ctrl[style={green}]{-1}&\targ[style={red}]{}&\targ[style={red}]{}&\ctrl[style={red}]{-3}&\targ[style={green}]{}&
    \end{quantikz}}
\end{subfigure}
\begin{subfigure}[b]{0.4\textwidth}
    \centering
    \resizebox{0.9\textwidth}{!}{
    \begin{tabular}{|c||c|c|c|c|c|}
    \hline
    $x$&$1^{\circ}\,$q&$2^{\circ}\,$q&$3^{\circ}\,$q&sub-factor&$k$\\
    \hline
    \hline
    $\vdots$&$\vdots$&$\vdots$&$\vdots$&$\vdots$&$\vdots$\\
    5&\textcolor{red}{1}&0&\textcolor{green}{1}&$\prod T_5^o$&1\\
    6&\textcolor{red}{1}&\textcolor{green}{1}&0&$\prod T_6^o$&1\\
    $\vdots$&$\vdots$&$\vdots$&$\vdots$&$\vdots$&$\vdots$\\
    \hline
    \end{tabular}}
\end{subfigure}
\begin{subfigure}[b]{0.55\textwidth}
    \centering
    \resizebox{0.65\textwidth}{!}{
    \begin{quantikz}
    \lstick{$1^{\circ}$q}&\ctrl[style={green}]{3}&\targ[style={red}]{}&&\ctrl[style={red}]{3}\slice[style={black}]{$\prod T_6^o\prod T_5^o$}&\ctrl[style={red}]{3}&\targ[style={red}]{}&&\ctrl[style={green}]{3}&\\
    \lstick{$2^{\circ}$q}&&&\targ[style={green}]{}&&&&&&\\
    \lstick{$3^{\circ}$q}&&&&&&&\targ[style={green}]{}&&\\
    \lstick{$4^{\circ}$q}&\targ[style={green}]{}&\ctrl[style={red}]{-3}&\ctrl[style={green}]{-2}&\targ[style={red}]{}&\targ[style={red}]{}&\ctrl[style={red}]{-3}&\ctrl[style={green}]{-1}&\targ[style={green}]{}&
    \end{quantikz}}
\end{subfigure}
\begin{subfigure}[b]{0.4\textwidth}
    \centering
    \resizebox{0.9\textwidth}{!}{
    \begin{tabular}{|c||c|c|c|c|c|}
    \hline
    $x$&$1^{\circ}\,$q&$2^{\circ}\,$q&$3^{\circ}\,$q&sub-factor&$k$\\
    \hline
    \hline
    $\vdots$&$\vdots$&$\vdots$&$\vdots$&$\vdots$&$\vdots$\\
    5&1&0&1&$\prod T_5^o$&1\\
    6&\textcolor{red}{1}&\textcolor{red}{1}&0&$\prod T_6^o$&1\\
    7&\textcolor{red}{1}&\textcolor{red}{1}&\textcolor{green}{1}&$\prod T_7^o$&1\\
    \hline
    \end{tabular}}
\end{subfigure}\qquad
\begin{subfigure}[b]{0.55\textwidth}
    \centering
    \resizebox{0.75\textwidth}{!}{
    \begin{quantikz}
    \lstick{$1^{\circ}$q}&\ctrl[style={green}]{3}&\targ[style={red}]{}&&&\ctrl[style={red}]{3}\slice[style={black}]{$\prod T_7^o\prod T_6^o$}&\ctrl[style={red}]{3}&\targ[style={red}]{}&&\ctrl[style={green}]{3}&\\
    \lstick{$2^{\circ}$q}&&&\targ[style={red}]{}&&&&&\targ[style={red}]{}&&\\
    \lstick{$3^{\circ}$q}&&&&\targ[style={green}]{}&&&&&&\\
    \lstick{$4^{\circ}$q}&\targ[style={green}]{}&\ctrl[style={red}]{-3}&\ctrl[style={red}]{-2}&\ctrl[style={green}]{-1}&\targ[style={red}]{}&\targ[style={red}]{}&\ctrl[style={red}]{-3}&\ctrl[style={red}]{-2}&\targ[style={green}]{}&
    \end{quantikz}}
\end{subfigure}
\caption{Simplification scheme for $\prod T_x^o$ factors for $n=4$ case.}
\label{fig:prodTo_n4}
\end{figure}

Finally, the simplification algorithms proposed above can be merged into an overall algorithm (see Algorithm~\ref{alg6}) for the scalable (and partially recursive) construction of the CNOT-optimized VQC for the approximation of $SU(2^n)$ operators.

\begin{algorithm}
\caption{\textbf{[SU-SYNTHESIS]} CNOT-optimized circuit to approximate $SU(2^n)$ operators within SRBB-framework}
\label{alg6}
\begin{algorithmic}
\STATE\textbf{Provided:} CNOT-optimized circuit for $n=2$, Algorithm $\Psi/\Phi$-FACTORS and Algorithm $Z$-FACTOR.
\STATE\textbf{Input:} number of qubits $n\geqslant2$; set of parameters $\theta_i\in\{\Theta_Z\cup\Theta_\Psi\cup\Theta_\Phi\}$; outputs of Algorithm $\Psi/\Phi$-FACTORS and Algorithm $Z$-FACTOR.
\STATE\textbf{Output:} complete circuit, optimized in terms of CNOT gates, to approximate a given special unitary matrix.
\STATE Partition of the entire set of parameters into three subsets, those for $\Theta_Z$, $\Theta_\Psi$ and $\Theta_\Phi$
\IF{n = 2}
\STATE $CNOT$-optimized circuit for $n=2$ (see Figure \ref{fig:n2circuit})
\ELSE
\STATE Run Algorithm $\Psi/\Phi$-FACTORS for parity = odd
\STATE Run Algorithm $\Psi/\Phi$-FACTORS for parity = even
\STATE Run Algorithm $Z$-FACTOR
\ENDIF
\end{algorithmic}
\end{algorithm}

\subsection{CNOT-reduced VQC to approximate SU operators with SRBB-decomposition}
The scalable simplification algorithm, valid for any $n\geqslant3$, has been successfully incorporated into the original scalability scheme~\cite{sarkar2024quantum}, reducing the number of CNOTs in the overall circuit.
\begin{prop}
The following formulas are obtained for gates count after the scalable CNOT-simplification procedure on the approximate SRBB-based synthesis algorithm when $n\geqslant3$:
\begin{equation}\label{eqn:gate_counts_general}
\begin{aligned}
    N_{CNOT} &= 2^{2n + 1} - 5 \cdot 2^{n - 1} + 2n -4
    \\
    N_{Rot} &= 2^{2n + 1} - 5 \cdot 2^{n - 1} + 1
    \\
    CNOT_{red} &= 2^n(2n-5)-2n+8 
\end{aligned}
\end{equation}
where the first and second equations represent, respectively, the exponential trend of the total number of CNOT-gates and rotations (both z- and y-rotations), while the last one calculates the reduction of CNOT-gates compared to the original structure proposed by the literature. For the special case $n=2$, instead, it is found:
\begin{equation}\label{eqn:gate_counts_n2}
    \begin{aligned}
        N_{CNOT}&=\frac{3}{2}\cdot4^n+\left(n-\frac{3}{2}\right)2^n-8=18\\
        N_{Rot}&=\frac{3}{2}\cdot4^n-2^{n-1}-1=21\\
        CNOT_{red}&=4
    \end{aligned}
\end{equation}
\end{prop}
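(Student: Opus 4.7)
The plan is to obtain the gate-count formulas by decomposing the VQC into its three main factors $Z(\Theta_Z)$, $\Psi(\Theta_\Psi)$, $\Phi(\Theta_\Phi)$, counting CNOTs and rotations in each sub-block, and then subtracting the CNOT savings derived from the simplification propositions already proved (Proposition~\ref{prop_Z_scalable} and Proposition~\ref{prop_prodT_scalable}). The idea is to treat the naive count (with all CNOTs explicit, without any cancellations) as a baseline and then apply the simplification bookkeeping term by term.

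First, I would write down the baseline counts. The $Z$-factor involves one exponential per diagonal SRBB element, i.e., $2^n-1$ one-parameter blocks; each block with parameter pair $(m,m')$ contributes one $R_z$ plus $2|m'|$ CNOTs, so summing over the $2^n-1$ binary strings of length $n$ one obtains a baseline of $n\cdot 2^{n-1}$ CNOTs and $2^n-1$ rotations. The $\Psi$-factor contains the block-diagonal $A_\Psi$ (an $M_nZYZ$ block, whose standard Gray-code decomposition gives $2^{n}$ CNOTs and $2^n$ rotations per unit) plus $2^{n-1}-1$ sandwiches of the form $(\prod T_x^e)M_x^e(\prod T_x^e)$, and analogously $\Phi$ contains $2^{n-1}-1$ sandwiches $(\prod T_x^o)M_x^o(\prod T_x^o)$ where each $\prod T_x^o$ carries the extra pair of $k$-index CNOTs. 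I would tabulate these exact baseline counts using the explicit cardinalities (number of non-zero bits per binary row of length $n-1$ for $\prod T_x^{e/o}$, and the known CNOT-count of $M_nZYZ$ and the multi-controlled $R_z$ blocks of $M_x^o$).

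Second, I would apply the two simplification mechanisms. For the $Z$-factor, Proposition~\ref{prop_Z_scalable} identifies the Gray-code-scalable block plus the recursive $n-1$ block; counting pairs of adjacent rows of the Gray matrix that share a 1-valued qubit (other than the last one) gives a closed-form number of simplified CNOT pairs, which by induction on $n$ yields a telescoping reduction. For the $\Psi/\Phi$-factors, Proposition~\ref{prop_prodT_scalable} localizes savings at those edges $\prod T_x^{e/o}\prod T_{x+1}^{e/o}$ whose consecutive binary rows share 1-valued bits; a direct enumeration over the reversed binary table counts exactly how many CNOT pairs cancel at each such edge (with an extra pair saved in the odd case coming from the $k$-index). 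Summing these savings over all $x$ yields the total reduction, and subtracting from the baseline produces the formula for $N_{CNOT}$.

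Third, I would verify $N_{Rot}$: no simplification procedure touches rotations, so $N_{Rot}$ equals the baseline rotation count (rotations inside $A_\Psi$, each $M_x^{e/o}$, each $M_nZYZ$ block, the multi-controlled $R_z$ layers of $M_x^o$, and the diagonal $Z$-factor). Collecting these gives the stated $2^{2n+1}-5\cdot 2^{n-1}+1$. The difference $N_{Rot}-N_{CNOT}=2n-5$ (after simplification) should match as an internal consistency check. Finally, $CNOT_{red}$ is obtained as the difference between the original-literature CNOT-count formula of~\cite{sarkar2024quantum} and the new $N_{CNOT}$; substituting both closed forms and simplifying yields $2^n(2n-5)-2n+8$. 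The special case $n=2$ is not covered by the recursion (cf.\ Section~\ref{sec:anomaly_n2}) and must be handled directly by counting gates in the explicit circuit of Figure~\ref{fig:n2circuit}, where the four red CNOTs visible in that figure account exactly for $CNOT_{red}=4$, giving $N_{CNOT}=18$ and $N_{Rot}=21$.

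The main obstacle is the careful combinatorial bookkeeping of the CNOT savings in the $Z$-factor: the Gray-code argument already proved in Proposition~\ref{prop_Z_scalable} guarantees optimality, but turning the saturation count $(n-1)\cdot 2^{n-2}-(2^{n-2}(n-3)+1)$ proved there into a closed-form contribution to $N_{CNOT}$ across the nested recursion requires solving a simple linear recurrence $R(n)=R(n-1)+\text{(Gray-scalable savings at level }n\text{)}$ with base case $R(3)$. This recurrence, together with the analogous one for the edge-simplifications across all $2^{n-1}-1$ values of $x$ in both parities, is where the algebra is most delicate; once solved, the remaining steps are routine substitution.
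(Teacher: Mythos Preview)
Your overall strategy---decompose into $Z$, $\Psi$, $\Phi$, count per sub-block, then subtract the simplification savings---is exactly what the paper does, but several of your stated intermediate counts are wrong and would not produce the claimed closed forms. The baseline CNOT count for the $Z$-factor is not $n\cdot 2^{n-1}$: each diagonal element contributes $2|m'|$ CNOTs, and summing the number of $m'$ values over all $2^n-1$ nonzero binary strings gives $2^n(n-2)+2$, which is what the paper uses. More importantly, an $M_nZYZ$ block is \emph{three} uniformly controlled rotations ($R_z$, $R_y$, $R_z$), not one, so its cost is $3\cdot 2^{n-1}$ rotations and $3\cdot 2^{n-1}-2$ CNOTs, not $2^n$ of each; this error propagates into both $N_{CNOT}$ and $N_{Rot}$ and is the main reason your numbers would not close. (Your consistency check also has a sign slip: $N_{Rot}-N_{CNOT}=5-2n$, not $2n-5$.)

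Two further points of divergence from the paper's route are worth noting. First, for the simplified $Z$-factor the paper does not set up a recurrence on savings but counts the remaining CNOTs directly: each level $m$ of the Gray-code scalable block contributes exactly $2^{m-1}$ CNOTs, so $N_{CNOT}(Z)=\sum_{m=2}^{n}2^{m-1}=2^n-2$; the savings $2^n(n-3)+4$ then drop out as the difference from the baseline. This is much cleaner than solving your proposed recurrence. Second, for $CNOT_{red}$ the paper does \emph{not} subtract from an external literature formula (none is quoted in closed form); it simply sums the three simplification contributions---the $Z$-factor savings above, the $\prod T_x^e$ savings $2^{n-1}(n-3)+2$, and the $\prod T_x^o$ savings $2^n-2n$---to get $2^n(2n-5)-2n+8$ directly. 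Your plan for $n=2$ via Figure~\ref{fig:n2circuit} is fine and matches the paper.
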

For the length and modularity features of proofs, these trends are demonstrated in the next Sections~\ref{sec:proof_gate_counts} and \ref{sec:anomaly_n2}.

\subsubsection{Proof of gate count formulas}\label{sec:proof_gate_counts}
For general $n$, defining the overall function that describes the number of CNOT gates after the simplification process means adding all the relevant contributions, as they appear in the original structure, and then subtracting the simplified quantity\footnote{In this way, the formula describing the difference in terms of CNOTs with respect to the original structure will be obtained accordingly.}:
\begin{equation}
\begin{split}
    N_{CNOT}&=N_{CNOT}(Z)+N_{CNOT}(1\mbox{st subfactor of }\Psi)+N_{CNOT}(M_x^e)+\\
    &+N_{CNOT}\left(\prod T_x^e\right)+N_{CNOT}(M_x^o)+N_{CNOT}\left(\prod T_x^o\right)-\\
    &-N_{CNOT}\left(\mbox{simplif. }\prod T_x^e\right)-N_{CNOT}\left(\mbox{simplif. }\prod T_x^o\right)
\end{split}
\end{equation}
where $N_{CNOT}\left(\mbox{simplif. }\prod T_x^{e/o}\right)$ represents the number of simplified CNOTs inside factors $\prod T_x^{e/o}$, and the simplifications within the $Z$-factor are not included because, from $n\geqslant3$, the term $N_{CNOT}(Z)$ already takes them into account. In the following, each term will be analyzed separately and then added only at the end.
\begin{enumerate}
    \item[$\square$]$N_{CNOT}(Z)$\\As anticipated, this term represents the number of resulting CNOTs after the simplifications only for $n\geqslant3$. With reference to Section~\ref{sec:gray_code}, the number of CNOT gates for each $Z$-factor corresponds to the $2^{n-1}$ jumps among the rows ending with 1 in the cyclic Gray Code matrix. Considering the recursive pattern within the diagonal contributions, the overall number of CNOTs for the case $n$ in question is equivalent to the series:
    \begin{equation}
        \sum_{m=2}^{n}2^{m-1}=\left(\sum_{m=0}^{n}2^{m-1}\right)-\frac{3}{2}=2^n-2
    \end{equation}
    To calculate the effect of the simplification process with respect to the original structure, it is also very useful to calculate the number of CNOTs of the diagonal factor without the simplifications. The latter corresponds to double the number of 1-valued bits within the columns of the control qubits (all except the last one, associated with parameters $m'$) in the binary tables that build the $Z$-factor circuit. This method analyzes the binary tables by columns and takes advantage of the fact that in each column there is the same number of 1s, even if they are grouped differently. In formulas, it is worth noting that:
\begin{equation}
        2\sum_{n=2}^{M}2^{M-n}(2^{n-1}-1)=2^M(M+2)+2-2^{M+2}=2^M(M-2)+2
    \end{equation}
    where $M=n_{max}$, $2^{M-n}$ is the number of subgroups in each column and $2^{n-1}-1$ is the number of 1s associated with $m'$ in each subgroup. This result allows for directly calculating the number of CNOT-simplifications within the $Z$-factor for $n\geqslant3$ ($M=n_{max}\rightarrow n$):
    \begin{equation}
        [2^n(n-2)+2]-(2^n+2)=2^n(n-3)+4
    \end{equation}
    \item[$\square$]$N_{CNOT}(1\mbox{st subfactor of }\Psi)+N_{CNOT}(M_x^e)$\\These two factors are analyzed together because the number of CNOTs contained in them depends on the $ZYZ$-decomposition property that they both satisfy. As explained in~\cite{sarkar2024quantum}, a $M_nZYZ$-type matrix can be fully decomposed and implemented through $3\cdot2^{n-1}-2$ CNOT gates. Thus, considering the $2^{n-1}-1$ factors of type $M_x^e$ plus the first sub-factor of $\Psi(\Theta_\Psi)$ that fall into this decomposition, the total number of CNOTs is $2^{n-1}(3\cdot2^{n-1}-2)$.
    \item[$\square$]$N_{CNOT}(M_x^o)$\\Due to their algebraic properties, these sub-factors cannot be implemented with a pure $ZYZ$-decomposition, but multi-controlled $z$-rotations must be added before and after a $ZYZ$-type kernel~\cite{sarkar2024quantum}. For this reason, the total number of CNOTs is $(2^{n-1}-1)(5\cdot2^{n-1}-6)$. 
    \item[$\square$]$N_{CNOT}\left(\prod T_x^e\right)$\\With reference to the binary tables used to built the corresponding quantum circuits for these factors, the total number of CNOTs is equivalent to double the number of 1-valued bits, remembering that target columns are $n-1$. Thus, in each column there are exactly $2^{n-2}$ 1-valued bits and the formula for the total number of CNOTs within these factors is $2(n-1)2^{n-2}$.
    \item[$\square$]$N_{CNOT}\left(\prod T_x^o\right)$\\For these factors the same rules apply as for the even case with only one difference due to their very definition (presence of the parameter $k$): the total number of CNOTs is equal to the number of CNOTs for the even case plus the number of pairs CNOT$_{(k,n)}$. The final formula is $2[(n-1)2^{n-2}+2(2^{n-1}-1)]$.  
    \item[$\square$]$N_{CNOT}\left(\mbox{simplif. }\prod T_x^e\right)$\\To calculate the number of simplifications inside $\prod T_x^e$ factors it is absolutely important to understand the recursive pattern with which pairs of 1s (simplifications) appear in the binary tables that encode these factors. In Figure~\ref{tab:prodTe_simplifications}, this pattern is illustrated for $n=3,4,5$ to simplify the explanation. Firstly, the 1s highlighted in red form the pairs indicating the simplifications according to the rule already explained in Section~\ref{sec:simplifications_evenodd}; secondly, the entire table is divided into two blocks by a horizontal line that partitions it exactly in half: it is immediate to observe that the number of simplifications of case $n$ corresponds to the number of simplifications present in the upper half of the table of case $n+1$. Finally, the number of simplifications (pairs of 1) of the lower half equals $2^{n-2}-1$ and a fraction of these, more precisely all except those of the first column, are equal to those of the upper half. Thus, introducing the function $f_{max}(i)=2^{n_{max}-2-i}-1$, it is possible to count the total number of pairs of (vertically) adjacent 1-valued bits (CNOT-simplifications) through the formula
    \begin{equation}
        \sum_{i=0}^{n_{max}-3}2^i\cdot f_{max}(i)
    \end{equation}
    and, multiplying by 2, obtain the corresponding number of CNOT gates  ($n_{max}\rightarrow n$):
    \begin{equation}
    \begin{split}
         N_{CNOT}\left(\mbox{simplif. }\prod T_x^e\right)&=2\left(\sum_{i=0}^{n-3}2^i\cdot f_{max}(i)\right)=\sum_{i=0}^{n-3}2^{i+1}(2^{n-2-i}-1)=\\
         &=\sum_{i=0}^{n-3}2^{n-1}-\sum_{i=0}^{n-3}2^{i+1}=2^{n-1}(n-3)+2
    \end{split}
    \end{equation}

    \begin{figure}
    \centering
    \resizebox{0.7\textwidth}{!}{
    \begin{subfigure}[b]{0.3\textwidth}
        \centering
        \begin{tabular}{|c|c|}
        \hline
        \multicolumn{2}{|c|}{$n=3$}\\
        \hline
        \hline
        0&0\\
        0&1\\
        \hline
        \textcolor{red}{1}&0\\
        \textcolor{red}{1}&1\\
        \hline
        \end{tabular}
    \end{subfigure}$\longrightarrow$
    \begin{subfigure}[b]{0.3\textwidth}
        \centering
        \begin{tabular}{|c|c|c|}
        \hline
        \multicolumn{3}{|c|}{$n=4$}\\
        \hline
        \hline
        0&0&0\\
        0&0&1\\
        0&\textcolor{red}{1}&0\\
        0&\textcolor{red}{1}&1\\
        \hline
        \textcolor{red}{1}&0&0\\
        \textcolor{red}{1}&0&1\\
        \textcolor{red}{1}&\textcolor{red}{1}&0\\
        \textcolor{red}{1}&\textcolor{red}{1}&1\\
        \hline
        \end{tabular}
    \end{subfigure}$\longrightarrow$
    \begin{subfigure}[b]{0.3\textwidth}
        \centering
        \begin{tabular}{|c|c|c|c|}
        \hline
        \multicolumn{4}{|c|}{$n=5$}\\
        \hline
        \hline
        0&0&0&0\\
        0&0&0&1\\
        0&0&\textcolor{red}{1}&0\\
        0&0&\textcolor{red}{1}&1\\
        0&\textcolor{red}{1}&0&0\\
        0&\textcolor{red}{1}&0&1\\
        0&\textcolor{red}{1}&\textcolor{red}{1}&0\\
        0&\textcolor{red}{1}&\textcolor{red}{1}&1\\
        \hline
        \textcolor{red}{1}&0&0&0\\
        \textcolor{red}{1}&0&0&1\\
        \textcolor{red}{1}&0&\textcolor{red}{1}&0\\
        \textcolor{red}{1}&0&\textcolor{red}{1}&1\\
        \textcolor{red}{1}&\textcolor{red}{1}&0&0\\
        \textcolor{red}{1}&\textcolor{red}{1}&0&1\\
        \textcolor{red}{1}&\textcolor{red}{1}&\textcolor{red}{1}&0\\
        \textcolor{red}{1}&\textcolor{red}{1}&\textcolor{red}{1}&1\\
        \hline
        \end{tabular}
    \end{subfigure}}
    \caption{Recursive scheme for simplification within $\prod T^e_x$ factors.}
    \label{tab:prodTe_simplifications}
    \end{figure}
    
    \item[$\square$]$N_{CNOT}\left(\mbox{simplif. }\prod T_x^o\right)$\\For the odd case the same type of analysis applies, but a contribution must be added for the simplifications that come from the pairs of 1s added by the parameter $k$. Remembering the definition of the parameter $k$, which in Figure~\ref{tab:prodTe_simplifications} corresponds to calculating the number of pairs that can be created with the leftmost 1s in each row, the function $f_{max}(i)$ introduced for the even case can be used to derive the number of simplifications even in the odd case:
    \begin{equation}
    \begin{split}
        N_{CNOT}\left(\mbox{simplif. }\prod T_x^o\right)&=N_{CNOT}\left(\mbox{simplif. }\prod T_x^e\right)+2\sum_{i=3}^{n}f_i(0)=\\
        &=2^{n-1}(n-3)+2+2\sum_{i=3}^{n}(2^{i-2}-1)=\\
        &=2^{n-1}(n-3)+2-2n+4+\frac{1}{2}\left(\sum_{i=0}^{n}2^i-\sum_{i=0}^{2}2^i\right)=\\
        &=2^n-2n
    \end{split}
    \end{equation}
\end{enumerate}
Putting everything together, the number of CNOTs expressed by the first of equations~(\ref{eqn:gate_counts_general}) is proven.

To deduce the total number of rotations $N_{Rot}$, the second of equations~(\ref{eqn:gate_counts_general}), the following contributions must be considered:
\begin{equation}
    N_{Rot}=N_{Rot}(Z)+N_{Rot}(1\mbox{st subfactor of }\Psi)+N_{Rot}(M_x^e)+N_{Rot}(M_x^o)
\end{equation}
Below, each term will be analyzed separately and then added only at the end.
\begin{enumerate}
    \item[$\square$]$N_{Rot}(Z)$\\In the $Z$-factor, the number of rotations is very simple since it corresponds to the number of diagonal elements involved, that is $2^n-1$.
    \item[$\square$]$N_{Rot}(1\mbox{st subfactor of }\Psi)$\\Thanks to its algebraic properties that allow a complete $ZYZ$-decomposition, this factor has exactly $3\cdot2^{n-1}$ rotations.
    \item[$\square$]$N_{Rot}(M_x^e)$\\Again due to decomposition properties, each factor of type $M_x^e$ has $3\cdot2^{n-1}$ rotations. Thus, the total number is simply $3(2^{n-1}-1)2^{n-1}$.
    \item[$\square$]$N_{Rot}(M_x^o)$\\As explained in~\cite{sarkar2023scalable}, a block-diagonal unitary matrix with $U(2)$ blocks can be decomposed by appropriately modifying the structure of a $ZYZ$-type matrix. This modification involves the addition of $n-2$ multi-controlled $z$-rotations and 1 $z$-rotations, before and after the central $ZYZ$-core. Considering that, as for the even case, there are $2^{n-1}-1$ factors of this type, the total number of rotations in the odd case is expressed by the equation:
    \begin{equation}
        (2^{n-1}-1)\left[2\sum_{m=0}^{n-2}2^m+3\cdot2^{n-1}\right]=(2^{n-1}-1)(5\cdot2^{n-1}-2)
    \end{equation}
    where the total number of $y$-rotations is $(2^{n-1}-1)\cdot2^{n-1}$ since in the $ZYZ$-core there is only one multi-controlled $y$-rotation with $2^{n-1}$ controls and the total number of $z$-rotations is instead
    \begin{equation}
        (2^{n-1}-1)\cdot\left(2\sum_{m=0}^{n-1}2^m\right)=2(2^n-1)(2^{n-1}-1)
    \end{equation}
\end{enumerate}
Putting everything together, the number of rotations expressed by the second of equations~(\ref{eqn:gate_counts_general}) is proven.

Finally, the last of equations~(\ref{eqn:gate_counts_general}) concerning the reduction in terms of CNOT gates with respect to the original structure proposed in~\cite{sarkar2024quantum} is easily derived from the contributions already computed:
\begin{equation}
\begin{split}
    CNOT_{red}&=N_{CNOT}(\mbox{simplif. }Z)+N_{CNOT}\left(\mbox{simplif. }\prod T_x^e\right)+N_{CNOT}\left(\mbox{simplif. }\prod T_x^o\right)=\\
    &=2^n(n-3)+4+2[(n-3)2^{n-1}+2]+2^n-2n=\\
    &=2^n(2n-5)-2n+8
\end{split}
\end{equation}
valid from $n\geqslant3$.

\subsubsection{The special case n=2}\label{sec:anomaly_n2}
Section~\ref{sec:srbb} and Section~\ref{sec:2qubits} provide all the elements to understand why the case $n=2$ represents a particular instance of the scalability pattern. There are essentially two reasons: the $M_1^o$ sub-factor is a $M_2ZYZ$-type matrix only for $n=2$, and only with 2 qubits does the $Z(\Theta_Z)$-factor admit simplifications with the $\Psi(\Theta_\Psi)$-factor. In the first case, the exception to scalability is from an implementation perspective, which means that from $n=3$ onwards, the $M_x^o$ sub-factors can all be implemented as block diagonal unitary matrices with unitary blocks (not $SU(2)$ blocks as for $M_nZYZ$-type matrices). Their implementation requires a different pattern, which starts and ends with a rotation gate, preventing simplifications with the adjacent CNOTs of other sub-factors~\cite{10821064}. In the second case, the explanation is simpler: by implementing both $\Psi(\Theta_\Psi)$- and $Z(\Theta_Z)$-factors through cyclic Gray Code\footnote{The first because it is composed of $M_nZYZ$ matrices and the second to reach the smallest number of CNOTs, as previously described.}, $\Psi(\Theta_\Psi)$ ends with a $\mbox{CNOT}_{(1,n)}$, where the first subscript stands for control, while $Z(\Theta_Z)$ starts with a $\mbox{CNOT}_{(n-1,n)}$, preventing simplification from $n\geqslant3$~\cite{10821064}.

To complete the analysis of this particular case, Equations~(\ref{eqn:gate_counts_n2}) are demonstrated below. Since the underlying approximation algorithm is expressed by the same key Equation~(\ref{eqn:U_approx}), for each of the Equations~(\ref{eqn:gate_counts_n2}) the same contributions of the $n\geqslant3$ cases were taken into consideration. What changes are the algebraic properties of these contributions; therefore, their implementation and the consequent possible simplifications.
\begin{enumerate}
    \item[$\square$]$N_{CNOT}(Z)$\\In this case, the formula $2^n-2=2$ counts the precise number of CNOT gates \emph{before} the simplification procedure (and not after as for $n\geqslant3$ cases).
    \item[$\square$]$N_{CNOT}(1^\circ\,\mbox{subfactor of }\Psi)+N_{CNOT}(M_x^{e/o})$\\As explained above, only for $n=2$ the $M_1^o$ factor has a complete $ZYZ$-decomposition too, so that the structure of the overall circuit is characterized by sequences of $ZYZ$-type matrices. In addition to the $1^\circ$ sub-factor of $\Psi$, the circuit counts $2^{n-1}-1=1$ contribution for the even case and $2^{n-1}-1=1$ contribution for the odd case. Thus, a total of $(2^n-1)(3\cdot2^{n-1}-2)=12$.
    \item[$\square$]$N_{CNOT}\left(\prod T_x^{e/o}\right)$\\For both the even and odd cases, the same trends obtained for $n\geqslant3$ apply to these factors: $2(n-1)2^{n-2}=2$ for the even case and $2[(n-1)2^{n-2}+2(2^{n-1}-1)]=6$ for the odd case.
    \item[$\square$]$CNOT_{red}$\\From the circuit illustrated in Figure~\ref{fig:n2circuit}, it is possible to count the 4 simplified CNOTs for this particular case. 
\end{enumerate}
Putting everything together, the number of CNOT gates expressed by the first of Equations~(\ref{eqn:gate_counts_n2}) is proven.

As regards the total number of rotations in the $n=2$ case, the calculation is very simple given the repetition of $2^n-1$ $ZYZ$-decompositions in addition to $2^n-1$ diagonal elements:
\begin{equation}
    (3\cdot2^{n-1})(2^n-1)+2^n-1=21
\end{equation}

\section{Implementation and Results}\label{sec:impl}
The proposed scalable VQC has been implemented using the PennyLane library~\cite{bergholm2018pennylane}. The majority of the tests discussed in this section have been performed using the simulator provided by the same library, executed on a Linux machine equipped with an AMD EPYC 7282 CPU and 256 GB RAM. A subset of experiments, detailed in Section~\ref{sec:RealHW}, was instead carried out on IBM quantum computers, namely IBM Brisbane~\cite{ibm_quantum_brisbane} and IBM Fez~\cite{ibm_quantum_fez}.

The QNN is utilized to approximate predefined ideal circuits, from which the associated unitary matrices are retrieved, and random unitary operators.
Since the approximating algorithm operates with special unitary matrices, the set of $d=2^n$ $SU(2^n)$ matrices associated with each ideal circuit must be computed by dividing the ideal unitary matrix by the $d$-th roots of its determinant, as previously explained in Section~\ref{sec:problem_phase}.

The parameters of the VQC are updated to minimize the cost function. In this work, different losses are tested:
\begin{itemize}
    \item Frobenius norm: $\|A \|_{F} = \sqrt{\operatorname{tr}(A A^\dagger)}$, where $A = SU_{ideal} - SU_{VQC}$
    \item Trace distance: $
            \| \rho - \sigma \|_1 = \operatorname{tr}\sqrt{(\rho - \sigma)^\dagger(\rho - \sigma)}$
    \item Fidelity: $
            \operatorname{F}(\rho, \sigma) = \left(\operatorname{tr}\sqrt{\sqrt{\rho}\sigma\sqrt{\rho}}\right)^2$
\end{itemize}
In the previous definitions, $\rho$ and $\sigma$ are the density matrices of the states obtained by the ideal $SU$ evolution, which will be approximated, and by the VQC, respectively, given the same input state.

When the Frobenius norm is used as a loss function, the VQC is constructed to obtain its matrix representation and the Frobenius distance is calculated to get the loss value.
The Frobenius norm is applied to compare the $SU(2^n)$ matrix associated with the circuit that the QNN should approximate and the $SU(2^n)$ matrix associated with the VQC. 
Instead, when the fidelity or the trace distance are used, a training set with $1000$ random states is created to perform the minimization of the loss function, in order to decrease (increase) the value of the trace (fidelity) of the states obtained by the ideal matrix (direct algebraic computation) and the approximated matrix (synthesis process).

However, a problem arises when the network is trained using fidelity or trace distance. Since density matrices make use of the square moduli of the amplitudes, the contribution of the phases that establishes the relationship between the special unitary group and the unitary one disappears, as previously explained in Section~\ref{sec:problem_phase}.
Thus, to recover the phase contribution, the following procedure has been implemented:
\begin{itemize}
    \item the $2^n$ $\mathbb{C}$-roots of the determinant are calculated;
    \item the roots are associated with the $2^n$ possible $SU(2^n)$ matrices;
    \item after training, the distance between the approximated $SU(2^n)$ matrix returned by the network and each possible $SU(2^n)$ matrix is computed;
    \item finally, the approximated $U(2^n)$ matrix is computed by multiplying the approximated $SU(2^n)$ matrix by the root associated with the ideal one that has the smallest distance. 
\end{itemize}

The samples of the dataset are mapped to quantum states by means of amplitude encoding. This scheme encodes a classical data vector $x$ in a quantum state by associating the normalized features with the probability amplitudes of the quantum state. Given $x = (x_1, x_2, ..., x_N)^T$, with $N = 2^n$, the quantum state is as follows:
\begin{equation}
    U_\phi(x): x \in \mathbb{R}^N \longrightarrow |\phi(x)\rangle = \frac{1}{||x||}\sum_{i = 1}^{N}{x_i|i\rangle},
\end{equation}
where $|i\rangle$ is the $i$-th computational basis.

The optimization is performed using two different optimizers: 
\begin{itemize}
    \item Adam, which is a classical optimizer based on gradient descent; the learning rate is $0.01$, and when the trace or fidelity is used, the number of epochs is $20$ and the batch size is $64$;
    \item Nelder Mead, which is used only with the Frobenius loss.
\end{itemize}

\subsection{Tests with Quantum Circuit Simulation}
\label{sec:simulation}

The QNN was tested with different numbers of qubits, from 2 to 6. Initially, it was tested with predefined circuits, which are depicted in Appendix~\ref{circuitsTested}.
Then, the QNN was also tested with dense random $SU(2^n)$ matrices. The results obtained with Adam are shown in Table~\ref{tab:results}, where "circuits" indicates the predefined circuits, while "random" indicates the $SU(2^n)$ matrices that are generated randomly. The execution time was measured with respect to the Frobenius loss.
For 6 qubits, the results of fidelity and trace are not included because the algorithm takes days to complete.
In general, the algorithm achieved a low value for the loss function up to 5 qubits, indicating suitable approximations of the ideal matrices. The results show that the precision decreases as the number of qubits increases. This may result from the optimization process and the increased number of gates, which can create a more complex parametric landscape with additional local minima. Furthermore, for the same number of qubits, the performance worsens when using random matrices. This suggests that having sparse matrices or matrices with specific patterns simplifies the approximation for the QNN. In any case, the best performance was always achieved after just 10 epochs and with only one single layer of the QNN.

\begin{table}[htbp]
\centering
\resizebox{0.6\textwidth}{!}{
    \begin{tabular}{ |c|c||c|c|c|c|  }
     \hline
     \multicolumn{6}{|c|}{Loss results} \\
     \hline
     \hline
     n & Ideal matrix &Trace & Fidelity & Frobenius & Execution time \\
     \hline
     2 & circuits & $10^{-3}$ &$10^{-8}$ &$10^{-3}$ &$<8 s$ \\
     \hline
     2 & random & $10^{-3}$  & $10^{-5}$ &$10^{-3}$ &$<6 s$\\
     \hline
     3 & circuits & $10^{-3}$  & $10^{-7}$ & $10^{-2}$ &$<40 s$\\
     \hline
     3 & random  &  $10^{-2}$& $10^{-3}$ & $10^{-1}\sim$~\newline$10^{-2}$ &$<$~$75 s$\\
     \hline
     4 & circuits &    $10^{-2}$  & $10^{-5}$ & $10^{-2}$ & $<$~$227 s$\\
     \hline
     4 & random    &  $0.13$    &$10^{-2}$ & $0.28$&$<$~$222 s$\\
     \hline
     5 & circuits  &  $10^{-2}$& $10^{-3}$ & $0.3$ &$<$~$21 m$\\
     \hline
     5 & random  & $0.19$& $10^{-2}$ & $0.67$&$<$~$25 m$\\
     \hline
     6 & circuits &    $-$  & $-$ & $0.7$ &$<$~$2 h$\\
     \hline
     6 & random &    $-$  & $-$ & $1.3$&$<$~$2 h$\\
     \hline
    \end{tabular}}
\caption{Loss results divided according to the number of qubits and the ideal matrix used. }
\label{tab:results}
\end{table}

An example of the execution with the Adam optimizer of an approximate predefined circuit and an approximate random unitary matrix is provided for 2 qubits in appendix~\ref{a_exampleMatrices}.

In Table~\ref{tab:compare2qubits}, the results for 2-qubit circuits are presented, comparing those of our method, those from~\cite{sarkar2023scalable} and those obtained using~\cite{vidal2004universal} with our method (QNN optimized by using Adam or Nelder Mead). When the Adam optimizer was used, although the network achieved a lower accuracy compared to the results reported by~\cite{sarkar2023scalable}, it demonstrated significantly reduced training times. This is a notable outcome, as it can help to obtain a coarse approximation of larger matrices in a reasonable time. Furthermore, when the Nelder Mead optimizer was used (as in~\cite{sarkar2023scalable}), the QNN achieved better results in both approximation error and time taken. The network can produce significant results in a short amount of time. Moreover, our method was also applied to the two-qubit optimal circuit of Vidal~\cite{vidal2004universal}, where it achieved slightly worse performance. Additionally, the last row shows the results obtained with a random unitary. The network is able to achieve notable results in about 1 minute with the Nelder-Mead optimizer and, in this case, the performance achieved using the circuit introduced in~\cite{vidal2004universal} is much worse.

\begin{table}[htbp]
    \centering
    \resizebox{1\textwidth}{!}{
    \begin{tabular}{|>{\centering\arraybackslash}m{1.6cm}||>{\centering\arraybackslash}m{1.4cm}|>{\centering\arraybackslash}m{1.4cm}|>{\centering\arraybackslash}m{1cm}|>{\centering\arraybackslash}m{1.2cm}|>{\centering\arraybackslash}m{2.3cm}|>{\centering\arraybackslash}m{2cm}|>{\centering\arraybackslash}m{2cm}|>{\centering\arraybackslash}m{1.5cm}|}
        \hline
        Ideal matrix & Time taken by our method + Adam & Time taken by our method + NM & Time taken in~\cite{sarkar2023scalable} & Error from our method + Adam & Error from our method + NM & Error from~\cite{sarkar2023scalable} & Error from~\cite{vidal2004universal} + our method + Adam & Error from~\cite{vidal2004universal} + our method + NM \\
        \hline
        CNOT & \multirow{18}{*}{$8s$} & \multirow{18}{*}{$11\sim$~$24s$}& $90s$ & \multirow{18}{*}{$10^{-3}$} & \multirow{18}{*}{$10^{-15}\sim$~$10^{-16}$} & $7.977 \times 10^{-14}$ & \multirow{18}{*}{$10^{-1}\sim$~$10^{-3}$} & \multirow{18}{*}{$10^{-15}$}\\
        \cline{1-1} \cline{4-4} \cline{7-7}
        Grover$_2$ & & & $124s$ & & & $1.256 \times 10^{-15}$ & &\\
        \cline{1-1} \cline{4-4} \cline{7-7}
        XX & & & $20s$ & & & $6.226 \times 10^{-12}$ & &\\
        \cline{1-1} \cline{4-4} \cline{7-7}
        YY & & & $240s$ & & & $3.223 \times 10^{-15}$ & &\\
        \cline{1-1} \cline{4-4} \cline{7-7}
        ZZ & & & $90s$ & & & $1.363 \times 10^{-17}$ & &\\
        \cline{1-1} \cline{4-4} \cline{7-7}
        SWAP & & & $63s$ & & & $1.839 \times 10^{-13}$ & &\\
        \cline{1-1} \cline{4-4} \cline{7-7}
        XZ & & & $150s$ & & & $3.580 \times 10^{-13}$ & &\\
        \cline{1-1} \cline{4-4} \cline{7-7}
        ZX & & & $129s$ & & & $5.438 \times 10^{-13}$ & &\\
        \cline{1-1} \cline{4-4} \cline{7-7}
        ZY & & & $121s$ & & & $3.188 \times 10^{-12}$ & &\\
        \cline{1-1} \cline{4-4} \cline{7-7}
        CNOT(2, 1) & & & $45s$ & & & $1.058 \times 10^{-13}$ & &\\
        \cline{1-1} \cline{4-4} \cline{7-7}
        DCNOT & & & $29s$ & & & $4.020 \times 10^{-13}$ & &\\
        \cline{1-1} \cline{4-4} \cline{7-7}
        XNOR & & & $23s$ & & & $3.166 \times 10^{-13}$ & &\\
        \cline{1-1} \cline{4-4} \cline{7-7}
        iSWAP & & & $183s$ & & & $3.003 \times 10^{-14}$ & &\\
        \cline{1-1} \cline{4-4} \cline{7-7}
        fSWAP & & & $93s$ & & & $2.037 \times 10^{-13}$ & &\\
        \cline{1-1} \cline{4-4} \cline{7-7}
        C-Phase & & & $15s$ & & & $7.666 \times 10^{-15}$ & &\\
        \cline{1-1} \cline{4-4} \cline{7-7}
        XX + YY & & & $124s$ & & & $1.665 \times 10^{-12}$ & &\\
        \cline{1-1} \cline{4-4} \cline{7-7}
        $\sqrt{\text{SWAP}}$ & & & $97s$ & & & $1.686 \times 10^{-13}$ & &\\
        \cline{1-1} \cline{4-4} \cline{7-7}
        $\sqrt{\text{iSWAP}}$ & & & $10s$ & & & $1.106 \times 10^{-13}$ & &\\
        \cline{1-1} \cline{4-4} \cline{7-7}
        QFT$_2$ & & & $31s$ & & & $3.215 \times 10^{-13}$ & &\\
        \hline
        random & $6s$ & $70s$ & $-$ & $10^{-3}$ & $10^{-14}$ & $-$ & $10^{-1}$ & $10^{-2}$ \\
        \hline
    \end{tabular}}
    \caption{Comparison between different methods, with 2-qubit circuits and Frobenius loss. NM stands for Nelder-Mead optimizer.}
    \label{tab:compare2qubits}
\end{table}

In Table~\ref{tab:compare3qubits}, a comparison of the results for 3-qubit circuits is presented, considering those obtained with our method, those from~\cite{sarkar2023scalable} and those from~\cite{younis2020qfast}. Also in this case, the network achieved superior results with Nelder-Mead. The training times with Nelder-Mead are significantly higher compared to the 2-qubit scenario, but the network can achieve better results than the other two methods. However, when Adam is used, the training time remains short, providing a fast way to achieve a coarse approximation of circuits.

\begin{table}[htbp]
    \centering
    \resizebox{1\textwidth}{!}{
    \begin{tabular}{|>{\centering\arraybackslash}m{1.5cm}||>{\centering\arraybackslash}m{1.2cm}|>{\centering\arraybackslash}m{1.2cm}|>{\centering\arraybackslash}m{1.2cm}|>{\centering\arraybackslash}m{2.3cm}|>{\centering\arraybackslash}m{2cm}|>{\centering\arraybackslash}m{2cm}|>{\centering\arraybackslash}m{2cm}|}
        \hline
        Ideal matrix & Time taken by our method + Adam & Time taken by our method + NM & Error from our method + Adam & Error from our method + NM & Error from \cite{sarkar2023scalable} & QFAST + KAK \cite{younis2020qfast} & UniversalQ \cite{younis2020qfast} \\
        \hline
        Toffoli & \multirow{4}{*}{$40s$} & \multirow{4}{*}{$<1h$} & \multirow{4}{*}{$10^{-2}$} & \multirow{4}{*}{$10^{-10}\sim$~$10^{-11}$} & $4.48 \times 10^{-9}$ & $1.5 \times 10^{-6}$ & $2.6 \times 10^{-8}$ \\
        \cline{1-1} \cline{6-8}
        Fredkin & & & & & $1.6 \times 10^{-8}$& $2.2 \times 10^{-6}$ & $0$ \\
        \cline{1-1} \cline{6-8}
        Grover$_3$ & & & & & $4.60 \times 10^{-9}$& $8.1 \times 10^{-7}$ & $0$ \\
        \cline{1-1} \cline{6-8}
        Peres & & & & & $2 \times 10^{-8}$& $6.8 \times 10^{-7}$ & $2.1 \times 10^{-8}$ \\
        \cline{1-1} \cline{6-8}
        QFT$_3$ & & & & & $3.1 \times 10^{-9}$& $3 \times 10^{-7}$ & $3 \times 10^{-8}$ \\
        \hline
    \end{tabular}}
    \caption{Comparison between different methods, with 3-qubit circuits and Frobenius loss. NM stands for Nelder-Mead optimizer.}
    \label{tab:compare3qubits}
\end{table}

In Table~\ref{tab:compare4qubits}, a comparison of the results for 4-qubit circuits is presented, considering those obtained with our method, those from~\cite{sarkar2023scalable} and those from~\cite{younis2020qfast}. The network achieved the same or even better accuracy than the other methods, except for the Grover algorithm, but its training time does not scale efficiently. Using the Nelder-Mead optimizer, given sufficient time, it can achieve the same performance as the other methods; therefore, the focus should be on comparing the training time required to reach that approximation.

\begin{table}[htbp]
    \centering
    \resizebox{1\textwidth}{!}{
    \begin{tabular}{|>{\centering\arraybackslash}m{1.6cm}||>{\centering\arraybackslash}m{1.5cm}|>{\centering\arraybackslash}m{1.5cm}|>{\centering\arraybackslash}m{1.5cm}|>{\centering\arraybackslash}m{1.5cm}|>{\centering\arraybackslash}m{2cm}|>{\centering\arraybackslash}m{2cm}|>{\centering\arraybackslash}m{2cm}|}
        \hline
        Ideal matrix & Time taken by our method + Adam & Time taken by our method + NM & Error from our method + Adam & Error from our method + NM & Error from \cite{sarkar2023scalable} & QFAST + KAK \cite{younis2020qfast} & UniversalQ \cite{younis2020qfast} \\
        \hline
        $C_1C_1C_1X$ & \multirow{4}{*}{$227s$} & $<2 weeks$ & \multirow{4}{*}{$10^{-1}$} & $10^{-9}$ & $1.97 \times 10^{-8}$ & $2.2 \times 10^{-5}$ & $1.3 \times 10^{-6}$ \\
        \cline{1-1} \cline {3-3} \cline{5-8}
        Grover$_4$ & & $\sim3 weeks$ & & $10^{-5}$ & $2.12 \times 10^{-9}$& $-$ & $-$ \\
        \cline{1-1} \cline {3-3} \cline{5-8}
        QFT$_4$ & & $<2 weeks$ & & $10^{-9}$ & $9.331 \times 10^{-8}$& $-$ & $-$ \\
        \hline
    \end{tabular}}
     \caption{Comparison between different methods, with 4-qubit circuits and Frobenius loss. NM stands for Nelder-Mead optimizer.}
    \label{tab:compare4qubits}
\end{table}

In Table~\ref{tab:diamondResults}, the diamond norm results achieved when the QNN is trained using the Frobenius norm are presented.
The diamond norm measures the difference between two quantum channels and is defined as~\cite{regula2021operational}
\begin{equation}
    ||\Phi||_\diamond = \max_{\rho \in \mathbb{D}(A \otimes A)} ||I_A \otimes \Phi(\rho)||_1,
\end{equation}
where $\mathbb{D}(A \otimes A)$ denotes the states acting in a bipartite Hilbert space composed of the space $A$ and another space isomorphic thereto. The diamond norm can be computed efficiently via semidefinite programming~\cite{watrous2012simpler}.
The results demonstrate the effectiveness of the Frobenius norm as a training objective for approximating unitary matrices, as it yields strong performance when evaluated in terms of the diamond norm.

\begin{table}[htbp]
    \centering
    \begin{tabular}{|c||c|c|}
        \hline
        N qubit & Error with NM & Diamond norm error \\
        \hline
        2 & $10^{-15}\sim$~$10^{-16}$& $10^{-27}$\\
        \hline
        3 & $10^{-10}\sim$~$10^{-11}$ &$10^{-24}\sim$~$10^{-25}$\\
        \hline
    \end{tabular}
     \caption{Diamond norm results, when the QNN is trained with the frobenius norm.}
    \label{tab:diamondResults}
\end{table}

Table~\ref{tab:opFid} presents the results when the QNN is trained using the operator fidelity loss, an operator norm that is defined as follows:
\begin{equation}
    F(U, V) = \frac{1}{d^2}|Tr(U^{\dagger}V)|^2
\end{equation}
where $U$ and $V$ are unitary operators.
The results highlight that this loss can achieve better outcomes than fidelity and trace distance, but its performance is worse than that of the frobenius norm; it is more volatile.

\begin{table}[htbp]
    \centering
    \begin{tabular}{|c||c|}
        \hline
        N qubit & Operator fidelity error \\
        \hline\
        2 & $10^{-6}\sim10^{-13}$\\
        \hline
        3 & $10^{-3}\sim10^{-6}$\\
        \hline
        4 & $10^{-3}$\\
        \hline
    \end{tabular}
    \caption{Results achieved when the QNN is trained via operator fidelity loss.}
    \label{tab:opFid}
\end{table}

Table~\ref{tab:resultsEvolution} presents the results of the verification of the density matrix evolution theorem through the Liouville-Von Neumann equation:
\begin{equation}
    \frac{\partial\rho}{\partial t} = \frac{-i}{\hbar} [H, \rho]
\end{equation}
whose solution is:
\begin{equation}\label{eq:Evolution}
    \rho(t) = U\rho(0)U^\dagger
\end{equation}

From an implementation viewpoint, equation~(\ref{eq:Evolution}) is used to compute the ideal density matrix resulting from any ideal circuit (Appendix~\ref{circuitsTested}), where $\rho(0)$ is the amplitude encoding of the input state.
The latter is compared to the approximate density matrix produced by the QNN via the trace distance.

\begin{table}[htbp]
\centering
\resizebox{0.5\textwidth}{!}{
    \begin{tabular}{ |c|c||c|c|c|  }
     \hline
     \multicolumn{5}{|c|}{Evolution} \\
     \hline
     \hline
     n & Ideal matrix &Trace & Fidelity & Frobenius \\
     \hline
     \multirow{2}{*}{2} & circuits & $10^{-3}$ &$10^{-6}$ &$10^{-3}$ \\
     \cline{2-5}
     & random & $10^{-3}$  & $10^{-3}$ &$10^{-3}$\\
     \hline
     \multirow{2}{*}{3} & circuits & $10^{-3}$  & $10^{-5}$ & $10^{-3}$\\
     \cline{2-5}
     & random  &  $10^{-2}$& $10^{-2}$ & $10^{-3}$\\
     \hline
     \multirow{2}{*}{4} & circuits &    $10^{-3}$  & $10^{-5}$ & $10^{-2}$ \\
     \cline{2-5}
     & random    &  $10^{-1}$    &$10^{-1}$ & $10^{-2}$\\
     \hline
     \multirow{2}{*}{5} & circuits  &  $10^{-2}$& $10^{-2}$ & $10^{-2}$ \\
     \cline{2-5}
     & random  & $0.17$& $0.16$ & $0.12$\\
     \hline
     \multirow{2}{*}{6} & circuits &    $-$  & $-$ & $10^{-2}$ \\
     \cline{2-5}
     & random &    $-$  & $-$ & $0.12$\\
     \hline
    \end{tabular}}
    \caption{Density matrix evolution results divided based on the number of qubits and the ideal matrix used.}
    \label{tab:resultsEvolution}
\end{table} 

In Figure~\ref{fig:testError}, the error on a test set of 500 samples is plotted for both the Adam and Nelder-Mead optimizers. When Fidelity or Trace Distance is used, the test error is computed between the output state produced by the trained QNN and the ideal state obtained from the ideal circuit that the network should approximate. The plot shows how the fidelity can produce better results with a low number of qubits, whereas, as the number of qubits increases, the three loss functions behave in a similar manner. The Nelder–Mead trend yields better results than those obtained with Adam; however, starting from 4 qubits, the training takes days.

Furthermore, the 2-qubit QNN is evaluated using two layers to analyze the effect of increasing the circuit depth. The QNN is tested using two optimizer-loss combinations: Adam with the fidelity loss and Nelder Mead with the Frobenius loss. When using the Nelder Mead optimizer, the QNN achieved the same performance as in the single-layer case, reaching a test error of $10^{-15}$ on the $500$-sample test set. In contrast, when using the Adam optimizer, as shown in Figure~\ref{fig:testError}, the single-layer QNN achieved a test error of $10^{-12}$, while the two-layer QNN reached a slightly lower error of $10^{-13}$. However, this marginal improvement comes at the cost of doubling the number of trainable parameters. Therefore, increasing the number of layers provides only a limited performance gain when using Adam and no improvement when using Nelder Mead, likely because the latter already attains the best achievable approximation with a single layer.

\begin{figure}[htbp]
    \centering
    \begin{tikzpicture}
        \begin{axis}[
            xlabel={Number of Qubits},
            ylabel={Test Set Error},
            ymode=log,
            grid=both,
            minor tick num=1,
            major grid style={line width=.2pt,draw=gray!50},
            minor grid style={line width=.1pt,draw=gray!10},
            width=10cm,
            height=6cm,
            legend pos=outer north east,
            legend style={font=\small},
            ytick={1e-16, 1e-10, 1e-8, 1e-5, 1e-3, 1e-2, 1e-1},
            yticklabels={\(10^{-16}\),\(10^{-10}\), \(10^{-8}\), \(10^{-5}\), \(10^{-3}\), \(10^{-2}\), \(10^{-1}\)},
            xtick={2, 3, 4, 5, 6}
        ]
        
        \addplot[
            color=blue,
            mark=*,
            line width=1.5pt,
            smooth
        ] coordinates {
            (2, 1e-12)
            (3, 1e-4)
            (4, 1e-3)
            (5, 1e-2)
        };
        \addlegendentry{Fidelity (Adam)}

        \addplot[
            color=red,
            mark=square*,
            line width=1.5pt,
            smooth
        ] coordinates {
            (2, 1e-3)
            (3, 1e-3)
            (4, 1e-2)
            (5, 1e-2)
        };
        \addlegendentry{Trace (Adam)}

        \addplot[
            color=green,
            mark=triangle*,
            line width=1.5pt,
            smooth
        ] coordinates {
            (2, 1e-3)
            (3, 1e-3)
            (4, 1e-2)
            (5, 1e-2)
            (6, 1e-1)
        };
        \addlegendentry{Frobenius (Adam)}

        \addplot[
            color=purple,
            mark=*,
            line width=1.5pt,
            dotted,
            smooth
        ] coordinates {
            (2, 1e-16)
            (3, 1e-8)
            (4, 1e-2)
        };
        \addlegendentry{Frobenius (Nelder-Mead)}
        \end{axis}
    \end{tikzpicture}
    \caption{Test set error of the trained QNN with the three loss functions.}
    \label{fig:testError}
\end{figure}

\subsection{Tests on Real Quantum Devices}
\label{sec:RealHW}

In Table~\ref{tab:hellinger}, the results obtained from tests performed using the IBM Brisbane quantum computer~\cite{ibm_quantum_brisbane} are shown. In particular, the Hellinger distance between the ideal probability distribution and the approximated one produced by the QNN is computed, when the proposed scalable VQC is applied to the state $|0\rangle^{\otimes n}$ with 1024 shots.

\begin{table}[htbp]
\centering
    \begin{tabular}{ |c|c|  }
     \hline
     n  & Hellinger distance \\
     \hline
     2 & $\sim$~$0.15$ \\
     \hline
     3 & $\sim$~$0.30$\\
     \hline
     4 & $\sim$~$0.35$\\
     \hline
    \end{tabular}
    \caption{Hellinger distance between the ideal probability and the one obtained using the IBM Brisbane quantum computer when the VQC is applied to the state $|0\rangle^{\otimes n}$ with 1024 shots.}
    \label{tab:hellinger}
\end{table}

Table~\ref{tab:timeRealHW} shows the execution time of a circuit on IBM Brisbane. When the number of qubits is small, the execution times for 1024 and 4096 shots are similar. However, when the number of qubits increases, the time taken with 4096 shots grows significantly.

\begin{table}[htbp]
\centering
    \begin{tabular}{ |c|c|c|  }
     \hline
     n  & Time [s] for 1024 shots & Time [s] for 4096 shots \\
     \hline
     2 & $\sim$~$3$ & $\sim$~$3$ \\
     \hline
     3 & $\sim$~$3$ & $\sim$~$3$\\
     \hline
     4 & $\sim$~$4$ & $\sim$~$4$\\
     \hline
     5 & $\sim$~$5-6$ & $\sim$~$11$\\
     \hline
     6 & $\sim$~$11$ & $\sim$~$40-45$\\
     \hline
    \end{tabular}
    \caption{Time in seconds taken for the execution on using the IBM Brisbane quantum computer when the VQC is applied to the state $|0\rangle^{\otimes n}$.}
    \label{tab:timeRealHW}
\end{table}

In Table~\ref{tab:cnotHW}, the fidelity results on IBM quantum computers Brisbane and Fez \cite{ibm_quantum_brisbane, ibm_quantum_fez} are presented, when the QNN is trained with Adam to approximate a CNOT gate. In this case, the test set consists of only 10 random states, which limits its generalizability. However, it still achieved a good approximation.
An interesting observation arises when comparing the compiled circuits of the QNN trained with Adam and Nelder-Mead on real IBM hardware. On IBM Brisbane, the amplitude-encoded input state and Nelder-Mead QNN circuits are compiled into approximately 45 gates, while the Adam-trained circuits are compiled into around 60 gates. This likely occurs because the circuit produced by Nelder-Mead is more accurate, allowing the compiler to better recognize the ideal gate, rather than the sequence of RZ, RY, and CNOT gates used in the scalable algorithm.

\begin{table}[htbp]
\centering
    \begin{tabular}{ |c|c|c|  }
     \hline
     Device & Time [s] for 1024 shots & Hellinger distance \\
     \hline
     IBM Brisbane & $\sim$~$3$ & $\sim$~$0.06$ \\
     \hline
     IBM Fez & $\sim$~$6$ & $\sim$~$0.07$ \\
     \hline
    \end{tabular}
    \caption{Results of an approximated CNOT on real IBM hardware.}
    \label{tab:cnotHW}
\end{table}

\subsection{Insights on loss functions}

From an application viewpoint, the interpretations of the loss functions vary. When using the Frobenius loss, the $\|A \|_{F}$ compares the two matrices algebraically, allowing it to find the ideal matrix. On the other hand, when using the trace distance or fidelity losses, these metrics compare the density matrices of the output states. Therefore, there may be scenarios where the QNN produces the correct output probability distribution when measured, but the approximated matrix differs from the ideal one. For instance, in Appendix~\ref{circuitsTested}, the circuit $C_0^1C_1^0RY(\frac{\pi}{4})$ for 3 qubits shows discrepancies in some entries of the approximated matrix compared to the ideal one, as shown below.

Ideal unitary matrix \( U \):~
\begin{center}
\scriptsize
\[
\begin{bmatrix}
1+0j & 0j & 0j & 0j & 0j & 0j & 0j & 0j \\
0j & 1+0j & 0j & 0j & 0j & 0j & 0j & 0j \\
0j & 0j & 1+0j & 0j & 0j & 0j & 0j & 0j \\
0j & 0j & 0j & 1+0j & 0j & 0j & 0j & 0j \\
0j & 0j & 0j & 0j & \textbf{0.92388+0j} & \textbf{-0.38268-0j} & 0j & 0j \\
0j & 0j & 0j & 0j & \textbf{0.38268+0j} & \textbf{0.92388+0j} & 0j & 0j \\
0j & 0j & 0j & 0j & 0j & 0j & 1+0j & 0j \\
0j & 0j & 0j & 0j & 0j & 0j & 0j & 1+0j
\end{bmatrix}
\]
\end{center}

Approximated unitary matrix \( U \) (Fidelity/Trace):\\\\
\resizebox{1\textwidth}{!}{\(
\begin{bmatrix}
1 - 10^{-5}j & 10^{-5} + 10^{-5}j & 10^{-5} + 10^{-5}j & 10^{-5} + 0j & 10^{-5}j & 10^{-5} - 10^{-5}j & 10^{-5} + 0j & 10^{-5} + 10^{-5}j \\
10^{-5} + 10^{-5}j & 1 + 10^{-5}j & 10^{-5} + 10^{-5}j & 10^{-5} + 10^{-5}j & -10^{-5} - 10^{-5}j & 10^{-5} + 0j & 10^{-5} + 0j & 10^{-5} + 10^{-5}j \\
-10^{-5} + 10^{-5}j & -10^{-5} + 10^{-5}j & 1 - 10^{-5}j & -10^{-5} - 10^{-5}j & -10^{-4} + 10^{-5}j & -10^{-4} - 10^{-4}j & 0 & 10^{-5} - 10^{-5}j \\
-10^{-5} + 0j & -10^{-5} + 10^{-5}j & 10^{-5} - 10^{-5}j & 1 + 10^{-5}j & 10^{-5} + 10^{-4}j & -10^{-5} - 10^{-5}j & 10^{-5} - 10^{-5}j & 10^{-5} - 0j \\
10^{-5} + 10^{-5}j & 10^{-5} - 10^{-5}j & -10^{-5} + 10^{-4}j & -10^{-5} + 10^{-4}j & \textcolor{red}{0.70722 - 10^{-4}j} & \textcolor{red}{-0.70699 + 10^{-4}j} & -0 +10^{-4}j & -10^{-4}j \\
-10^{-5} - 10^{-5}j & 10^{-5} - 10^{-5}j & 10^{-4} - 10^{-4}j & -10^{-5} + 10^{-4}j & \textcolor{red}{0.70699 + 10^{-4}j} & \textcolor{red}{0.70722 + 10^{-4}j} & 10^{-5} - 10^{-4}j & -10^{-5} - 10^{-4}j \\
10^{-5} + 0j & 10^{-5} + 0j & 0 & -10^{-5} - 10^{-5}j & 0 + 10^{-5}j & -10^{-5} - 10^{-4}j & 1 + 0j & 0 + 10^{-5}j \\
10^{-5} + 10^{-5}j & -10^{-5} + 10^{-5}j & -10^{-5} - 10^{-5}j & -10^{-5} + 0j & 10^{-5} - 10^{-4}j & 10^{-5} + 0j & 10^{-5}j & 1 - 0j
\end{bmatrix}
\)}\vspace{0.5cm}

On the other hand, when the Frobenius norm is used as loss function, the neural network can find a correct approximation of the unitary matrices. For instance, the approximated unitary matrix for $C_0^1C_1^0RY(\frac{\pi}{4})$ will result:\\\\
\resizebox{1\textwidth}{!}{\(
\begin{bmatrix}
0.99989 + 0.01376j & 10^{-5} - 0j & -10^{-3} + 10^{-3}j & 10^{-5}j & -10^{-4} + 10^{-5}j & 10^{-5} - 10^{-5}j & -10^{-4} - 10^{-4}j & 10^{-4} - 10^{-5}j \\
-10^{-5} - 0j & 0.99999 - 0.00215j & -10^{-5} + 0j & 10^{-3} - 10^{-3}j & 10^{-5} - 10^{-5}j & 10^{-4} + 10^{-4}j & 10^{-5} - 10^{-4}j & -10^{-4} - 10^{-4}j \\
10^{-3} + 10^{-3}j & 10^{-5} + 0j & 0.9999 - 0.01262j & -10^{-5}j & 10^{-4} + 10^{-4}j & -10^{-4} + 0j & 10^{-4} - 10^{-4}j & 10^{-4} + 10^{-4}j \\
10^{-5}j & -10^{-3} - 10^{-3}j & -10^{-5}j & 0.99999 + 10^{-4}j & -10^{-4} + 10^{-5}j & 10^{-4} + 10^{-4}j & 10^{-5} + 10^{-5}j & 10^{-4} + 10^{-4}j \\
10^{-4} + 10^{-5}j & 10^{-5} - 10^{-5}j & -10^{-4} + 10^{-4}j & 10^{-4} - 10^{-4}j & \textcolor{red}{0.92386 + 0.00185j} & \textcolor{red}{-0.38273 - 10^{-5}j} & 10^{-4} + 10^{-5}j & -10^{-4} + 10^{-4}j \\
-10^{-5} - 10^{-5}j & -10^{-4} + 10^{-4}j & 10^{-5} + 10^{-4}j & -10^{-4} + 10^{-4}j & \textcolor{red}{0.38273 + 10^{-5}j} & \textcolor{red}{0.92386 + 10^{-5}j} & -10^{-4} - 10^{-5}j & 10^{-4} + 10^{-5}j \\
10^{-5} - 10^{-4}j & -10^{-5} - 10^{-4}j & -10^{-4} - 10^{-4}j & -10^{-5} + 10^{-5}j & -10^{-5} - 10^{-4}j & 10^{-4} - 10^{-5}j & 1 - 10^{-3}j & 10^{-5} + 10^{-4}j \\
-10^{-4} - 10^{-5}j & 10^{-4} - 10^{-4}j & -10^{-4} + 10^{-4}j & -10^{-4} + 10^{-4}j & 10^{-4} + 10^{-5}j & -10^{-4} - 10^{-4}j & -10^{-5} + 10^{-4}j & 1 - 10^{-5}j
\end{bmatrix}
\)}\vspace{0.5cm}

For practical applications focused solely on output probability distribution given an input state to the circuit, fidelity and trace can be employed, up to 5 qubits. However, if precise matrix representation is crucial, the Frobenius norm stands as the only suitable loss function, thereby mitigating the risk of obtaining incorrect values.

Another important aspect is achieved using fidelity or trace distance. Except for certain specific cases, such as $C_0^1C_1^0RY(\frac{\pi}{4})$, it has been demonstrated that this algorithm can also be used for state preparation. This is achieved because, when the algorithm is trained using these 2 loss functions, $1000$ random states are used. Then, if the approximated unitary matrix is correct, it means that the QNN can also successfully perform approximate state preparation.

\section{Conclusions}\label{sec:conclusions}
In this work, a novel single layer QNN is proposed to approximate any unitary operator through the Standard Recursive Block Basis (SRBB) decomposition, which is capable of representing unitary matrices in a scalable parametric form by exploiting Lie algebras. Particular attention is paid to the relationship between $SU(2^n)$ and $U(2^n)$ Lie groups, given that the algorithm works only with special unitary matrices; by exploiting a training process based on a phase correction approach (unitary scaling), it is possible to generalize the approximation technique to generic unitary matrices. Firstly, the $n=2$ case is found to be a peculiar scenario of the scalability scheme proposed by the literature only from a theoretical point of view. The latter concerns the ordering and grouping of the elements of the algebra, based on their respective algebraic properties, but not the possible and theoretically proposed optimizations (simplifications of CNOT gates). The reason for the peculiarity of the $n=2$ case lies in the properties of the matrix algebra under permutation via 2-cycles (or transpositions) and gives rise to specific simplifications for this case. Secondly, a novel CNOT-reduced scaling scheme is discovered and first implemented from $n\geqslant3$, leading to new gate count formulas; it makes use of a Gray-Code-type ordering and the properties of the binary representation to find and prove the scalable scheme of the simplifications of CNOT gates. Indeed, in the 2-qubit case, other simplifications that cannot be incorporated into this new scalable scheme occur to reach the optimal number of CNOTs. For $n\geqslant6$, the fraction of simplified gates resulting from this new scalable scheme is not yet suitable for practical use of the QNN, but it is a solid starting point to reach the optimal depth in the SRBB-based synthesis algorithm, leveraging the connection between scalable VQCs
and unitary parameterizations through Lie algebras. Thirdly, with the aim of developing a Python library, a revised version of the recursive algorithm that constructs the SRBB is derived, thus simplifying the management and manipulation of high cardinality algebras (exponential
growth). The revision concerns the particular ordering, only partially increasing, of the matrix elements for the second and third methods of construction of the algebra; this detail allows for a clear match of the properties under permutation to the algebra sub-groupings, thus structuring scalability from an implementation perspective.

The scalable CNOT-reduced VQC was implemented using the PennyLane library, and the tests were performed using the simulator provided by the same library; some simple tests with 2 qubits were performed on IBM real hardware to assess the algorithm's usability. The QNN was tested for different numbers of qubits, from 2 to 6, with only one single layer of approximation; first with predefined circuits designed with increasing complexity and then with dense random $SU(2^n)$ matrices.

The network was trained using two different optimizers, Adam and Nelder-Mead, while the performance was evaluated with different loss functions, such as trace distance, fidelity, and Frobenius norm, each with a different application based on the specific metric criterion within their definitions. In general, the algorithm achieved a low loss value of up to 5 qubits. 
As expected, the precision decreases as the number of qubits increases, indicating a possibly too complex parametric landscape with additional local minima. Furthermore, the performance worsens when using random matrices, suggesting the simpler approximation of sparse matrices or matrices with a specific pattern.
Furthermore, while the Nelder-Mead optimizer can achieve a good quality of approximation, the Adam optimizer results in worse performance but requires significantly less execution time. Moreover, the trained network was tested on real IBM hardware to approximate a CNOT gate, achieving a high level of fidelity as measured by the Hellinger distance.

In the future, the proposed framework could be used to design efficient algorithms for Quantum State Preparation \cite{belli2025srbb} or classification tasks. In state preparation, the objective is to achieve a specific quantum state, and the proposed VQC could generate an approximated matrix capable of reaching each desired state. Similarly, in classification tasks, this framework could be used to discover the appropriate unitary matrix to maximize the probability of finding the correct classes. Additionally, it would be interesting to use this algorithm in Hamiltonian simulation with the aim of automating the design of a versatile simulator of machine-learned quantum evolutions. As the number of qubits increases, the depth and the number of gates in the QNN increase rapidly, leading to overly deep circuits and significantly longer training times. Therefore, an important improvement would be to find a way to reduce the number of rotations in the VQC while still maintaining scalability. One strategy could be to define approximating sub-algebras, thus reducing the number of Lie parameters needed. Alternatively, one could consider more efficient or ad hoc optimized implementations, for example, based on ancillary qubits, for the 3 main factors defined by uniformly controlled gates. Finally, it would be interesting to explore the development of variational parameter optimization processes (learning) that are more dependent on the Lie geometry and the topology of the unitary group.

\appendix

\section{Transposition factors and CNOT sequences}\label{sec:cnot_sequences}
The permutations considered by definition~(\ref{eqn:def_prodT}) are called 2-cycles or transpositions and can be implemented through CNOT sequences. In the range $1\leqslant x\leqslant2^{n-1}-1$, the binary representation of $x$ is obtained according to the formula $\sum_{i=1}^{n-1}\,2^{n-i-1}\,x_i$, thus enumerating all $\prod T_x^e$ factors. Then, for each $x$, a CNOT gate is associated with the control in the $n$-th qubit and the target in the $i$-th qubit, where $x_i=1$. For the odd case, the same scheme is used with only one variation: the presence of the parameter $k$, which is the lowest index for which $x_k=1$ appears in the binary string. Thus, $\prod T_x^o$ factors correspond to the circuit $CNOT_{(k,n)}\prod T_x^e\,CNOT_{(k,n)}$.

This recipe allows us to easily derive the quantum circuit associated with each $\prod T_x^{e/o}$ factor, but the link with permutation matrices remains somewhat hidden. Firstly, the binary representation previously described considers only $n-1$ qubits, i.e., only the qubits suitable for targets (the control qubit is always the $n$-th). In this way, considering the binary representation for each $x$ is equivalent to considering all possible combinations between target lines and the $n$-th qubit. Secondly, from a matrix representation viewpoint, the states of the $n$-qubit quantum register are ordered according to the usual sequence of kets:
\begin{equation*}
\begin{split}
    |0...00\rangle&=|0\rangle\otimes...\otimes|0\rangle\otimes|0\rangle=(1,0,...,0,0)^T\\
    |0...01\rangle&=|0\rangle\otimes...\otimes|0\rangle\otimes|1\rangle=(0,1,...,0,0)^T\\
    \cdots\;&=\;\cdots\\
    |1...11\rangle&=|1\rangle\otimes...\otimes|1\rangle\otimes|1\rangle=(0,0,...,0,1)^T
\end{split}
\end{equation*}
in which only the ones ending with a 1-valued qubit activate the CNOT. The latter are exactly half of the total number of states, and thanks to the properties of binary representation, they are always coupled with respect to the only one qubit that changes between each other. See for instance the $\prod T_1^e$ factor for $n=2$ ($x=1$):
\begin{equation*}
    x=1\equiv2^0\cdot x_1\iff x_1=1\quad\longrightarrow\quad CNOT_{(2,1)}
\end{equation*}
\begin{equation*}
   \prod T_1^e=P_{(2,4)}=
   \begin{pmatrix}
       1&0&0&0\\
       0&0&0&1\\
       0&0&1&0\\
       0&1&0&0
   \end{pmatrix}
   \begin{tabular}{cc}
        $\rightarrow$&$|00\rangle=(1,0,0,0)^T$\\
        $\rightarrow$&$|01\rangle=(0,1,0,0)^T$\\
        $\rightarrow$&$|10\rangle=(0,0,1,0)^T$\\
        $\rightarrow$&$|11\rangle=(0,0,0,1)^T$
   \end{tabular}
\end{equation*}
The remaining half of the states, the ones ending with a 0-valued qubit, do not activate the CNOT and represent the so-called fixed points of the transposition. As shown in the example, the pair of indices $(\alpha,\beta)=(2,4)$ corresponds to the second and fourth vectors (coupled by the first qubit that represents the two possible values of the target line) that activate the CNOT and are then permuted. Finally, note that the permutation matrix $P_{(2,4)}$ corresponds to exchanging rows 2 and 4 if multiplied from the left, or exchanging columns 2 and 4 if multiplied from the right. Of course, the same link between 2-cycle (or transposition) matrices and CNOT-sequences applies for any value of $n$. In this Appendix, an intuitive and circuit-oriented explanation of the relationship between blocks $\prod T_x^{e/o}$, transposition matrices, and CNOT-gate sequences has been provided. In~\cite[Theorem 5.1]{sarkar2024quantum} this relationship is formalized rigorously via a closed form of permutation strings.

\section{Quantum circuit to approximate 3-qubit systems}\label{sec:3qubits}
According to Section~\ref{sec:recursiveSRBB}, the design of the VQC valid for any 3-qubit operator requires the matrix algebra of order 8. However, unlike what was shown in Section~\ref{sec:2qubits} for 2-qubit evolutions, since the basis of interest has 64 matrices of order 8, it is really impractical to list each of its elements or write the bases of order from 4 to 7 that build it. For these reasons, only some general characteristics of the basis are provided below and useful shortcuts are proposed to manage all the elements of the algebra.  

The basis $\mathcal{B}^{(8)}$ is composed of 64 algebraic elements that describe the general evolution of an arbitrary 3-qubit quantum system. The set $\{B_j^{(8)}:1\leqslant j\leqslant63\}$ is a basis for the $su(8)$ matrix algebra whose elements satisfy the properties of hermiticity, unitarity, and zero trace. The last element of the basis is, as usual, $B_{64}^{(8)}=\mathbb{I}_8$. Following the general Equation~(\ref{eqn:U_approx}) with $l=1$, there are exactly:
\begin{enumerate}
    \item[i)]8 diagonal elements belonging to the set\\$\mathcal{J}=\mathcal{J}_Z\cup\{64\}=\{m^2-1,\;2\leqslant m\leqslant8\}\cup\{64\}=\{3,8,15,24,35,48,63,64\}$;
    \item[ii)]8 elements with a $ZYZ$-decomposition, divided into pairs, belonging to the set\\$\mathcal{J}_\Psi=\{[(2m-1)^2,(4m^2-2m)],\;1\leqslant m\leqslant4\}=\{(1,2),(9,12),(25,30),(49,56)\}$;
    \item[iii)]24 elements divided into 6 quadruples, each of which is associated with a precise element of the set of permutations $P_8^{even}=\{P_{(2,4)},P_{(2,6)},P_{(2,8)},P_{(4,6)},P_{(4,8)},P_{(6,8)}\}$, as shown in Table~\ref{tab:evenpairs_element_n3};
    \item[iv)]24 elements divided into 6 quadruples, each of which is associated with a precise element of the set of permutations $P_8^{odd}=\{P_{(2,3)},P_{(2,5)},P_{(2,7)},P_{(4,5)},P_{(4,7)},P_{(6,7)}\}$, as shown in Table~\ref{tab:oddpairs_element_n3}.
\end{enumerate}
\begin{table}[htbp]
    \centering
    \begin{subtable}[t]{0.45\textwidth}
    \centering
        \begin{tabular}{||c|c||}
        \hline
        Pair&Elements\\
        \hline\hline
        (2,4)&10,13,4,6\\
        (2,6)&26,31,18,22\\
        (2,8)&50,57,38,44\\
        (4,6)&28,33,16,20\\
        (4,8)&52,59,40,46\\
        (6,8)&54,61,36,42\\
        \hline
        \end{tabular}
    \caption{Even case.}
    \label{tab:evenpairs_element_n3}
    \end{subtable}
    \quad
    \begin{subtable}[t]{0.45\textwidth}
    \centering
        \begin{tabular}{||c|c||}
        \hline
        Pair&Elements\\
        \hline\hline
        (2,3)&5,7,11,14\\
        (2,5)&17,21,27,32\\
        (2,7)&37,43,51,58\\
        (4,5)&19,23,29,34\\
        (4,7)&39,45,53,60\\
        (6,7)&41,47,55,62\\
        \hline
        \end{tabular}
    \caption{Odd case.}
    \label{tab:oddpairs_element_n3}
    \end{subtable}
\caption{Association between algebraic elements and even/odd (2-cycles) permutations.}
\label{tab:pairs_element_n3}
\end{table}
Each element of the basis has a very specific role captured by the shortcuts described in the following subsections and is taken only once (note that the last element does not participate in the generation of $SU(8)$ matrices). The transition to the SRBB, namely, the replacement of the diagonal elements belonging to the set $\mathcal{J}$ with other Hermitian unitary diagonal matrices, coming from Pauli strings, is illustrated in Table~\ref{tab:SRBB_diagonals_n3} (refer to Table~\ref{tab:SRBB_diagonals_n2} for the explanation). 
\begin{table}[htbp]
\centering
\begin{tabular}{||c|c|c|c|c|c||}
\hline
Decimal&Binary&String&Matrix&Element&Replaced\\
\hline\hline
0&000&$\mathbb{I}_2\otimes\mathbb{I}_2\otimes\mathbb{I}_2$&$\mathbb{I}_8$&64&no\\
1&001&$\mathbb{I}_2\otimes\mathbb{I}_2\otimes\sigma_3$&$diag(1,-1,1,-1,1,-1,1,-1)$&3&no\\
2&010&$\mathbb{I}_2\otimes\sigma_3\otimes\mathbb{I}_2$&$diag(1,1,-1,-1,1,1,-1,-1)$&8&yes\\
3&011&$\mathbb{I}_2\otimes\sigma_3\otimes\sigma_3$&$diag(1,-1,-1,1,1,-1,-1,1)$&15&yes\\
4&100&$\sigma_3\otimes\mathbb{I}_2\otimes\mathbb{I}_2$&$diag(1,1,1,1,-1,-1,-1,-1)$&24&yes\\
5&101&$\sigma_3\otimes\mathbb{I}_2\otimes\sigma_3$&$diag(1,-1,1,-1,-1,1,-1,1)$&35&yes\\
6&110&$\sigma_3\otimes\sigma_3\otimes\mathbb{I}_2$&$diag(1,1,-1,-1,-1,-1,1,1)$&48&yes\\
7&111&$\sigma_3\otimes\sigma_3\otimes\sigma_3$&$diag(1,-1,-1,1,-1,1,1,-1)$&63&yes\\
\hline
\end{tabular}
\caption{The new sequence of diagonal basis elements that marks the transition to SRBB.}
\label{tab:SRBB_diagonals_n3}
\end{table}
Therefore, the complete SRBB of order 8 is identified as the set $\mathcal{U}^{(8)}=\{U_j^{(8)}:\;1\leqslant j\leqslant64\}$ and is defined by
\begin{equation}
U^{(8)}_j=\left\{
\begin{array}{l}
\mbox{new diagonal elements if }j\in\mathcal{J}=\{3,8,15,24,35,48,63,64\}\\
B^{(8)}_j\mbox{ otherwise}
\end{array}\right.
\end{equation}

\subsection{Diagonal contributions}
The $Z$-factor~(\ref{eqn:def_Z}) responsible for diagonal contributions in the case $n=3$ is:
\begin{equation}
    Z(\Theta_Z)=\prod_{j\in\mathcal{J}_Z}\exp\{i\,\theta_jU_j^{(8)}\}
\end{equation}
where $\mathcal{J}_Z=\{m^2-1,\;2\leqslant m\leqslant8\}$ and $\Theta_Z$ collect the associated Lie parameters. According to Section~\ref{sec:diagonals_n2} and~\cite{Sarkar:2024bax}, Figure~\ref{fig:zeta_diagram_n3} shows the values of the parametric pairs $(m,m')$, representing the first essential step for designing the circuits.
\begin{figure}[htbp]
\centering
\begin{subfigure}[b]{0.4\textwidth}
\centering
    \begin{tabular}{||c|c|c|c||}
    \hline
    Decimal&Binary&String&Element\\
    \hline\hline
    1&001&$\mathbb{I}_2\otimes\mathbb{I}_2\otimes\sigma_3$&3\\
    2&010&$\mathbb{I}_2\otimes\sigma_3\otimes\mathbb{I}_2$&8\\
    3&011&$\mathbb{I}_2\otimes\sigma_3\otimes\sigma_3$&15\\
    4&100&$\sigma_3\otimes\mathbb{I}_2\otimes\mathbb{I}_2$&24\\
    5&101&$\sigma_3\otimes\mathbb{I}_2\otimes\sigma_3$&35\\
    6&110&$\sigma_3\otimes\sigma_3\otimes\mathbb{I}_2$&48\\
    7&111&$\sigma_3\otimes\sigma_3\otimes\sigma_3$&63\\
    \hline
    \end{tabular}
\end{subfigure}\hfill
\begin{subfigure}[b]{0.4\textwidth}
\resizebox{0.98\textwidth}{!}{
\centering
\tikzset{phase label/.append style={label position=above}}
    \begin{quantikz}
    &\push{\;\mathbb{I}_2\;}\gategroup[7,steps=1,style={thin}]{$1^{\circ}\,qubit\,(0)$}&&&\push{\;\mathbb{I}_2\;}\gategroup[7,steps=1,style={thin}]{$2^{\circ}\,qubit\,(1)$}&&&\push{\;\sigma_3\;}\gategroup[7,steps=1,style={thin}]{$3^{\circ}\,qubit\,(2)$}&\rstick{$m=2$}\\
    &\push{\;\mathbb{I}_2\;}&&&\push{\;\sigma_3\;}&&&\push{\;\mathbb{I}_2\;}&\rstick{$m=1$}\\
    &\push{\;\mathbb{I}_2\;}&&&\push{\;\sigma_3\;}&&&\push{\;\sigma_3\;}&\rstick{$m=2,\,m'=1$}\\
    &\push{\;\sigma_3\;}&&&\push{\;\mathbb{I}_2\;}&&&\push{\;\mathbb{I}_2\;}&\rstick{$m=0$}\\
    &\push{\;\sigma_3\;}&&&\push{\;\mathbb{I}_2\;}&&&\push{\;\sigma_3\;}&\rstick{$m=2,\,m'=0$}\\
    &\push{\;\sigma_3\;}&&&\push{\;\sigma_3\;}&&&\push{\;\mathbb{I}_2\;}&\rstick{$m=1,\,m'=0$}\\
    &\push{\;\sigma_3\;}&&&\push{\;\sigma_3\;}&&&\push{\;\sigma_3\;}&\rstick{$m=2,\,m'=0,1$}
    \end{quantikz}}
\end{subfigure}
\caption{Diagram to find the position of CNOTs and rotation gates.}
\label{fig:zeta_diagram_n3}
\end{figure}
Then, the second step of the shortcut consists in associating each line of the previous diagram (and therefore each associated exponential factor) with one or more merged gates, following the scheme already explained in Section~\ref{sec:diagonals_n2} (see Figure~\ref{fig:zeta_components_n3}). Finally, Figure~\ref{fig:zeta_circuit_n3} shows the sequence of diagonal elements that minimizes\footnote{Section~\ref{sec:scaling} explains how the CNOT-optimized circuit emerges from the scalable algorithm of simplifications.} the number of CNOTs within the $Z$-factor circuit, which is built following Proposition~\ref{prop_Z_scalable} and Algorithm~\ref{alg1} of Section~\ref{sec:scaling}.
\begin{figure}[htbp]
\centering
    \begin{subfigure}[b]{0.3\textwidth}
        \centering
        \begin{quantikz}
        \lstick{0}&&&&\\
        \lstick{1}&&&&\\
        \lstick{2}&&\gate{R_z(\theta_3)}&&
        \end{quantikz}
        \caption{$\exp\{i\,\theta_3U_3\}$}
    \end{subfigure}
    \hfill
    \begin{subfigure}[b]{0.3\textwidth}
        \centering
        \begin{quantikz}
        \lstick{0}&&&&\\
        \lstick{1}&&\gate{R_z(\theta_8)}&&\\
        \lstick{2}&&&&   
        \end{quantikz} 
        \caption{$\exp\{i\,\theta_8U_8\}$}
    \end{subfigure}
    \hfill
    \begin{subfigure}[b]{0.3\textwidth}
        \centering
        \begin{quantikz}
        \lstick{0}&&&&\\
        \lstick{1}&\ctrl{1}&&\ctrl{1}&\\
        \lstick{2}&\targ{}&\gate{R_z(\theta_{15})}&\targ{}&
        \end{quantikz}     
        \caption{$\exp\{i\,\theta_{15}U_{15}\}$}
    \end{subfigure}
    \begin{subfigure}[b]{0.3\textwidth}
        \centering
        \begin{quantikz}
        \lstick{0}&&\gate{R_z(\theta_{24})}&&\\
        \lstick{1}&&&&\\
        \lstick{2}&&&&
        \end{quantikz}     
        \caption{$\exp\{i\,\theta_{24}U_{24}\}$}
    \end{subfigure}
    \hfill
    \begin{subfigure}[b]{0.3\textwidth}
        \centering
        \begin{quantikz}
        \lstick{0}&\ctrl{2}&&\ctrl{2}&\\
        \lstick{1}&&&&\\
        \lstick{2}&\targ{}&\gate{R_z(\theta_{35})}&\targ{}&
        \end{quantikz}     
        \caption{$\exp\{i\,\theta_{35}U_{35}\}$}
    \end{subfigure}
    \hfill
    \begin{subfigure}[b]{0.3\textwidth}
        \centering
        \begin{quantikz}
        \lstick{0}&\ctrl{1}&&\ctrl{1}&\\
        \lstick{1}&\targ{}&\gate{R_z(\theta_{48})}&\targ{}&\\
        \lstick{2}&&&&   
        \end{quantikz} 
        \caption{$\exp\{i\,\theta_{48}U_{48}\}$}
    \end{subfigure}
    \begin{subfigure}[b]{0.5\textwidth}
        \centering
        \begin{quantikz}
        \lstick{0}&\ctrl{2}&&&&\ctrl{2}&\\
        \lstick{1}&&\ctrl{1}&&\ctrl{1}&&\\
        \lstick{2}&\targ{}&\targ{}&\gate{R_z(\theta_{63})}&\targ{}&\targ{}&
        \end{quantikz}     
        \caption{$\exp\{i\,\theta_{63}U_{63}\}$}
    \end{subfigure}
\caption{Seven little blocks that make up the $Z$-factor circuit for $n=3$.}
\label{fig:zeta_components_n3}
\end{figure}
\begin{figure}[ht]
\resizebox{1\textwidth}{!}{
\centering
    \begin{quantikz}
    \lstick{$1^\circ$q}&\gategroup[3,steps=12,style={dashed,rounded corners},label style={label position=below,anchor=north,yshift=-0.3cm}]{Section with simplifications (scalable)}&&&\ctrl{2}&&&&\ctrl[style={red}]{2}&\ctrl[style={red}]{2}&&\ctrl{2}&&\ctrl{1}\gategroup[2,steps=4,style={dashed,rounded corners},label style={label position=above,anchor=north,yshift=+0.3cm}]{Circuit for $n=2$ (recursive)}&&\ctrl{1}&\gate{R_z(\theta_{24})}&\\
    \lstick{$2^\circ$q}&\ctrl{1}&&\ctrl[style={red}]{1}&&\ctrl[style={red}]{1}&&\ctrl{1}&&&&&&\targ{}&\gate{R_z(\theta_{48})}&\targ{}&\gate{R_z(\theta_8)}&\\
    \lstick{$3^\circ$q}&\targ{}&\gate{R_z(\theta_{15})}&\targ[style={red}]{}&\targ{}&\targ[style={red}]{}&\gate{R_z(\theta_{63})}&\targ{}&\targ[style={red}]{}&\targ[style={red}]{}&\gate{R_z(\theta_{35})}&\targ{}&\gate{R_z(\theta_3)}&&&&&
    \end{quantikz}}
\caption{$Z$-factor quantum circuit for $n=3$; the highlighted gates simplify each other.}
\label{fig:zeta_circuit_n3}
\end{figure}

\subsection{Even contributions}\label{sec:even_3qubits}
As described in Section~\ref{sec:recursiveSRBB}, the $\Psi$-factor~(\ref{eqn:def_Psi}) for a general 3-qubit system,
\begin{equation}
    \Psi(\Theta_\Psi)=\left[\prod_{(i,j)\in\mathcal{A}_\Psi}\exp\{i\,\theta_iU_i\}\exp\{i\,\theta_jU_j\}\right]\prod_{x=1}^{3}\left(\prod T_x^e\right)M_x^e\left(\prod T_x^e\right)
\end{equation}
is built from two different type of subsets of SRBB elements:
\begin{itemize}
\item a set of pairs called $\mathcal{A}_{\Psi}$ makes up the first sub-factor of type A enclosed in squared brackets,
\begin{equation}
\mathcal{A}_{\Psi}=\{[(2m-1)^2,(4m^2-2m)],\;1\leqslant m\leqslant4)\}=\{(1,2),(9,12),(25,30),(49,56)\}
\end{equation}
\item a set of quadruples called $\mathcal{B}_\Psi$, divided into three subsets $\mathcal{B}_{\Psi,x}$, makes up the second sub-factor of type $B$,
\begin{equation}
\mathcal{B}_\Psi=\left\{
\begin{array}{l}
    \mathcal{B}_{\Psi,1}=\left\{
    \begin{array}{l}
        \left[h_4(1),f_4(1),h_3(2),f_3(2)\right]=(10,13,4,6)\\
        \left[h_8(5),f_8(5),h_7(6),f_7(6)\right]=(54,61,36,42)
    \end{array}\right.\\
    \mathcal{B}_{\Psi,2}=\left\{
    \begin{array}{l}
        \left[(h_6(1),f_6(1),h_5(2),f_5(2)\right]=(26,31,18,22)\\
        \left[h_8(3),f_8(3),h_7(4),f_7(4)\right]=(52,59,40,46)
    \end{array}\right.\\
    \mathcal{B}_{\Psi,3}=\left\{
    \begin{array}{l}
        \left[h_8(1),f_8(1),h_7(2),f_7(2)\right]=(50,57,38,44)\\
        \left[h_6(3),f_6(3),h_5(4),f_5(4)\right]=(28,33,16,20)
    \end{array}\right.
\end{array}\right.
\end{equation}
\end{itemize}
Furthermore, from Table~\ref{tab:evenpairs_element_n3}, it is possible to identify the $2^{n-1}-1\big|_{n=3}=3$ subsets $\prod T_x^e$ for $x=1,2,3$ of disjoint transpositions, each containing $t=2^{n-2}\big|_{n=3}=2$ elements, as stated by Equation~\ref{eqn:def_prodT}, using a simple selection algorithm. First, the elements of the permutation group $P_8^{even}$ are arranged in a tabular representation, as shown in Table~\ref{tab:2cyclesP8even}, where each row is indexed by the first element of the index pair. Then, to ensure that the elements selected for each $\prod T_x^e$ are disjoint from each other, it is enough to consider the following instructions:
\begin{itemize}
    \item[i)] for $1\leqslant x\leqslant3$, the first element $P_{(\alpha,\beta)}$ of the $t$-tuple is chosen from the first row of Table~\ref{tab:2cyclesP8even}, read from left to right, skipping any elements considered in previous cycles;
    \item[ii)] the remaining elements $P_{(\alpha',\beta')}$ of the $t$-tuple are chosen starting from the bottom of the table, this time reading the rows from right to left, and taking care to consider $\alpha'\neq\beta$. The selected items will no longer be available for subsequent cycles;
    \item[iii)] repeat instructions i) and ii) for all values of $x$.
\end{itemize}
In \cite[Theorem 5.1]{sarkar2024quantum}, a closed form for the permutation strings defining the disjoint partitions of $P_{2^n}^{even/odd}$ is rigorously defined; however, here, a logic preparatory to the implementation is preferred, and the above selection recipe is an equivalent alternative version. The same logic, in fact, aligns with the intent of the \emph{new formulation} to manage increasing complexity with an implementation-oriented approach, and it has been used to implement the construction of the main factors $Z,\Psi,\Phi$ and the verification of their algebraic properties in the repository \cite{srbb-syn}.
\begin{table}[htbp]
    \centering
    \begin{tabular}{|c|c|c|}
    \hline
    $P_{(2,4)}$&$P_{(2,6)}$&$P_{(2,8)}$\\
    \hline
    &$P_{(4,6)}$&$P_{(4,8)}$\\
    \hline
    &&$P_{(6,8)}$\\
    \hline
    \end{tabular}
    \caption{Elements of the 2-cycle group $P_8^{even}$.}
    \label{tab:2cyclesP8even}
\end{table}
Accordingly,
\begin{equation}
    \prod T_1^e=P_{(2,4)}P_{(6,8)}\;,\quad\prod T_2^e=P_{(2,6)}P_{(4,8)}\;,\quad\prod T_3^e=P_{(2,8)}P_{(4,6)}
\end{equation}
Moreover,
\begin{equation}
\begin{aligned}
    M_1^e&=\prod T_1^e\left[\exp\{i\,\theta\,U_{h_4(1)}\}\exp\{i\,\theta\,U_{f_4(1)}\}\exp\{i\,\theta\,U_{h_3(2)}\}\exp\{i\,\theta\,U_{f_3(2)}\}\right.\cdot\\
    &\cdot\left.\exp\{i\,\theta\,U_{h_8(5)}\}\exp\{i\,\theta\,U_{f_8(5)}\}\exp\{i\,\theta\,U_{h_7(6)}\}\exp\{i\,\theta\,U_{f_7(6)}\}\right]\prod T_1^e
\end{aligned}
\end{equation}
\begin{equation}
\begin{aligned}
    M_2^e&=\prod T_2^e\left[\exp\{i\,\theta\,U_{h_6(1)}\}\exp\{i\,\theta\,U_{f_6(1)}\}\exp\{i\,\theta\,U_{h_5(2)}\}\exp\{i\,\theta\,U_{f_5(2)}\}\right.\cdot\\
    &\cdot\left.\exp\{i\,\theta\,U_{h_8(3)}\}\exp\{i\,\theta\,U_{f_8(3)}\}\exp\{i\,\theta\,U_{h_7(4)}\}\exp\{i\,\theta\,U_{f_7(4)}\}\right]\prod T_2^e
\end{aligned}
\end{equation}
\begin{equation}
\begin{aligned}
    M_3^e&=\prod T_3^e\left[\exp\{i\,\theta\,U_{h_8(1)}\}\exp\{i\,\theta\,U_{f_8(1)}\}\exp\{i\,\theta\,U_{h_7(2)}\}\exp\{i\,\theta\,U_{f_7(2)}\}\right.\cdot\\
    &\cdot\left.\exp\{i\,\theta\,U_{h_6(3)}\}\exp\{i\,\theta\,U_{f_6(3)}\}\exp\{i\,\theta\,U_{h_5(4)}\}\exp\{i\,\theta\,U_{f_5(4)}\}\right]\prod T_3^e
\end{aligned}    
\end{equation}
so that, the first sub-factor of type $A$ and the factors called $M_x^e$ into the second sub-factor of type $B$ become $(2\times2)$-block diagonal matrices,
\begin{equation}
    \prod_{j\in\mathcal{A}_\Psi}\exp\{i\,\theta_jU_j\}=
    \begin{pmatrix}
        A_{\Psi}^1&0&0&0\\
        0&A_{\Psi}^2&0&0\\
        0&0&A_{\Psi}^3&0\\
        0&0&0&A_{\Psi}^4
    \end{pmatrix}
\end{equation}
\begin{equation}
    M_x^{e}=\prod T_x^e\cdot\prod_{j\in\mathcal{B}_{\Psi,x}}\exp\{i\,\theta_jU_j\}\cdot\prod T_x^e=
    \begin{pmatrix}
        B_{\Psi,x}^1&0&0&0\\
        0&B_{\Psi,x}^2&0&0\\
        0&0&B_{\Psi,x}^3&0\\
        0&0&0&B_{\Psi,x}^4
    \end{pmatrix}
\end{equation}
where each block belongs to $SU(2)$, as can be quickly verified, thus composing as a whole two unitary matrices belonging to $M_3ZYZ$. The sub-circuit for the first sub-factor of type $A$ and for each $M_x^e$ of the second sub-factor of type $B$ is given by the complete decomposition of a $M_3ZYZ$-type matrix, illustrated in Figure~\ref{fig:zyz_circuit_n3}, where rotation parameters are renamed after the decomposition. Finally, Figure~\ref{fig:prodTxe_diagram_n3} shows the quantum circuits for the factors $\prod T_x^e$ (see Appendix~\ref{sec:cnot_sequences}).

\begin{figure}[ht]
\centering
    \resizebox{1\textwidth}{!}{
    \begin{subfigure}[b]{1\textwidth}
    \centering
        \begin{quantikz}
            \lstick{0}&&&&\ctrl{2}&&&&\rstick{...}\\
            \lstick{1}&&\ctrl{1}&&&&\ctrl{1}&&\rstick{...}\\
            \lstick{2}&\gate{R_z(\theta_1^*)}&\targ{}&\gate{R_z(\theta_2^*)}&\targ{}&\gate{R_z(\theta_3^*)}&\targ{}&\gate{R_z(\theta_4^*)}&\rstick{...}
        \end{quantikz}
    \end{subfigure}}\vspace{3mm}
    \resizebox{1\textwidth}{!}{
    \begin{subfigure}[b]{1\textwidth}
    \centering
        \begin{quantikz}
            \lstick{...}&&&&\ctrl{2}&&&&&\rstick{...}\\
            \lstick{...}&&\ctrl{1}&&&&\ctrl{1}&&&\rstick{...}\\
            \lstick{...}&\gate{R_y(\theta_5^*)}&\targ{}&\gate{R_y(\theta_6^*)}&\targ{}&\gate{R_y(\theta_7^*)}&\targ{}&\gate{R_y(\theta_8^*)}&&\rstick{...}
        \end{quantikz}
    \end{subfigure}}\vspace{3mm}
    \resizebox{1\textwidth}{!}{
    \begin{subfigure}[b]{1\textwidth}
    \centering
        \begin{quantikz}
            \lstick{...}&&&&\ctrl{2}&&&&\ctrl{2}&\\
            \lstick{...}&&\ctrl{1}&&&&\ctrl{1}&&&\\
            \lstick{...}&\gate{R_z(\theta_9^*)}&\targ{}&\gate{R_z(\theta_{10}^*)}&\targ{}&\gate{R_z(\theta_{11}^*)}&\targ{}&\gate{R_z(\theta_{12}^*)}&\targ{}&
        \end{quantikz}
    \end{subfigure}}
    \caption{The fully decomposed $M_3ZYZ$ quantum circuit.}
    \label{fig:zyz_circuit_n3}
\end{figure}\vspace{5cm}
\begin{figure}[ht]
    \begin{subfigure}[c]{1\textwidth}
    \centering
        \begin{tabular}{||c|c|c|c||}
        \hline
        x&Bits&Binary&Control-Target\\
        \hline\hline
        1&$x_0x_1$&01&(2,1)\\
        2&$x_0x_1$&10&(2,0)\\
        3&$x_0x_1$&11&(2,0),(2,1)\\
        \hline
        \end{tabular}   
    \end{subfigure}\vspace{3mm}
    \begin{subfigure}[c]{0.3\textwidth}
    \centering
        \begin{quantikz}
        \lstick{0}&&&&\\
        \lstick{1}&&\targ{}&&\\
        \lstick{2}&&\ctrl{-1}&&
        \end{quantikz}
        \caption{$\prod T_1^e$}
    \end{subfigure}
    \begin{subfigure}[c]{0.3\textwidth}
    \centering
        \begin{quantikz}
        \lstick{0}&&\targ{}&&\\
        \lstick{1}&&&&\\
        \lstick{2}&&\ctrl{-2}&&
        \end{quantikz}
        \caption{$\prod T_2^e$}
    \end{subfigure}
    \begin{subfigure}[c]{0.3\textwidth}
    \centering
        \begin{quantikz}
        \lstick{0}&\targ{}&&&\\
        \lstick{1}&&\targ{}&&\\
        \lstick{2}&\ctrl{-2}&\ctrl{-1}&&
        \end{quantikz}
        \caption{$\prod T_3^e$}
    \end{subfigure}
\caption{Diagram to find the control-target pair for the CNOTs representing $\prod T_x^e$.}
\label{fig:prodTxe_diagram_n3}
\end{figure}

\subsection{Odd contributions}\label{sec:odd_3qubits}
In the general case $n=3$, the $\Phi$-factor~(\ref{eqn:def_Phi}),
\begin{equation}
    \Phi(\Theta_\Phi)=\prod_{x=1}^3\left(\prod T_x^o\right)M_x^o\left(\prod T_x^o\right)
\end{equation}
is built from a subset $\mathcal{C}_\Phi$ of quadruples of SRBB elements:
\begin{equation}
\mathcal{C}_\Phi=\left\{
\begin{array}{l}
    \mathcal{C}_{\Phi,1}=\left\{
    \begin{array}{l}
        \left[h_3(1),f_3(1),h_4(2),f_4(2)\right]=(5,7,11,14)\\
        \left[h_7(5),f_7(5),h_8(6),f_8(6)\right]=(41,47,55,62)
    \end{array}\right.\\
    \mathcal{C}_{\Phi,2}=\left\{
    \begin{array}{l}
        \left[(h_5(1),f_5(1),h_6(2),f_6(2)\right]=(17,21,27,32)\\
        \left[h_7(3),f_7(3),h_8(4),f_8(4)\right]=(39,45,53,60)
    \end{array}\right.\\
    \mathcal{C}_{\Phi,3}=\left\{
    \begin{array}{l}
        \left[h_7(1),f_7(1),h_8(2),f_8(2)\right]=(37,43,51,58)\\
        \left[h_5(3),f_5(3),h_6(4),f_6(4)\right]=(19,23,29,34)
    \end{array}\right.
\end{array}\right.
\end{equation}
Besides, following the selection procedure described for the even contributions (Section~\ref{sec:even_3qubits}) and referring to Table~\ref{tab:2cyclesP8odd},
\begin{table}[htbp]
    \centering
    \begin{tabular}{|c|c|c|}
    \hline
    $P_{(2,3)}$&$P_{(2,5)}$&$P_{(2,7)}$\\
    \hline
    &$P_{(4,5)}$&$P_{(4,7)}$\\
    \hline
    &&$P_{(6,7)}$\\
    \hline
    \end{tabular}
    \caption{Elements of the 2-cycle group $P_8^{odd}$.}
    \label{tab:2cyclesP8odd}
\end{table}
\begin{equation}
    \prod T_1^o=P_{(2,3)}P_{(6,7)}\;,\quad\prod T_2^o=P_{(2,5)}P_{(4,7)}\;,\quad\prod T_3^o=P_{(2,7)}P_{(4,5)}\;,
\end{equation}
and
\begin{equation}\label{eqn:M1odd_n3}
\begin{aligned}
    M_1^o&=\prod T_1^o\left[\exp\{i\,\theta\,U_{h_3(1)}\}\exp\{i\,\theta\,U_{f_3(1)}\}\exp\{i\,\theta\,U_{h_4(2)}\}\exp\{i\,\theta\,U_{f_4(2)}\}\right.\cdot\\
    &\cdot\left.\exp\{i\,\theta\,U_{h_7(5)}\}\exp\{i\,\theta\,U_{f_7(5)}\}\exp\{i\,\theta\,U_{h_8(6)}\}\exp\{i\,\theta\,U_{f_8(6)}\}\right]\prod T_1^o
\end{aligned}
\end{equation}
\begin{equation}\label{eqn:M2odd_n3}
\begin{aligned}
    M_2^o&=\prod T_2^o\left[\exp\{i\,\theta\,U_{h_5(1)}\}\exp\{i\,\theta\,U_{f_5(1)}\}\exp\{i\,\theta\,U_{h_6(2)}\}\exp\{i\,\theta\,U_{f_6(2)}\}\right.\cdot\\
    &\cdot\left.\exp\{i\,\theta\,U_{h_7(3)}\}\exp\{i\,\theta\,U_{f_7(3)}\}\exp\{i\,\theta\,U_{h_8(4)}\}\exp\{i\,\theta\,U_{f_8(4)}\}\right]\prod T_2^o=
\end{aligned}
\end{equation}
\begin{equation}\label{eqn:M3odd_n3}
\begin{aligned}
    M_3^o&=\prod T_3^o\left[\exp\{i\,\theta\,U_{h_7(1)}\}\exp\{i\,\theta\,U_{f_7(1)}\}\exp\{i\,\theta\,U_{h_8(2)}\}\exp\{i\,\theta\,U_{f_8(2)}\}\right.\cdot\\
    &\cdot\left.\exp\{i\,\theta\,U_{h_5(3)}\}\exp\{i\,\theta\,U_{f_5(3)}\}\exp\{i\,\theta\,U_{h_6(4)}\}\exp\{i\,\theta\,U_{f_6(4)}\}\right]\prod T_3^o=
\end{aligned}    
\end{equation}
Accordingly, the factors called $M_x^o$ become $2\times2$-block diagonal matrices,
\begin{equation}
    M_x^{o}=\prod T_x^o\cdot\prod_{i\in\mathcal{C}_{\Phi,x}}\exp\{i\,\theta_iU_i\}\cdot\prod T_x^o=
    \begin{pmatrix}
        C_{\Phi,x}^1&0&0&0\\
        0&C_{\Phi,x}^2&0&0\\
        0&0&C_{\Phi,x}^3&0\\
        0&0&0&C_{\Phi,x}^4
    \end{pmatrix}
\end{equation}
where each block no longer belongs to $SU(2)$, but only to $U(2)$, thus composing as a whole a unitary matrix that can be decomposed only in part through a $M_3ZYZ$~\cite{sarkar2023scalable}. The circuit for $M_x^o$ is given in Figure~\ref{fig:Mxo_circuit_n3}.

\begin{figure}[htbp]
    \begin{subfigure}[b]{1\textwidth}
        \centering
        \begin{quantikz}
            \lstick{0}&\gate{R_z(\theta_{1,x}^*)}\gategroup[3,steps=4,style={dashed,rounded corners},label style={label position=above,anchor=north,yshift=+0.3cm}]{pre-scaling}&\ctrl{1}&&\ctrl{1}&\gate[3]{M_3ZYZ}&\ctrl{1}\gategroup[3,steps=4,style={dashed,rounded corners},label style={label position=above,anchor=north,yshift=+0.3cm}]{post-scaling}&&\ctrl{1}&\gate{R_z(\theta_{6,x}^*)}&\\
            \lstick{1}&\gate{R_z(\theta_{2,x}^*)}&\targ{}&\gate{R_z(\theta_{3,x}^*)}&\targ{}&&\targ{}&\gate{R_z(\theta_{4,x}^*)}&\targ{}&\gate{R_z(\theta_{5,x}^*)}&\\
            \lstick{2}&&&&&&&&&&
        \end{quantikz}
    \end{subfigure}    
    \caption{Quantum circuit for $M_x^o$ factors; a $ZYZ$-core is surrounded by multi-controlled rotations.}
    \label{fig:Mxo_circuit_n3}
\end{figure}
Therefore, for $n=3$, the scalability scheme proposed by~\cite{sarkar2024quantum} begins to hold, justified by the algebraic properties under permutation shown by equations~(\ref{eqn:M1odd_n3}), (\ref{eqn:M2odd_n3}), (\ref{eqn:M3odd_n3}). The remaining three pairs of factors $\prod T_x^o$ (see Figure~\ref{fig:prodTxo_diagram_n3}) can be implemented thanks to the recipe described in Appendix~\ref{sec:cnot_sequences} and formalized in~\cite[Theorem 5.1]{sarkar2024quantum}.
\begin{figure}[htbp]
    \begin{subfigure}[c]{1\textwidth}
    \centering
        \begin{tabular}{||c|c|c|c|c||}
        \hline
        x&Bits&Binary&Control-Target&$k$-index\\
        \hline\hline
        1&$x_0x_1$&01&(2,1)&1\\
        2&$x_0x_1$&10&(2,0)&0\\
        3&$x_0x_1$&11&(2,0),(2,1)&0\\
        \hline
        \end{tabular}   
    \end{subfigure}\vspace{3mm}
    \begin{subfigure}[c]{0.3\textwidth}
    \centering
        \begin{quantikz}
        \lstick{0}&&&&\\
        \lstick{1}&\ctrl{1}&\targ{}&\ctrl{1}&\\
        \lstick{2}&\targ{}&\ctrl{-1}&\targ{}&
        \end{quantikz}
        \caption{$\prod T_1^o$}
    \end{subfigure}
    \begin{subfigure}[c]{0.3\textwidth}
    \centering
        \begin{quantikz}
        \lstick{0}&\ctrl{2}&\targ{}&\ctrl{2}&\\
        \lstick{1}&&&&\\
        \lstick{2}&\targ{}&\ctrl{-2}&\targ{}&
        \end{quantikz}
        \caption{$\prod T_2^o$}
    \end{subfigure}
    \begin{subfigure}[c]{0.3\textwidth}
    \centering
        \begin{quantikz}
        \lstick{0}&\ctrl{2}&\targ{}&&\ctrl{2}&\\
        \lstick{1}&&&\targ{}&&\\
        \lstick{2}&\targ{}&\ctrl{-2}&\ctrl{-1}&\targ{}&
        \end{quantikz}
        \caption{$\prod T_3^o$}
    \end{subfigure}
\caption{Diagram to find the control-target pair and the $k$-index for the CNOTs representing $\prod T_x^o$.}
\label{fig:prodTxo_diagram_n3}
\end{figure}

\subsection{3-qubit circuit to approximate SU(8)}
The overall quantum circuit to approximate any special unitary operator according to the equation~(\ref{eqn:U_approx}) within a QNN framework can be assembled from the sub-circuits of the previous subsections. For one single layer, the approximating operator can be rewritten in the following way:
\begin{equation}
\begin{split}
    &\mathcal{U}_{approx}(\Theta)=Z_1(\Theta_{Z})\,\Psi_1(\Theta_\Psi)\,\Phi_1(\Theta_\Phi)=\\
    &=\prod_{i\in\mathcal{J}_Z}\exp\{i\;\theta_iU_i\}\prod_{j\in\mathcal{A}_\Psi}\exp\{i\;\theta_jU_j\}\;\prod_{x=1}^3\left[\left(\prod T_x^e\right)\left(\prod T_x^e\right)\prod_{k_e\in\mathcal{B}_{\Psi,x}}\exp\{i\;\theta_{k_e}U_{k_e}\}\left(\prod T_x^e\right)\left(\prod T_x^e\right)\right]\cdot\\
    &\cdot\prod_{x=1}^3\left[\left(\prod T_x^o\right)\left(\prod T_x^o\right)\prod_{k_o\in\mathcal{C}_{\Phi,x}}\exp\{i\;\theta_{k_o}U_{k_o}\}\left(\prod T_x^o\right)\left(\prod T_x^o\right)\right]=\\
    &=\prod_{i\in\mathcal{J}_Z}\exp\{i\;\theta_iU_i\}\prod_{j\in\mathcal{A}_\Psi}\exp\{i\;\theta_jU_j\}\prod_{k_e\in\mathcal{B}_{\Psi,1}}\exp\{i\;\theta_{k_e}U_{k_e}\}\prod_{l_e\in\mathcal{B}_{\Psi,2}}\exp\{i\;\theta_{l_e}U_{l_e}\}\prod_{m_e\in\mathcal{B}_{\Psi,3}}\exp\{i\;\theta_{m_e}U_{m_e}\}\cdot\\
    &\cdot\prod_{k_o\in\mathcal{C}_{\Phi,1}}\exp\{i\;\theta_{k_o}U_{k_o}\}\prod_{l_o\in\mathcal{C}_{\Phi,2}}\exp\{i\;\theta_{l_o}U_{l_o}\}\prod_{m_o\in\mathcal{C}_{\Phi,3}}\exp\{i\;\theta_{m_o}U_{m_o}\}
\end{split}
\end{equation}
The last equality is due to the properties of permutation matrices and reveals the crucial role of grouping (and also ordering). Finally, in Figure~\ref{fig:n3circuit} the overall quantum circuit is depicted from left to right, line by line, taking care to label its sub-components.
\begin{figure}[htbp]
    \begin{subfigure}[b]{1\textwidth}
        \centering
            \begin{quantikz}
            \lstick{0}&\ctrl{2}\gategroup[3,steps=4,style={dashed,rounded corners},label style={label position=below,anchor=north,yshift=-0.3cm}]{$\prod T_3^o$}&\targ{}&&\ctrl{2}&\gate[3]{M_3^o-circuit}&\ctrl{2}\gategroup[3,steps=4,style={dashed,rounded corners},label style={label position=below,anchor=north,yshift=-0.3cm}]{$\prod T_3^o$}&\targ[style={red}]{}&&\ctrl[style={red}]{2}&\ctrl[style={red}]{2}\gategroup[3,steps=3,style={dashed,rounded corners},label style={label position=below,anchor=north,yshift=-0.3cm}]{$\prod T_2^o$}&\targ[style={red}]{}&\ctrl{2}&\rstick{...}\\
            \lstick{1}&&&\targ{}&&&&&\targ{}&&&&&\rstick{...}\\
            \lstick{2}&\targ{}&\ctrl{-2}&\ctrl{-1}&\targ{}&&\targ{}&\ctrl[style={red}]{-2}&\ctrl{-1}&\targ[style={red}]{}&\targ[style={red}]{}&\ctrl[style={red}]{-2}&\targ{}&\rstick{...}
            \end{quantikz}        
    \end{subfigure}\vspace{5mm}
    \begin{subfigure}[b]{1\textwidth}
        \centering
            \begin{quantikz}
            \lstick{...}&\gate[3]{M_2^o-circuit}&\ctrl{2}\gategroup[3,steps=3,style={dashed,rounded corners},label style={label position=below,anchor=north,yshift=-0.3cm}]{$\prod T_2^o$}&\targ{}&\ctrl{2}&\gategroup[3,steps=3,style={dashed,rounded corners},label style={label position=below,anchor=north,yshift=-0.3cm}]{$\prod T_1^o$}&&&\gate[3]{M_1^o-circuit}&\gategroup[3,steps=3,style={dashed,rounded corners},label style={label position=below,anchor=north,yshift=-0.3cm}]{$\prod T_1^o$}&&&\rstick{...}\\
            \lstick{...}&&&&&\ctrl{1}&\targ{}&\ctrl{1}&&\ctrl{1}&\targ{}&\ctrl{1}&\rstick{...}\\
            \lstick{...}&&\targ{}&\ctrl{-2}&\targ{}&\targ{}&\ctrl{-1}&\targ{}&&\targ{}&\ctrl{-1}&\targ{}&\rstick{...}
            \end{quantikz}        
    \end{subfigure}\vspace{5mm}
    \begin{subfigure}[b]{1\textwidth}
        \centering
            \begin{quantikz}
            \lstick{...}&\targ{}\gategroup[3,steps=2,style={dashed,rounded corners},label style={label position=below,anchor=north,yshift=-0.3cm}]{$\prod T_3^e$}&&\gate[3]{M_3^e-circuit}&\targ[style={red}]{}\gategroup[3,steps=2,style={dashed,rounded corners},label style={label position=below,anchor=north,yshift=-0.3cm}]{$\prod T_3^e$}&&\targ[style={red}]{}\gategroup[3,steps=1,style={dashed,rounded corners},label style={label position=below,anchor=north,yshift=-0.3cm}]{$\prod T_2^e$}&\gate[3]{M_2^e-circuit}&\targ{}\gategroup[3,steps=1,style={dashed,rounded corners},label style={label position=below,anchor=north,yshift=-0.3cm}]{$\prod T_2^e$}&\gategroup[3,steps=1,style={dashed,rounded corners},label style={label position=below,anchor=north,yshift=-0.3cm}]{$\prod T_1^e$}&\rstick{...}\\
            \lstick{...}&&\targ{}&&&\targ{}&&&&\targ{}&\rstick{...}\\
            \lstick{...}&\ctrl{-2}&\ctrl{-1}&&\ctrl[style={red}]{-2}&\ctrl{-1}&\ctrl[style={red}]{-2}&&\ctrl{-2}&\ctrl{-1}&\rstick{...}
            \end{quantikz}        
    \end{subfigure}\vspace{5mm}
    \begin{subfigure}[b]{1\textwidth}
        \centering
            \begin{quantikz}
            \lstick{...}&\gate[3]{M_1^e-circuit}&\gategroup[3,steps=1,style={dashed,rounded corners},label style={label position=below,anchor=north,yshift=-0.3cm}]{$\prod T_1^e$}&&\gate[3]{M_3ZYZ}\gategroup[3,steps=1,style={dashed,rounded corners},label style={label position=below,anchor=north,yshift=-0.3cm}]{sub-factor A}&&\gate[3]{Z(\Theta_Z)}&\rstick{...}\\
            \lstick{...}&&\targ{}&&&&&\rstick{...}\\
            \lstick{...}&&\ctrl{-1}&&&&&\rstick{...}
            \end{quantikz}        
    \end{subfigure}
\caption{Quantum circuit for general 3-qubit special unitary operators; red gates are simplified gates.}
\label{fig:n3circuit}
\end{figure}

\section{Quantum circuit to approximate 4-qubit systems}\label{sec:4qubits}
As required by the recursive SRBB (Section~\ref{sec:recursiveSRBB}), to build the quantum circuit for approximating any 4-qubit operator, the matrix algebra of order 16 is necessary. Since for $n\geqslant3$ the properties of the algebraic elements guarantee the same scalable structure in terms of grouping and ordering, for $n=4$ only some intermediate steps will be shown to correctly manage the approximation formula in light of the increased number of its elements. It is important to note that these shortcuts summarize the circuit design steps for any value of $n$, making the new formulation extremely practical, as shown below.

The basis $\mathcal{B}^{(16)}$ is composed of 256 algebraic elements capable of parameterizing an arbitrary 4-qubit unitary operator. The set $\{B_j^{(16)}:1\leqslant j\leqslant255\}$ is a basis for the $su(16)$ matrix algebra whose elements satisfy the properties of hermiticity, unitarity, and zero trace. The last element of the basis is, as usual, $B_{256}^{(16)}=\mathbb{I}_{16}$ and it does not participate in the $su(16)$ matrix algebra. The general equation~(\ref{eqn:U_approx}) with $l=1$ groups the elements according to this list:
\begin{enumerate}
    \item[i)]16 diagonal elements belonging to the set $\mathcal{J}=\{m^2-1,\;2\leqslant m\leqslant16\}\cup\{256\}=\mathcal{J}_Z\cup\{256\}$;
    \item[ii)]16 elements with a $ZYZ$-decomposition, divided into pairs, belonging to the set\\$\mathcal{J}_\Psi=\{[(2m-1)^2,(4m^2-2m)],\;1\leqslant m\leqslant8\}$;
    \item[iii)]112 elements divided into 28 quadruples, each of which is associated with a precise 2-cycle of the set of permutations $P_{16}^{even}$, as shown in Table \ref{tab:2cyclesP16even}. For brevity, only the elements associated with the first row of Table~\ref{tab:2cyclesP16even} are reported in Table~\ref{tab:evenpairs_element_n4}.
    \item[iv)]112 elements divided into 28 quadruples, each of which is associated with a precise element of the set of permutations $P_{16}^{odd}$, as shown in Table \ref{tab:2cyclesP16odd}. Again, only the elements associated with the first row of Table~\ref{tab:2cyclesP16odd} are reported in Table~\ref{tab:oddpairs_element_n4}.
\end{enumerate}
\begin{table}[htbp]
    \centering
    \begin{tabular}{|c|c|c|c|c|c|c|}
    \hline
    $P_{(2,4)}$&$P_{(2,6)}$&$P_{(2,8)}$&$P_{(2,10)}$&$P_{(2,12)}$&$P_{(2,14)}$&$P_{(2,16)}$\\
    \hline
    &$P_{(4,6)}$&$P_{(4,8)}$&$P_{(4,10)}$&$P_{(4,12)}$&$P_{(4,14)}$&$P_{(4,16)}$\\
    \hline
    &&$P_{(6,8)}$&$P_{(6,10)}$&$P_{(6,12)}$&$P_{(6,14)}$&$P_{(6,16)}$\\
    \hline
    &&&$P_{(8,10)}$&$P_{(8,12)}$&$P_{(8,14)}$&$P_{(8,16)}$\\
    \hline
    &&&&$P_{(10,12)}$&$P_{(10,14)}$&$P_{(10,16)}$\\
    \hline
    &&&&&$P_{(12,14)}$&$P_{(12,16)}$\\
    \hline
    &&&&&&$P_{(14,16)}$\\
    \hline
    \end{tabular}
    \caption{Elements of the 2-cycle group $P_{16}^{even}$.}
    \label{tab:2cyclesP16even}
\end{table}
\begin{table}[htbp]
    \centering
    \begin{tabular}{||c|c|c||}
    \hline
    Pair&Functions&Elements\\
    \hline\hline
    (2,4)&$h_4(1)$, $f_4(1)$, $h_3(2)$, $f_3(2)$&10,13,4,6\\
    (2,6)&$h_6(1)$, $f_6(1)$, $h_5(2)$, $f_5(2)$&26,31,18,22\\
    (2,8)&$h_8(1)$, $f_8(1)$, $h_7(2)$, $f_7(2)$&50,57,38,44\\
    (2,10)&$h_{10}(1)$, $f_{10}(1)$, $h_9(2)$, $f_9(2)$&82,91,66,74\\
    (2,12)&$h_{12}(1)$, $f_{12}(1)$, $h_{11}(2)$, $f_{11}(2)$&122,133,102,112\\
    (2,14)&$h_{14}(1)$, $f_{14}(1)$, $h_{13}(2)$, $f_{13}(2)$&170,184,146,158\\
    (2,16)&$h_{16}(1)$, $f_{16}(1)$, $h_{15}(2)$, $f_{15}(2)$&226,241,198,212\\
    \hline
    \end{tabular}
    \caption{Association between algebraic elements and even (2-cycles) permutations.}
    \label{tab:evenpairs_element_n4}
\end{table}
\begin{table}[htbp]
    \centering
    \begin{tabular}{|c|c|c|c|c|c|c|}
    \hline
    $P_{(2,3)}$&$P_{(2,5)}$&$P_{(2,7)}$&$P_{(2,9)}$&$P_{(2,11)}$&$P_{(2,13)}$&$P_{(2,15)}$\\
    \hline
    &$P_{(4,5)}$&$P_{(4,7)}$&$P_{(4,9)}$&$P_{(4,11)}$&$P_{(4,13)}$&$P_{(4,15)}$\\
    \hline
    &&$P_{(6,7)}$&$P_{(6,9)}$&$P_{(6,11)}$&$P_{(6,13)}$&$P_{(6,15)}$\\
    \hline
    &&&$P_{(8,9)}$&$P_{(8,11)}$&$P_{(8,13)}$&$P_{(8,15)}$\\
    \hline
    &&&&$P_{(10,11)}$&$P_{(10,13)}$&$P_{(10,15)}$\\
    \hline
    &&&&&$P_{(12,13)}$&$P_{(12,15)}$\\
    \hline
    &&&&&&$P_{(14,15)}$\\
    \hline
    \end{tabular}
    \caption{Elements of the 2-cycle group $P_{16}^{odd}$.}
    \label{tab:2cyclesP16odd}
\end{table}
\begin{table}[htbp]
    \centering
    \begin{tabular}{||c|c|c||}
    \hline
    Pair&Functions&Elements\\
    \hline\hline
    (2,3)&$h_3(1)$, $f_3(1)$, $h_4(2)$, $f_4(2)$&5,7,11,14\\
    (2,5)&$h_5(1)$, $f_5(1)$, $h_6(2)$, $f_6(2)$&17,21,27,32\\
    (2,7)&$h_7(1)$, $f_7(1)$, $h_8(2)$, $f_8(2)$&37,43,51,58\\
    (2,9)&$h_{9}(1)$, $f_{9}(1)$, $h_{10}(2)$, $f_{10}(2)$&65,73,83,92\\
    (2,11)&$h_{11}(1)$, $f_{11}(1)$, $h_{12}(2)$, $f_{12}(2)$&101,111,123,134\\
    (2,13)&$h_{13}(1)$, $f_{13}(1)$, $h_{14}(2)$, $f_{14}(2)$&145,157,171,184\\
    (2,15)&$h_{15}(1)$, $f_{15}(1)$, $h_{16}(2)$, $f_{16}(2)$&197,211,227,242\\
    \hline
    \end{tabular}
    \caption{Association between algebraic elements and odd (2-cycles) permutations.}
    \label{tab:oddpairs_element_n4}
\end{table}
The transition to the SRBB is illustrated in Table~\ref{tab:SRBB_diagonals_n4} which shows the list of new diagonal elements generated from Pauli strings. 
\begin{table}[ht]
\centering
\begin{tabular}{||c|c|c|c||}
\hline
Decimal&Binary&String&Element\\
\hline\hline
0&0000&$\mathbb{I}_2\otimes\mathbb{I}_2\otimes\mathbb{I}_2\otimes\mathbb{I}_2$&256\\
1&0001&$\mathbb{I}_2\otimes\mathbb{I}_2\otimes\mathbb{I}_2\otimes\sigma_3$&3\\
2&0010&$\mathbb{I}_2\otimes\mathbb{I}_2\otimes\sigma_3\otimes\mathbb{I}_2$&8\\
3&0011&$\mathbb{I}_2\otimes\mathbb{I}_2\otimes\sigma_3\otimes\sigma_3$&15\\
4&0100&$\mathbb{I}_2\otimes\sigma_3\otimes\mathbb{I}_2\otimes\mathbb{I}_2$&24\\
5&0101&$\mathbb{I}_2\otimes\sigma_3\otimes\mathbb{I}_2\otimes\sigma_3$&35\\
6&0110&$\mathbb{I}_2\otimes\sigma_3\otimes\sigma_3\otimes\mathbb{I}_2$&48\\
7&0111&$\mathbb{I}_2\otimes\sigma_3\otimes\sigma_3\otimes\sigma_3$&63\\
8&1000&$\sigma_3\otimes\mathbb{I}_2\otimes\mathbb{I}_2\otimes\mathbb{I}_2$&80\\
9&1001&$\sigma_3\otimes\mathbb{I}_2\otimes\mathbb{I}_2\otimes\sigma_3$&99\\
10&1010&$\sigma_3\otimes\mathbb{I}_2\otimes\sigma_3\otimes\mathbb{I}_2$&120\\
11&1011&$\sigma_3\otimes\mathbb{I}_2\otimes\sigma_3\otimes\sigma_3$&143\\
12&1100&$\sigma_3\otimes\sigma_3\otimes\mathbb{I}_2\otimes\mathbb{I}_2$&168\\
13&1101&$\sigma_3\otimes\sigma_3\otimes\mathbb{I}_2\otimes\sigma_3$&195\\
14&1110&$\sigma_3\otimes\sigma_3\otimes\sigma_3\otimes\mathbb{I}_2$&224\\
15&1111&$\sigma_3\otimes\sigma_3\otimes\sigma_3\otimes\sigma_3$&255\\
\hline
\end{tabular}
\caption{The new sequence of diagonal basis elements that marks the transition to SRBB.}
\label{tab:SRBB_diagonals_n4}
\end{table}
Therefore, the complete SRBB of order 16 is identified as the set $\mathcal{U}^{(16)}=\{U_j^{(16)}:\;1\leqslant j\leqslant256\}$ and defined by
\begin{equation}
U^{(16)}_j=\left\{
\begin{array}{l}
\mbox{new diagonal elements if }j\in\mathcal{J}\\
B^{(16)}_j\mbox{ otherwise}
\end{array}\right.
\end{equation}
In the following subsections, the design of the quantum circuit for $n=4$ is explained starting from the analysis of its three main factors. In order to show how the scaling characteristics help in circuit design for any value of $n$, the analysis will follow the same procedure already described in Appendix~\ref{sec:3qubits}.

\subsection{Diagonal contributions}
In this subsection, the $Z$-factor~(\ref{eqn:def_Z}) responsible for diagonal contributions in the $n=4$ case is briefly analyzed to show the most relevant aspects of its design:
\begin{equation}
    Z(\Theta_Z)=\prod_{j\in\mathcal{J}_Z}\exp\left\{i\,\theta_jU_j^{(16)}\right\}
\end{equation}
where $\mathcal{J}_Z=\{m^2-1,\;2\leqslant m\leqslant16\}$ and $\Theta_Z$ collects the associated Lie parameters. With the aim of designing the corresponding circuit, the scheme already explained in Section~\ref{sec:diagonals_n2} is followed. With reference to Table~\ref{tab:SRBB_diagonals_n4}, Figure~\ref{fig:zeta_diagram_n4} shows the values of the parametric pairs $(m,m')$, where each line must be associated with one or more merged gates, following the scheme already explained in Section~\ref{sec:2qubits}: the result is depicted in Figure~\ref{fig:zeta_components_n4}. The sequence of diagonal elements that minimizes\footnote{Section~\ref{sec:scaling} explains how the CNOT-optimized circuit emerges from the scalable algorithm of simplifications.} the number of CNOTs is illustrated in Figure~\ref{fig:zeta_circuit_n4}, where red gates highlight the possible simplifications of gates.
\begin{figure}[htbp]
\centering
\tikzset{phase label/.append style={label position=above}}
\resizebox{0.5\textwidth}{!}{
    \begin{quantikz}
    &\push{\;\mathbb{I}_2\;}\gategroup[15,steps=1,style={thin}]{$1^{\circ}\,qubit\,(0)$}&&&\push{\;\mathbb{I}_2\;}\gategroup[15,steps=1,style={thin}]{$2^{\circ}\,qubit\,(1)$}&&&\push{\;\mathbb{I}_2\;}\gategroup[15,steps=1,style={thin}]{$3^{\circ}\,qubit\,(2)$}&&&\push{\;\sigma_3\;}\gategroup[15,steps=1,style={thin}]{$4^{\circ}\,qubit\,(3)$}&\rstick{$m=3$}\\
    &\push{\;\mathbb{I}_2\;}&&&\push{\;\mathbb{I}_2\;}&&&\push{\;\sigma_3\;}&&&\push{\;\mathbb{I}_2\;}&\rstick{$m=2$}\\
    &\push{\;\mathbb{I}_2\;}&&&\push{\;\mathbb{I}_2\;}&&&\push{\;\sigma_3\;}&&&\push{\;\sigma_3\;}&\rstick{$m=3,\,m'=2$}\\
    &\push{\;\mathbb{I}_2\;}&&&\push{\;\sigma_3\;}&&&\push{\;\mathbb{I}_2\;}&&&\push{\;\mathbb{I}_2\;}&\rstick{$m=1$}\\
    &\push{\;\mathbb{I}_2\;}&&&\push{\;\sigma_3\;}&&&\push{\;\mathbb{I}_2\;}&&&\push{\;\sigma_3\;}&\rstick{$m=3,\,m'=1$}\\
    &\push{\;\mathbb{I}_2\;}&&&\push{\;\sigma_3\;}&&&\push{\;\sigma_3\;}&&&\push{\;\mathbb{I}_2\;}&\rstick{$m=2,\,m'=1$}\\
    &\push{\;\mathbb{I}_2\;}&&&\push{\;\sigma_3\;}&&&\push{\;\sigma_3\;}&&&\push{\;\sigma_3\;}&\rstick{$m=3,\,m'=1,2$}\\
    &\push{\;\sigma_3\;}&&&\push{\;\mathbb{I}_2\;}&&&\push{\;\mathbb{I}_2\;}&&&\push{\;\mathbb{I}_2\;}&\rstick{$m=0$}\\
    &\push{\;\sigma_3\;}&&&\push{\;\mathbb{I}_2\;}&&&\push{\;\mathbb{I}_2\;}&&&\push{\;\sigma_3\;}&\rstick{$m=3,\,m'=0$}\\
    &\push{\;\sigma_3\;}&&&\push{\;\mathbb{I}_2\;}&&&\push{\;\sigma_3\;}&&&\push{\;\mathbb{I}_2\;}&\rstick{$m=2,\,m'=0$}\\
    &\push{\;\sigma_3\;}&&&\push{\;\mathbb{I}_2\;}&&&\push{\;\sigma_3\;}&&&\push{\;\sigma_3\;}&\rstick{$m=3,\,m'=0,2$}\\
    &\push{\;\sigma_3\;}&&&\push{\;\sigma_3\;}&&&\push{\;\mathbb{I}_2\;}&&&\push{\;\mathbb{I}_2\;}&\rstick{$m=1,\,m'=0$}\\
    &\push{\;\sigma_3\;}&&&\push{\;\sigma_3\;}&&&\push{\;\mathbb{I}_2\;}&&&\push{\;\sigma_3\;}&\rstick{$m=3,\,m'=0,1$}\\
    &\push{\;\sigma_3\;}&&&\push{\;\sigma_3\;}&&&\push{\;\sigma_3\;}&&&\push{\;\mathbb{I}_2\;}&\rstick{$m=2,\,m'=0,1$}\\
    &\push{\;\sigma_3\;}&&&\push{\;\sigma_3\;}&&&\push{\;\sigma_3\;}&&&\push{\;\sigma_3\;}&\rstick{$m=3,\,m'=0,1,2$}
    \end{quantikz}}
\caption{Diagram to find the position of CNOTs and rotation gates.}
\label{fig:zeta_diagram_n4}
\end{figure}

\begin{figure}[htbp]
\resizebox{0.19\textwidth}{!}{
    \begin{subfigure}[b]{0.2\textwidth}
    \centering
        \begin{quantikz}
        \lstick{0}&&\\
        \lstick{1}&&\\
        \lstick{2}&&\\
        \lstick{3}&\gate{R_z(\theta_3)}&
        \end{quantikz}
    \caption{$\exp\{i\,\theta_3U_3\}$}
    \end{subfigure}}
\hfill
\resizebox{0.19\textwidth}{!}{
    \begin{subfigure}[b]{0.2\textwidth}
    \centering
        \begin{quantikz}
        \lstick{0}&&\\
        \lstick{1}&&\\
        \lstick{2}&\gate{R_z(\theta_8)}&\\
        \lstick{3}&&
        \end{quantikz}
    \caption{$\exp\{i\,\theta_8U_8\}$}
    \end{subfigure}}
\hfill
\resizebox{0.28\textwidth}{!}{
    \begin{subfigure}[b]{0.3\textwidth}
    \centering
        \begin{quantikz}
        \lstick{0}&&&&\\
        \lstick{1}&&&&\\
        \lstick{2}&\ctrl{1}&&\ctrl{1}&\\
        \lstick{3}&\targ{}&\gate{R_z(\theta_{15})}&\targ{}&
        \end{quantikz}     
    \caption{$\exp\{i\,\theta_{15}U_{15}\}$}
    \end{subfigure}}
\hfill
\resizebox{0.19\textwidth}{!}{
    \begin{subfigure}[b]{0.2\textwidth}
    \centering
        \begin{quantikz}
        \lstick{0}&&\\
        \lstick{1}&\gate{R_z(\theta_{24})}&\\
        \lstick{2}&&\\
        \lstick{3}&&
        \end{quantikz}     
    \caption{$\exp\{i\,\theta_{24}U_{24}\}$}
    \end{subfigure}}\\\vspace{3mm}
\resizebox{0.28\textwidth}{!}{ 
    \begin{subfigure}[b]{0.3\textwidth}
    \centering
        \begin{quantikz}
        \lstick{0}&&&&\\
        \lstick{1}&\ctrl{2}&&\ctrl{2}&\\
        \lstick{2}&&&&\\
        \lstick{3}&\targ{}&\gate{R_z(\theta_{35})}&\targ{}&
        \end{quantikz}     
    \caption{$\exp\{i\,\theta_{35}U_{35}\}$}
    \end{subfigure}}
\hfill
\resizebox{0.28\textwidth}{!}{
    \begin{subfigure}[b]{0.3\textwidth}
    \centering
        \begin{quantikz}
        \lstick{0}&&&&\\
        \lstick{1}&\ctrl{1}&&\ctrl{1}&\\
        \lstick{2}&\targ{}&\gate{R_z(\theta_{48})}&\targ{}&\\
        \lstick{3}&&&&   
        \end{quantikz} 
    \caption{$\exp\{i\,\theta_{48}U_{48}\}$}
    \end{subfigure}}
\hfill
\resizebox{0.28\textwidth}{!}{
    \begin{subfigure}[b]{0.3\textwidth}
    \centering
        \begin{quantikz}
        \lstick{0}&\ctrl{2}&&\ctrl{2}&\\
        \lstick{1}&&&&\\
        \lstick{2}&\targ{}&\gate{R_z(\theta_{120})}&\targ{}&\\
        \lstick{3}&&&&   
        \end{quantikz} 
    \caption{$\exp\{i\,\theta_{120}U_{120}\}$}
    \end{subfigure}}\\\vspace{3mm}
\resizebox{0.36\textwidth}{!}{  
    \begin{subfigure}[b]{0.4\textwidth}
    \centering
        \begin{quantikz}
        \lstick{0}&&&&&&\\
        \lstick{1}&\ctrl{2}&&&&\ctrl{2}&\\
        \lstick{2}&&\ctrl{1}&&\ctrl{1}&&\\
        \lstick{3}&\targ{}&\targ{}&\gate{R_z(\theta_{63})}&\targ{}&\targ{}&
        \end{quantikz}     
    \caption{$\exp\{i\,\theta_{63}U_{63}\}$}
    \end{subfigure}}
\hfill
\resizebox{0.19\textwidth}{!}{
    \begin{subfigure}[b]{0.2\textwidth}
    \centering
        \begin{quantikz}
        \lstick{0}&\gate{R_z(\theta_{80})}&\\
        \lstick{1}&&\\
        \lstick{2}&&\\
        \lstick{3}&&
        \end{quantikz}
    \caption{$\exp\{i\,\theta_{80}U_{80}\}$}
    \end{subfigure}}
\hfill
\resizebox{0.28\textwidth}{!}{
    \begin{subfigure}[b]{0.3\textwidth}
    \centering
        \begin{quantikz}
        \lstick{0}&\ctrl{3}&&\ctrl{3}&\\
        \lstick{1}&&&&\\
        \lstick{2}&&&&\\
        \lstick{3}&\targ{}&\gate{R_z(\theta_{99})}&\targ{}&
        \end{quantikz}     
    \caption{$\exp\{i\,\theta_{99}U_{99}\}$}
    \end{subfigure}}\\\vspace{3mm}
\resizebox{0.44\textwidth}{!}{
    \begin{subfigure}[b]{0.5\textwidth}
    \centering
        \begin{quantikz}
        \lstick{0}&\ctrl{3}&&&&\ctrl{3}&\\
        \lstick{1}&&&&&&\\
        \lstick{2}&&\ctrl{1}&&\ctrl{1}&&\\
        \lstick{3}&\targ{}&\targ{}&\gate{R_z(\theta_{143})}&\targ{}&\targ{}&
        \end{quantikz}     
    \caption{$\exp\{i\,\theta_{143}U_{143}\}$}
    \end{subfigure}}
\resizebox{0.36\textwidth}{!}{
    \begin{subfigure}[b]{0.4\textwidth}
    \centering
        \begin{quantikz}
        \lstick{0}&\ctrl{1}&&\ctrl{1}&\\
        \lstick{1}&\targ{}&\gate{R_z(\theta_{168})}&\targ{}&\\
        \lstick{2}&&&&\\
        \lstick{3}&&&&
        \end{quantikz} 
    \caption{$\exp\{i\,\theta_{168}U_{168}\}$}
    \end{subfigure}}\\\vspace{3mm}
\resizebox{0.44\textwidth}{!}{
    \begin{subfigure}[b]{0.5\textwidth}
    \centering
        \begin{quantikz}
        \lstick{0}&\ctrl{3}&&&&\ctrl{3}&\\
        \lstick{1}&&\ctrl{2}&&\ctrl{2}&&\\
        \lstick{2}&&&&&&\\
        \lstick{3}&\targ{}&\targ{}&\gate{R_z(\theta_{195})}&\targ{}&\targ{}&
        \end{quantikz}     
    \caption{$\exp\{i\,\theta_{195}U_{195}\}$}
    \end{subfigure}}
\resizebox{0.44\textwidth}{!}{
    \begin{subfigure}[b]{0.5\textwidth}
    \centering
        \begin{quantikz}
        \lstick{0}&\ctrl{2}&&&&\ctrl{2}&\\
        \lstick{1}&&\ctrl{1}&&\ctrl{1}&&\\
        \lstick{2}&\targ{}&\targ{}&\gate{R_z(\theta_{224})}&\targ{}&\targ{}&\\
        \lstick{3}&&&&&&
        \end{quantikz}     
    \caption{$\exp\{i\,\theta_{224}U_{224}\}$}
    \end{subfigure}}\\\vspace{3mm}
\resizebox{0.52\textwidth}{!}{
    \begin{subfigure}[b]{0.6\textwidth}
    \centering
        \begin{quantikz}
        \lstick{0}&\ctrl{3}&&&&&&\ctrl{3}&\\
        \lstick{1}&&\ctrl{2}&&&&\ctrl{2}&&\\
        \lstick{2}&&&\ctrl{1}&&\ctrl{1}&&&\\
        \lstick{3}&\targ{}&\targ{}&\targ{}&\gate{R_z(\theta_{255})}&\targ{}&\targ{}&\targ{}&
        \end{quantikz}     
    \caption{$\exp\{i\,\theta_{255}U_{255}\}$}
    \end{subfigure}}
\caption{Fifteen little blocks that make up the $Z$-factor circuit for $n=4$.}
\label{fig:zeta_components_n4}
\end{figure}

\begin{figure}[htbp]
    \resizebox{1\textwidth}{!}{
    \begin{subfigure}[b]{1.1\textwidth}
        \centering
        \begin{quantikz}
        \lstick{0}&&&&&&&&&&&&\ctrl{3}&&&\rstick{...}\\
        \lstick{1}&&&&\ctrl{2}&&&&\ctrl[style={red}]{2}&\ctrl[style={red}]{2}&&\ctrl[style={red}]{2}&&\ctrl[style={red}]{2}&&\rstick{...}\\
        \lstick{2}&\ctrl{1}&&\ctrl[style={red}]{1}&&\ctrl[style={red}]{1}&&\ctrl{1}&&&&&&&&\rstick{...}\\
        \lstick{3}&\targ{}&\gate{R_z(\theta_{})}&\targ[style={red}]{}&\targ{}&\targ[style={red}]{}&\gate{R_z(\theta_{})}&\targ{}&\targ[style={red}]{}&\targ[style={red}]{}&\gate{R_z(\theta_{})}&\targ[style={red}]{}&\targ{}&\targ[style={red}]{}&\gate{R_z(\theta_{})}&\rstick{...}
        \end{quantikz}
    \end{subfigure}}
    \resizebox{1\textwidth}{!}{
    \begin{subfigure}[b]{1.1\textwidth}
    \centering
    \begin{quantikz}
    \lstick{...}&&\ctrl[style={red}]{3}&\ctrl[style={red}]{3}&&&&&&\ctrl[style={red}]{3}&\ctrl[style={red}]{3}&&&\rstick{...}\\
    \lstick{...}&\ctrl[style={red}]{2}&&&\ctrl[style={red}]{2}&&&&\ctrl{2}&&&&&\rstick{...}\\
    \lstick{...}&&&&&\ctrl{1}&&\ctrl[style={red}]{1}&&&&\ctrl[style={red}]{1}&&\rstick{...}\\
    \lstick{...}&\targ[style={red}]{}&\targ[style={red}]{}&\targ[style={red}]{}&\targ[style={red}]{}&\targ{}&\gate{R_z(\theta_{})}&\targ[style={red}]{}&\targ{}&\targ[style={red}]{}&\targ[style={red}]{}&\targ[style={red}]{}&\gate{R_z(\theta_{})}&\rstick{...}\\
    \end{quantikz}
    \end{subfigure}}
    \resizebox{1\textwidth}{!}{
    \begin{subfigure}[b]{1.1\textwidth}
    \centering
    \begin{quantikz}
    \lstick{...}&&\ctrl[style={red}]{3}&\ctrl[style={red}]{3}&&\ctrl{3}&\gate[3]{Z(\Theta_Z)_{n=3}}&\rstick{...}\\
    \lstick{...}&&&&&&&\rstick{...}\\
    \lstick{...}&\ctrl{1}&&&&&&\rstick{...}\\
    \lstick{...}&\targ{}&\targ[style={red}]{}&\targ[style={red}]{}&\gate{R_z(\theta_{})}&\targ{}&\gate{R_z(\theta_{})}&\rstick{...}\\
    \end{quantikz}
    \end{subfigure}}
\caption{$Z$-factor quantum circuit for $n=4$; the highlighted gates simplify each other.}
\label{fig:zeta_circuit_n4}
\end{figure}

\subsection{Even contributions}\label{sec:even_4qubits}
The $\Psi$-factor~(\ref{eqn:def_Psi}) for a general 4-qubit system is defined by:
\begin{equation}
    \Psi(\Theta_\Psi)=\left[\prod_{(i,j)\in\mathcal{A}_\Psi}\exp\{i\,\theta_iU_i\}\exp\{i\,\theta_jU_j\}\right]\prod_{x=1}^{7}\left(\prod T_x^e\right)M_x^e\left(\prod T_x^e\right)
\end{equation}
where, according to the scalable structure, two different types of subsets of SRBB elements are involved:
\begin{itemize}
    \item a set of pairs called $\mathcal{A}_{\Psi}$ makes up the first sub-factor of type $A$ enclosed in squared brackets,
        \begin{equation}
        \begin{split}
        \mathcal{A}_{\Psi}&=\{[(2m-1)^2,(4m^2-2m)],\;1\leqslant m\leqslant8)\}=\\
        &=\{(1,2),(9,12),(25,30),(49,56),(81,90),(121,132),(169,182),(225,240)\}
        \end{split}
        \end{equation}
    \item a set of quadruples called $\mathcal{B}_{\Psi}$, divided into seven subsets $\mathcal{B}_{\Psi,x}$, makes up the second sub-factor of type $B$,
    \begin{equation}
    \begin{array}{ll}
        \mathcal{B}_{\Psi,1}=\left\{
            \begin{array}{l}
            \left[h_{4}(1),f_{4}(1),h_{3}(2),f_{3}(2)\right]\\
            \left[h_{16}(13),f_{16}(13),h_{15}(14),f_{15}(14)\right]\\
            \left[h_{16}(11),f_{16}(11),h_{15}(12),f_{15}(12)\right]\\
            \left[h_{14}(11),f_{14}(11),h_{13}(12),f_{13}(12)\right]
            \end{array}\right.
            &
            \mathcal{B}_{\Psi,2}=\left\{
            \begin{array}{l}
            \left[h_{6}(1),f_{6}(1),h_{5}(2),f_{5}(2)\right]\\
            \left[h_{16}(9),f_{16}(9),h_{15}(10),f_{15}(10)\right]\\
            \left[h_{14}(9),f_{14}(9),h_{13}(10),f_{13}(10)\right]\\
            \left[h_{12}(9),f_{12}(9),h_{11}(10),f_{11}(10)\right]
            \end{array}\right.\\
            &\\
        \mathcal{B}_{\Psi,3}=\left\{
            \begin{array}{l}
            \left[h_{8}(1),f_{8}(1),h_{7}(2),f_{7}(2)\right]\\
            \left[h_{16}(5),f_{16}(5),h_{15}(6),f_{15}(6)\right]\\
            \left[h_{14}(5),f_{14}(5),h_{13}(6),f_{13}(6)\right]\\
            \left[h_{12}(5),f_{12}(5),h_{11}(6),f_{11}(6)\right]
            \end{array}\right.
            &
            \mathcal{B}_{\Psi,4}=\left\{
            \begin{array}{l}
            \left[h_{10}(1),f_{10}(1),h_{9}(2),f_{9}(2)\right]\\
            \left[h_{16}(7),f_{16}(7),h_{15}(8),f_{15}(8)\right]\\
            \left[h_{14}(7),f_{14}(7),h_{13}(8),f_{13}(8)\right]\\
            \left[h_{12}(7),f_{12}(7),h_{11}(8),f_{11}(8)\right]
            \end{array}\right.\\
            &\\
        \mathcal{B}_{\Psi,5}=\left\{
            \begin{array}{l}
            \left[h_{12}(1),f_{12}(1),h_{11}(2),f_{11}(2)\right]\\
            \left[h_{10}(7),f_{10}(7),h_{9}(8),f_{9}(8)\right]\\
            \left[h_{10}(5),f_{10}(5),h_{9}(6),f_{9}(6)\right]\\
            \left[h_{8}(5),f_{8}(5),h_{7}(6),f_{7}(6)\right]
            \end{array}\right.
            &
            \mathcal{B}_{\Psi,6}=\left\{
            \begin{array}{l}
            \left[h_{14}(1),f_{14}(1),h_{13}(2),f_{13}(2)\right]\\
            \left[h_{16}(3),f_{16}(3),h_{15}(4),f_{15}(4)\right]\\
            \left[h_{12}(3),f_{12}(3),h_{11}(4),f_{11}(4)\right]\\
            \left[h_{10}(3),f_{10}(3),h_{9}(4),f_{9}(4)\right]
            \end{array}\right.\\
            &\\
        \mathcal{B}_{\Psi,7}=\left\{
            \begin{array}{l}
            \left[h_{16}(1),f_{16}(1),h_{15}(2),f_{15}(2)\right]\\
            \left[h_{14}(3),f_{14}(3),h_{13}(4),f_{13}(4)\right]\\
            \left[h_{8}(3),f_{8}(3),h_{7}(4),f_{7}(4)\right]\\
            \left[h_{6}(3),f_{6}(3),h_{5}(4),f_{5}(4)\right]
            \end{array}\right.
            &
    \end{array}
    \end{equation}
\end{itemize}
Furthermore, with reference to Table~\ref{tab:2cyclesP16even} and to the partition procedure described in Section~\ref{sec:even_3qubits}, it follows that
\begin{equation}
\begin{tabular}{l}
    $\prod T_1^e=P_{(2,4)}P_{(14,16)}P_{(12,16)}P_{(12,14)}$\\
    $\prod T_2^e=P_{(2,6)}P_{(10,16)}P_{(10,14)}P_{(10,12)}$\\
    $\prod T_3^e=P_{(2,8)}P_{(6,16)}P_{(6,14)}P_{(6,12)}$\\
    $\prod T_4^e=P_{(2,10)}P_{(8,16)}P_{(8,14)}P_{(8,12)}$\\
    $\prod T_5^e=P_{(2,12)}P_{(8,10)}P_{(6,10)}P_{(6,8)}$\\
    $\prod T_6^e=P_{(2,14)}P_{(4,16)}P_{(4,12)}P_{(4,10)}$\\
    $\prod T_7^e=P_{(2,16)}P_{(4,14)}P_{(4,8)}P_{(4,6)}$
\end{tabular}
\end{equation}
Moreover\footnote{For practical reasons, only $M_{1,2,3}^e$ are reported below given that the construction of the remaining ones is identical. To avoid repetitions and write compact formulas, we omit repeating the subscript for the $\theta_i$ angles, which will still be equal to that of the associated basis elements.}
{\footnotesize
\begin{equation}
\begin{aligned}
    M_1^e&=\prod T_1^e\left[\prod_{
    \begin{tabular}{cc}
    (2,4),\!\!\!\!\!\!&\!\!\!(14,16),\\
    (12,16),\!\!\!\!\!&(12,14) 
    \end{tabular}
    }\exp\{i\,\theta\,U_{h_\beta(\alpha-1)}\}\exp\{i\,\theta\,U_{f_\beta(\alpha-1)}\}\exp\{i\,\theta\,U_{h_{\beta-1}(\alpha)}\exp\{i\,\theta\,U_{f_{\beta-1}(\alpha)}\}\right]\prod T_1^e=\\
    &=\prod T_1^e\left[\exp\{i\,\theta\,U_{h_4(1)}\}\exp\{i\,\theta\,U_{f_4(1)}\}\exp\{i\,\theta\,U_{h_3(2)}\}\exp\{i\,\theta\,U_{f_3(2)}\}\right.\cdot\\
    &\cdot\exp\{i\,\theta\,U_{h_{16}(13)}\}\exp\{i\,\theta\,U_{f_{16}(13)}\}\exp\{i\,\theta\,U_{h_{15}(14)}\}\exp\{i\,\theta\,U_{f_{15}(14)}\}\cdot\\
    &\cdot\exp\{i\,\theta\,U_{h_{16}(11)}\}\exp\{i\,\theta\,U_{f_{16}(11)}\}\exp\{i\,\theta\,U_{h_{15}(12)}\}\exp\{i\,\theta\,U_{f_{15}(12)}\}\cdot\\
    &\cdot\left.\exp\{i\,\theta\,U_{h_{14}(11)}\}\exp\{i\,\theta\,U_{f_{14}(11)}\}\exp\{i\,\theta\,U_{h_{13}(12)}\}\exp\{i\,\theta\,U_{f_{13}(12)}\}\right]\prod T_1^e
\end{aligned}
\end{equation}
\begin{equation}
\begin{aligned}
    M_2^e&=\prod T_2^e\left[\prod_{
    \begin{tabular}{cc}
    (2,6),\!\!\!\!\!\!&\!\!\!(10,16),\\
    (10,14),\!\!\!\!\!&(10,12) 
    \end{tabular}
    }\exp\{i\,\theta\,U_{h_\beta(\alpha-1)}\}\exp\{i\,\theta\,U_{f_\beta(\alpha-1)}\}\exp\{i\,\theta\,U_{h_{\beta-1}(\alpha)}\exp\{i\,\theta\,U_{f_{\beta-1}(\alpha)}\}\right]\prod T_2^e=\\
    &=\prod T_2^e\left[\exp\{i\,\theta\,U_{h_6(1)}\}\exp\{i\,\theta\,U_{f_6(1)}\}\exp\{i\,\theta\,U_{h_5(2)}\exp\{i\,\theta\,U_{f_5(2)}\}\right.\cdot\\
    &\cdot\exp\{i\,\theta\,U_{h_{16}(9)}\}\exp\{i\,\theta\,U_{f_{16}(9)}\}\exp\{i\,\theta\,U_{h_{15}(10)}\}\exp\{i\,\theta\,U_{f_{15}(10)}\}\cdot\\
    &\cdot\exp\{i\,\theta\,U_{h_{14}(9)}\}\exp\{i\,\theta\,U_{f_{14}(9)}\}\exp\{i\,\theta\,U_{h_{13}(10)}\}\exp\{i\,\theta\,U_{f_{13}(10)}\}\cdot\\
    &\cdot\left.\exp\{i\,\theta\,U_{h_{12}(9)}\}\exp\{i\,\theta\,U_{f_{12}(9)}\}\exp\{i\,\theta\,U_{h_{11}(10)}\}\exp\{i\,\theta\,U_{f_{11}(10)}\}\right]\prod T_2^e
\end{aligned}
\end{equation}
\begin{equation}
\begin{aligned}
    M_3^e&=\prod T_3^e\left[\prod_{
    \begin{tabular}{cc}
    (2,8),\!\!\!\!\!\!&\!\!(6,16),\\
    (6,14),\!\!\!\!\!&(6,12) 
    \end{tabular}
    }\exp\{i\,\theta\,U_{h_\beta(\alpha-1)}\}\exp\{i\,\theta\,U_{f_\beta(\alpha-1)}\}\exp\{i\,\theta\,U_{h_{\beta-1}(\alpha)}\exp\{i\,\theta\,U_{f_{\beta-1}(\alpha)}\}\right]\prod T_3^e=\\
    &=\prod T_3^e\left[\exp\{i\,\theta\,U_{h_8(1)}\}\exp\{i\,\theta\,U_{f_8(1)}\}\exp\{i\,\theta\,U_{h_7(2)}\exp\{i\,\theta\,U_{f_7(2)}\}\right.\cdot\\
    &\cdot\exp\{i\,\theta\,U_{h_{16}(5)}\}\exp\{i\,\theta\,U_{f_{16}(5)}\}\exp\{i\,\theta\,U_{h_{15}(6)}\}\exp\{i\,\theta\,U_{f_{15}(6)}\}\cdot\\
    &\cdot\exp\{i\,\theta\,U_{h_{14}(5)}\}\exp\{i\,\theta\,U_{f_{14}(5)}\}\exp\{i\,\theta\,U_{h_{13}(6)}\}\exp\{i\,\theta\,U_{f_{13}(6)}\}\cdot\\
    &\cdot\left.\exp\{i\,\theta\,U_{h_{12}(5)}\}\exp\{i\,\theta\,U_{f_{12}(5)}\}\exp\{i\,\theta\,U_{h_{11}(6)}\}\exp\{i\,\theta\,U_{f_{11}(6)}\}\right]\prod T_3^e
\end{aligned}  
\end{equation}}
Accordingly, the first sub-factor of type $A$ and the factors called $M_x^e$ into the second sub-factor of type B become a $SU(2)$-block diagonal matrix, thus composing as a whole two unitary matrices belonging to $M_4ZYZ$. In both cases, the corresponding quantum circuit is given by Figure~\ref{fig:zyz_circuit_n4} (the rotation parameters are renamed as $\theta_j^*$). The remaining seven pairs of $\prod T_x^e$ are implemented as described in Appendix \ref{sec:cnot_sequences}; the result is illustrated in Figure~\ref{fig:prodTxe_diagram_n4}.
\begin{figure}[htbp]
\resizebox{0.9\textwidth}{!}{
    \begin{subfigure}[b]{1\textwidth}
        \centering
        $F_a(\theta_{1-8}^*)=$
        \begin{quantikz}
            \lstick{0}&&&&&&&&\ctrl{3}&\rstick{...}\\
            \lstick{1}&&&&\ctrl{2}&&&&&\rstick{...}\\
            \lstick{2}&&\ctrl{1}&&&&\ctrl{1}&&&\rstick{...}\\
            \lstick{3}&\gate{R_a(\theta_1^*)}&\targ{}&\gate{R_a(\theta_2^*)}&\targ{}&\gate{R_a(\theta_3^*)}&\targ{}&\gate{R_a(\theta_4^*)}&\targ{}&\rstick{...}
        \end{quantikz}
    \end{subfigure}}
    \resizebox{0.9\textwidth}{!}{
    \begin{subfigure}[b]{1\textwidth}
        \centering
        \begin{quantikz}
            \lstick{...}&&&&&&&&\\
            \lstick{...}&&&&\ctrl{2}&&&&\\
            \lstick{...}&&\ctrl{1}&&&&\ctrl{1}&&\\
            \lstick{...}&\gate{R_a(\theta_5^*)}&\targ{}&\gate{R_a(\theta_6^*)}&\targ{}&\gate{R_a(\theta_7^*)}&\targ{}&\gate{R_a(\theta_8^*)}&
        \end{quantikz}
    \end{subfigure}}\vspace{3mm}
    \resizebox{0.9\textwidth}{!}{
    \begin{subfigure}[b]{1\textwidth}
        \centering
        $M_4ZYZ=$
        \begin{quantikz}
            \lstick{0}&\gate[4]{F_z(\theta_{1-8}^*)}&\gate[4]{F_y(\theta_{9-16}^*)}&\gate[4]{F_z(\theta_{17-24}^*)}&\ctrl{3}&\\
            \lstick{1}&&&&&\\
            \lstick{2}&&&&&\\
            \lstick{3}&&&&\targ{}&
        \end{quantikz}
    \end{subfigure}}    
    \caption{$M_4ZYZ$ quantum circuit.}
    \label{fig:zyz_circuit_n4}
\end{figure}

\begin{figure}[htbp]
    \begin{subfigure}[c]{1\textwidth}
    \centering
        \begin{tabular}{||c|c|c|c||}
        \hline
        x&Bits&Binary&Control-Target\\
        \hline\hline
        1&$x_0x_1x_2$&001&(3,2)\\
        2&$x_0x_1x_2$&010&(3,1)\\
        3&$x_0x_1x_2$&011&(3,1),(3,2)\\
        4&$x_0x_1x_2$&100&(3,0)\\
        5&$x_0x_1x_2$&101&(3,0),(3,2)\\
        6&$x_0x_1x_2$&110&(3,0)(3,1)\\
        7&$x_0x_1x_2$&111&(3,0),(3,1),(3,2)\\
        \hline
        \end{tabular}   
    \end{subfigure}\vspace{5mm}
    \begin{subfigure}[c]{0.2\textwidth}
    \centering
        \begin{quantikz}
        \lstick{0}&&\\
        \lstick{1}&&\\
        \lstick{2}&\targ{}&\\
        \lstick{3}&\ctrl{-1}&
        \end{quantikz}
        \caption{$\prod T_1^e$}
    \end{subfigure}
    \begin{subfigure}[c]{0.2\textwidth}
    \centering
        \begin{quantikz}
        \lstick{0}&&\\
        \lstick{1}&\targ{}&\\
        \lstick{2}&&\\
        \lstick{3}&\ctrl{-2}&
        \end{quantikz}
        \caption{$\prod T_2^e$}
    \end{subfigure}
    \begin{subfigure}[c]{0.2\textwidth}
    \centering
        \begin{quantikz}
        \lstick{0}&&&\\
        \lstick{1}&\targ{}&&\\
        \lstick{2}&&\targ{}&\\
        \lstick{3}&\ctrl{-2}&\ctrl{-1}&
        \end{quantikz}
        \caption{$\prod T_3^e$}
    \end{subfigure}
    \begin{subfigure}[c]{0.2\textwidth}
    \centering
        \begin{quantikz}
        \lstick{0}&\targ{}&\\
        \lstick{1}&&\\
        \lstick{2}&&\\
        \lstick{3}&\ctrl{-3}&
        \end{quantikz}
        \caption{$\prod T_4^e$}
    \end{subfigure}\vspace{0.5cm}
    \begin{subfigure}[c]{0.2\textwidth}
    \centering
        \begin{quantikz}
        \lstick{0}&\targ{}&&\\
        \lstick{1}&&&\\
        \lstick{2}&&\targ{}&\\
        \lstick{3}&\ctrl{-3}&\ctrl{-1}&
        \end{quantikz}
        \caption{$\prod T_5^e$}
    \end{subfigure}\hfill
    \begin{subfigure}[c]{0.2\textwidth}
    \centering
        \begin{quantikz}
        \lstick{0}&\targ{}&&\\
        \lstick{1}&&\targ{}&\\
        \lstick{2}&&&\\
        \lstick{3}&\ctrl{-3}&\ctrl{-2}&
        \end{quantikz}
        \caption{$\prod T_6^e$}
    \end{subfigure}\hfill
    \begin{subfigure}[c]{0.3\textwidth}
    \centering
        \begin{quantikz}
        \lstick{0}&\targ{}&&&\\
        \lstick{1}&&\targ{}&&\\
        \lstick{2}&&&\targ{}&\\
        \lstick{3}&\ctrl{-3}&\ctrl{-2}&\ctrl{-1}&
        \end{quantikz}
        \caption{$\prod T_7^e$}
    \end{subfigure}
\caption{Diagram to find the control-target pairs for the CNOTs representing $\prod T_x^e$.}
\label{fig:prodTxe_diagram_n4}
\end{figure}

\subsection{Odd contributions}
The $\Phi$-factor~(\ref{eqn:def_Phi}) in the generale case $n=4$,
\begin{equation}
    \Phi(\Theta_\Phi)=\prod_{x=1}^7\left(\prod T_x^o\right)M_x^o\left(\prod T_x^o\right)
\end{equation}
is composed starting from the subset $\mathcal{C}_\Phi$, divided into seven subsets $C_{\Phi,x}$ of SRBB elements:
\begin{equation}
\begin{array}{ll}\mathcal{C}_{\Phi,1}=\left\{
        \begin{array}{l}
        \left[h_{3}(1),f_{3}(1),h_{4}(2),f_{4}(2)\right]\\
        \left[h_{15}(13),f_{15}(13),h_{16}(14),f_{16}(14)\right]\\
        \left[h_{15}(11),f_{15}(11),h_{16}(12),f_{16}(12)\right]\\
        \left[h_{13}(11),f_{13}(11),h_{14}(12),f_{14}(12)\right]
        \end{array}\right.
        &
        \mathcal{C}_{\Phi,2}=\left\{
        \begin{array}{l}
        \left[h_{5}(1),f_{5}(1),h_{6}(2),f_{6}(2)\right]\\
        \left[h_{15}(9),f_{15}(9),h_{16}(10),f_{16}(10)\right]\\
        \left[h_{13}(9),f_{13}(9),h_{14}(10),f_{14}(10)\right]\\
        \left[h_{11}(9),f_{11}(9),h_{12}(10),f_{12}(10)\right]
        \end{array}\right.\\
        &\\
    \mathcal{C}_{\Phi,3}=\left\{
        \begin{array}{l}
        \left[h_{7}(1),f_{7}(1),h_{8}(2),f_{8}(2)\right]\\
        \left[h_{15}(7),f_{15}(7),h_{16}(8),f_{16}(8)\right]\\
        \left[h_{13}(7),f_{13}(7),h_{14}(8),f_{14}(8)\right]\\
        \left[h_{11}(7),f_{11}(7),h_{12}(8),f_{12}(8)\right]
        \end{array}\right.
        &
        \mathcal{C}_{\Phi,4}=\left\{
        \begin{array}{l}
        \left[h_{9}(1),f_{9}(1),h_{10}(2),f_{10}(2)\right]\\
        \left[h_{15}(5),f_{15}(5),h_{16}(6),f_{16}(6)\right]\\
        \left[h_{13}(5),f_{13}(5),h_{14}(6),f_{14}(6)\right]\\
        \left[h_{11}(5),f_{11}(5),h_{12}(6),f_{12}(6)\right]
        \end{array}\right.\\
        &\\
    \mathcal{C}_{\Phi,5}=\left\{
        \begin{array}{l}
        \left[h_{11}(1),f_{11}(1),h_{12}(2),f_{12}(2)\right]\\
        \left[h_{9}(7),f_{9}(7),h_{10}(8),f_{10}(8)\right]\\
        \left[h_{9}(5),f_{9}(5),h_{10}(6),f_{10}(6)\right]\\
        \left[h_{7}(5),f_{7}(5),h_{8}(6),f_{8}(6)\right]
        \end{array}\right.
        &
        \mathcal{C}_{\Phi,6}=\left\{
        \begin{array}{l}
        \left[h_{13}(1),f_{13}(1),h_{14}(2),f_{14}(2)\right]\\
        \left[h_{15}(3),f_{15}(3),h_{16}(4),f_{16}(4)\right]\\
        \left[h_{11}(3),f_{11}(3),h_{12}(4),f_{12}(4)\right]\\
        \left[h_{9}(3),f_{9}(3),h_{10}(4),f_{10}(4)\right]
        \end{array}\right.\\
        &\\
    \mathcal{C}_{\Phi,7}=\left\{
        \begin{array}{l}
        \left[h_{15}(1),f_{15}(1),h_{16}(2),f_{16}(2)\right]\\
        \left[h_{13}(3),f_{13}(3),h_{14}(4),f_{14}(4)\right]\\
        \left[h_{7}(3),f_{7}(3),h_{8}(4),f_{8}(4)\right]\\
        \left[h_{5}(3),f_{5}(3),h_{6}(4),f_{6}(4)\right]
        \end{array}\right.
        &
\end{array}
\end{equation}
As for the even contributions of Section~\ref{sec:even_4qubits}, Table~\ref{tab:2cyclesP16odd} helps to partition the permutation group $P_{16}^{odd}$ as follows:
\begin{equation}
\begin{tabular}{l}
    $\prod T_1^o=P_{(2,3)}P_{(14,15)}P_{(12,15)}P_{(12,13)}$\\
    $
    \prod T_2^o=P_{(2,5)}P_{(10,15)}P_{(10,13)}P_{(10,11)}$\\
    $\prod T_3^o=P_{(2,7)}P_{(8,15)}P_{(8,13)}P_{(8,11)}$\\$
    \prod T_4^o=P_{(2,9)}P_{(6,15)}P_{(6,13)}P_{(6,11)}$\\
    $\prod T_5^o=P_{(2,11)}P_{(8,9)}P_{(6,9)}P_{(6,7)}$\\$
    \prod T_6^o=P_{(2,13)}P_{(4,15)}P_{(4,11)}P_{(4,9)}$\\
    $\prod T_7^o=P_{(2,15)}P_{(4,13)}P_{(4,7)}P_{(4,5)}$
\end{tabular}
\end{equation}
Moreover,
{\footnotesize
\begin{equation}\label{eqn:M1odd_n4}
\begin{aligned}
    M_1^o&=\prod T_1^o\left[\prod_{
    \begin{tabular}{cc}
    (2,3),\!\!\!\!\!\!&\!\!\!(14,15),\\
    (12,15),\!\!\!\!\!&(12,13) 
    \end{tabular}
    }\exp\{i\,\theta\,U_{h_\beta(\alpha-1)}\}\exp\{i\,\theta\,U_{f_\beta(\alpha-1)}\}\exp\{i\,\theta\,U_{h_{\beta+1}(\alpha)}\exp\{i\,\theta\,U_{f_{\beta+1}(\alpha)}\}\right]\prod T_1^o=\\
    &=\prod T_1^o\left[\exp\{i\,\theta\,U_{h_3(1)}\}\exp\{i\,\theta\,U_{f_3(1)}\}\exp\{i\,\theta\,U_{h_4(2)}\exp\{i\,\theta\,U_{f_4(2)}\}\right.\cdot\\
    &\cdot\exp\{i\,\theta\,U_{h_{15}(13)}\}\exp\{i\,\theta\,U_{f_{15}(13)}\}\exp\{i\,\theta\,U_{h_{16}(14)}\}\exp\{i\,\theta\,U_{f_{16}(14)}\}\cdot\\
    &\cdot\exp\{i\,\theta\,U_{h_{15}(11)}\}\exp\{i\,\theta\,U_{f_{15}(11)}\}\exp\{i\,\theta\,U_{h_{16}(12)}\}\exp\{i\,\theta\,U_{f_{16}(12)}\}\cdot\\
    &\cdot\left.\exp\{i\,\theta\,U_{h_{13}(11)}\}\exp\{i\,\theta\,U_{f_{13}(11)}\}\exp\{i\,\theta\,U_{h_{14}(12)}\}\exp\{i\,\theta\,U_{f_{14}(12)}\}\right]\prod T_1^o\\
\end{aligned}
\end{equation}}
{\footnotesize
\begin{equation}\label{eqn:M2odd_n4}
\begin{aligned}
    M_2^o&=\prod T_2^o\left[\prod_{
    \begin{tabular}{cc}
    (2,5),\!\!\!\!\!\!&\!\!\!(10,15),\\
    (10,13),\!\!\!\!\!&(10,11) 
    \end{tabular}
    }\exp\{i\,\theta\,U_{h_\beta(\alpha-1)}\}\exp\{i\,\theta\,U_{f_\beta(\alpha-1)}\}\exp\{i\,\theta\,U_{h_{\beta+1}(\alpha)}\exp\{i\,\theta\,U_{f_{\beta+1}(\alpha)}\}\right]\prod T_2^o=\\
    &=\prod T_2^o\left[\exp\{i\,\theta\,U_{h_5(1)}\}\exp\{i\,\theta\,U_{f_5(1)}\}\exp\{i\,\theta\,U_{h_6(2)}\exp\{i\,\theta\,U_{f_6(2)}\}\right.\cdot\\
    &\cdot\exp\{i\,\theta\,U_{h_{15}(9)}\}\exp\{i\,\theta\,U_{f_{15}(9)}\}\exp\{i\,\theta\,U_{h_{16}(10)}\}\exp\{i\,\theta\,U_{f_{16}(10)}\}\cdot\\
    &\cdot\exp\{i\,\theta\,U_{h_{13}(9)}\}\exp\{i\,\theta\,U_{f_{13}(9)}\}\exp\{i\,\theta\,U_{h_{14}(10)}\}\exp\{i\,\theta\,U_{f_{14}(10)}\}\cdot\\
    &\cdot\left.\exp\{i\,\theta\,U_{h_{11}(9)}\}\exp\{i\,\theta\,U_{f_{11}(9)}\}\exp\{i\,\theta\,U_{h_{12}(10)}\}\exp\{i\,\theta\,U_{f_{12}(10)}\}\right]\prod T_2^o
\end{aligned}
\end{equation}}
{\footnotesize
\begin{equation}\label{eqn:M3odd_n4}
\begin{aligned}
    M_3^o&=\prod T_3^o\left[\prod_{
    \begin{tabular}{cc}
    (2,7),\!\!\!\!\!\!&\!\!\!(8,15),\\
    (8,13),\!\!\!\!\!&(8,11) 
    \end{tabular}
    }\exp\{i\,\theta\,U_{h_\beta(\alpha-1)}\}\exp\{i\,\theta\,U_{f_\beta(\alpha-1)}\}\exp\{i\,\theta\,U_{h_{\beta+1}(\alpha)}\exp\{i\,\theta\,U_{f_{\beta+1}(\alpha)}\}\right]\prod T_3^o=\\
    &=\prod T_3^o\left[\exp\{i\,\theta\,U_{h_7(1)}\}\exp\{i\,\theta\,U_{f_7(1)}\}\exp\{i\,\theta\,U_{h_8(2)}\exp\{i\,\theta\,U_{f_8(2)}\}\right.\cdot\\
    &\cdot\exp\{i\,\theta\,U_{h_{15}(7)}\}\exp\{i\,\theta\,U_{f_{15}(7)}\}\exp\{i\,\theta\,U_{h_{16}(8)}\}\exp\{i\,\theta\,U_{f_{16}(8)}\}\cdot\\
    &\cdot\exp\{i\,\theta\,U_{h_{13}(7)}\}\exp\{i\,\theta\,U_{f_{13}(7)}\}\exp\{i\,\theta\,U_{h_{14}(8)}\}\exp\{i\,\theta\,U_{f_{14}(8)}\}\cdot\\
    &\cdot\left.\exp\{i\,\theta\,U_{h_{11}(7)}\}\exp\{i\,\theta\,U_{f_{11}(7)}\}\exp\{i\,\theta\,U_{h_{12}(8)}\}\exp\{i\,\theta\,U_{f_{12}(8)}\}\right]\prod T_3^o
\end{aligned}  
\end{equation}}
Therefore, the $M_x^o$ factors become $2\times2$-block diagonal matrix where each block no longer belongs to $SU(2)$ (but only to $U(2)$), thus composing as a whole a unitary matrix that can be decomposed only in part through a $M_4ZYZ$~\cite{sarkar2023scalable}. The circuit for $M_x^o$ is given by Figure~\ref{fig:Mxo_circuit_n4}. Finally, Figure~\ref{fig:prodTxo_diagram_n4} shows the circuits corresponding to the remaining seven pairs of $\prod T_x^o$ factors (refer to Appendix~\ref{sec:cnot_sequences}).
\begin{figure}[htbp]
\centering
\resizebox{0.8\textwidth}{!}{
    \begin{subfigure}[b]{1.2\textwidth}
        \centering
        $F_{pre-scaling}(\theta_{1-7,x}^*):$
        \begin{quantikz}
            \lstick{0}&\gate{R_z(\theta_{1,x}^*)}&\ctrl{1}&&\ctrl{1}&&&\ctrl{2}&&&&\ctrl{2}&\\
            \lstick{1}&\gate{R_z(\theta_{2,x}^*)}&\targ{}&\gate{R_z(\theta_{3,x}^*)}&\targ{}&\ctrl{1}&&&&\ctrl{1}&&&\\
            \lstick{2}&&&&\gate{R_z(\theta_{4,x}^*)}&\targ{}&\gate{R_z(\theta_{5,x}^*)}&\targ{}&\gate{R_z(\theta_{6,x}^*)}&\targ{}&\gate{R_z(\theta_{7,x}^*)}&\targ{}&\\
            \lstick{3}&&&&&&&&&&&&
        \end{quantikz}
    \end{subfigure}}\vspace{3mm}
    \resizebox{0.8\textwidth}{!}{
    \begin{subfigure}[b]{1.2\textwidth}
        \centering
        $F_{post-scaling}(\theta_{8-14,x}^*):$
        \begin{quantikz}
            \lstick{0}&\ctrl{2}&&&&\ctrl{2}&&&\ctrl{1}&&\ctrl{1}&\gate{R_z(\theta_{14,x}^*)}&\\
            \lstick{1}&&&\ctrl{1}&&&&\ctrl{1}&\targ{}&\gate{R_z(\theta_{12,x}^*)}&\targ{}&\gate{R_z(\theta_{13,x}^*)}&\\
            \lstick{2}&\targ{}&\gate{R_z(\theta_{8,x}^*)}&\targ{}&\gate{R_z(\theta_{9,x}^*)}&\targ{}&\gate{R_z(\theta_{10,x}^*)}&\targ{}&\gate{R_z(\theta_{11,x}^*)}&&&&\\
            \lstick{3}&&&&&&&&&&&&\\
        \end{quantikz}
    \end{subfigure}}\vspace{3mm}
    \resizebox{0.8\textwidth}{!}{
    \begin{subfigure}[b]{1.2\textwidth}
        \centering
        $M_x^o(\theta_{1-14}^*)=$
        \begin{quantikz}
            \lstick{0}&\gate[3]{F_{pre-scaling}(\theta_{1-7,x}^*)}&\gate[4]{M_4ZYZ}&\gate[3]{F_{post-scaling}(\theta_{8-14,x}^*)}&\\
            \lstick{1}&&&&\\
            \lstick{2}&&&&\\
            \lstick{3}&&&&
        \end{quantikz}
    \end{subfigure}}   
\caption{$M_x^o$ quantum circuit.}
\label{fig:Mxo_circuit_n4}
\end{figure}
\begin{figure}[htbp]
\resizebox{1\textwidth}{!}{
    \begin{subfigure}[c]{1\textwidth}
    \centering\vspace{5mm}
        \begin{tabular}{||c|c|c|c|c||}
        \hline
        x&Bits&Binary&Control-Target&$k$-index\\
        \hline\hline
        1&$x_0x_1x_2$&001&(3,2)&2\\
        2&$x_0x_1x_2$&010&(3,1)&1\\
        3&$x_0x_1x_2$&011&(3,1),(3,2)&1\\
        4&$x_0x_1x_2$&100&(3,0)&0\\
        5&$x_0x_1x_2$&101&(3,0),(3,2)&0\\
        6&$x_0x_1x_2$&110&(3,0),(3,1)&0\\
        7&$x_0x_1x_2$&111&(3,0),(3,1),(3,2)&0\\
        \hline
        \end{tabular}   
    \end{subfigure}}\vspace{5mm}
    \resizebox{0.25\textwidth}{!}{
    \begin{subfigure}[c]{0.3\textwidth}
    \centering
        \begin{quantikz}
        \lstick{0}&&&&\\
        \lstick{1}&&&&\\
        \lstick{2}&\ctrl{1}&\targ{}&\ctrl{1}&\\
        \lstick{3}&\targ{}&\ctrl{-1}&\targ{}&
        \end{quantikz}
        \caption{$\prod T_1^o$}
    \end{subfigure}}
    \resizebox{0.25\textwidth}{!}{
    \begin{subfigure}[c]{0.3\textwidth}
    \centering
        \begin{quantikz}
        \lstick{0}&&&&\\
        \lstick{1}&\ctrl{2}&\targ{}&\ctrl{2}&\\
        \lstick{2}&&&&\\
        \lstick{3}&\targ{}&\ctrl{-2}&\targ{}&
        \end{quantikz}
        \caption{$\prod T_2^o$}
    \end{subfigure}}
    \resizebox{0.25\textwidth}{!}{
    \begin{subfigure}[c]{0.3\textwidth}
    \centering
        \begin{quantikz}
        \lstick{0}&&&&&\\
        \lstick{1}&\ctrl{2}&\targ{}&&\ctrl{2}&\\
        \lstick{2}&&&\targ{}&&\\
        \lstick{3}&\targ{}&\ctrl{-2}&\ctrl{-1}&\targ{}&
        \end{quantikz}
        \caption{$\prod T_3^o$}
    \end{subfigure}}\vspace{0.3cm}
    \resizebox{0.25\textwidth}{!}{
    \begin{subfigure}[c]{0.3\textwidth}
    \centering
        \begin{quantikz}
        \lstick{0}&\ctrl{3}&\targ{}&\ctrl{3}&\\
        \lstick{1}&&&&\\
        \lstick{2}&&&&\\
        \lstick{3}&\targ{}&\ctrl{-3}&\targ{}&
        \end{quantikz}
        \caption{$\prod T_4^o$}
    \end{subfigure}}
    \resizebox{0.25\textwidth}{!}{
    \begin{subfigure}[c]{0.3\textwidth}
    \centering
        \begin{quantikz}
        \lstick{0}&\ctrl{3}&\targ{}&&\ctrl{3}&\\
        \lstick{1}&&&&&\\
        \lstick{2}&&&\targ{}&&\\
        \lstick{3}&\targ{}&\ctrl{-3}&\ctrl{-1}&\targ{}&
        \end{quantikz}
        \caption{$\prod T_5^o$}
    \end{subfigure}}
    \resizebox{0.25\textwidth}{!}{
    \begin{subfigure}[c]{0.3\textwidth}
    \centering
        \begin{quantikz}
        \lstick{0}&\ctrl{3}&\targ{}&&\ctrl{3}&\\
        \lstick{1}&&&\targ{}&&\\
        \lstick{2}&&&&&\\
        \lstick{3}&\targ{}&\ctrl{-3}&\ctrl{-2}&\targ{}&
        \end{quantikz}
        \caption{$\prod T_6^o$}
    \end{subfigure}}\vspace{0.3cm}
    \resizebox{0.25\textwidth}{!}{
    \begin{subfigure}[c]{0.3\textwidth}
    \centering
        \begin{quantikz}
        \lstick{0}&\ctrl{3}&\targ{}&&&\ctrl{3}&\\
        \lstick{1}&&&\targ{}&&&\\
        \lstick{2}&&&&\targ{}&&\\
        \lstick{3}&\targ{}&\ctrl{-3}&\ctrl{-2}&\ctrl{-1}&\targ{}&
        \end{quantikz}
        \caption{$\prod T_7^o$}
    \end{subfigure}} 
\caption{Diagram to find the control-target pair and the $k$-index for the CNOTs representing $\prod T_x^o$.}
\label{fig:prodTxo_diagram_n4}
\end{figure}

\subsection{4-qubit circuit to approximate SU(16)}
Composing the circuits of the previous subsections in the correct order, we can draw the overall circuit to approximate an arbitrary $SU(16)$ operator according to equation~(\ref{eqn:U_approx}) with one single layer ($l=1$). The approximating operator can be rewritten in the following way:
\begin{equation}
\begin{split}
    &\mathcal{U}_{approx}(\Theta)=Z_1(\Theta_{Z})\,\Psi_1(\Theta_\Psi)\,\Phi_1(\Theta_\Phi)=\\
    &=\prod_{i\in\mathcal{J}_Z}\exp\{i\;\theta_iU_i\}\prod_{j\in\mathcal{A}_\Psi}\exp\{i\;\theta_jU_j\}\cdot\prod_{x=1}^7\left[\prod_{k_e\in\mathcal{B}_{\Psi,x}}\exp\{i\;\theta_{k_e}U_{k_e}\}\right]\prod_{x=1}^7\left[\prod_{k_o\in\mathcal{C}_{\Phi,x}}\exp\{i\;\theta_{k_o}U_{k_o}\}\right]
\end{split}
\end{equation}
Finally, keeping in mind the reversed order of the factors compared to the mathematical writing, in Figure~\ref{fig:n4circuit} is illustrated the overall circuit from left to right, line by line, taking care to label its sub-components. Figure~\ref{fig:n4circuit} highlights which parts of the quantum circuit repeat identically for this particular case $n=4$ and it is possible to immediately notice which CNOTs are simplified (red gates).
\begin{figure}[htbp]
\resizebox{1\textwidth}{!}{
    \begin{subfigure}[b]{1.6\textwidth}
    \centering
        \begin{quantikz}
        \lstick{0}&\ctrl{3}\gategroup[4,steps=5,style={dashed,rounded corners},label style={label position=below,anchor=north,yshift=-0.3cm}]{$\prod T_7^o$}&\targ{}&&&\ctrl{3}&\gate[4]{M_7^o}&\ctrl{3}\gategroup[4,steps=5,style={dashed,rounded corners},label style={label position=below,anchor=north,yshift=-0.3cm}]{$\prod T_7^o$}&\targ[style={red}]{}&&&\ctrl[style={red}]{3}&\ctrl[style={red}]{3}\gategroup[4,steps=4,style={dashed,rounded corners},label style={label position=below,anchor=north,yshift=-0.3cm}]{$\prod T_6^o$}&\targ[style={red}]{}&&\ctrl{3}&\gate[4]{M_6^o}&\ctrl{3}\gategroup[4,steps=4,style={dashed,rounded corners},label style={label position=below,anchor=north,yshift=-0.3cm}]{$\prod T_6^o$}&\targ[style={red}]{}&&\ctrl[style={red}]{3}&\ctrl[style={red}]{3}\gategroup[4,steps=4,style={dashed,rounded corners},label style={label position=below,anchor=north,yshift=-0.3cm}]{$\prod T_5^o$}&\targ[style={red}]{}&&\ctrl{3}&\gate[4]{M_5^o}&\rstick{...}\\
        \lstick{1}&&&\targ{}&&&&&&\targ[style={red}]{}&&&&&\targ[style={red}]{}&&&&&\targ{}&&&&&&&\rstick{...}\\
        \lstick{2}&&&&\targ{}&&&&&&\targ{}&&&&&&&&&&&&&\targ{}&&&\rstick{...}\\
        \lstick{3}&\targ{}&\ctrl{-3}&\ctrl{-2}&\ctrl{-1}&\targ{}&&\targ{}&\ctrl[style={red}]{-3}&\ctrl[style={red}]{-2}&\ctrl{-1}&\targ[style={red}]{}&\targ[style={red}]{}&\ctrl[style={red}]{-3}&\ctrl[style={red}]{-2}&\targ{}&&\targ{}&\ctrl[style={red}]{-3}&\ctrl{-2}&\targ[style={red}]{}&\targ[style={red}]{}&\ctrl[style={red}]{-3}&\ctrl{-1}&\targ{}&&\rstick{...}
        \end{quantikz}        
    \end{subfigure}}
\resizebox{1\textwidth}{!}{
    \begin{subfigure}[b]{1.6\textwidth}
    \centering
        \begin{quantikz}
        \lstick{...}&\ctrl{3}\gategroup[4,steps=4,style={dashed,rounded corners},label style={label position=below,anchor=north,yshift=-0.3cm}]{$\prod T_5^o$}&\targ[style={red}]{}&&\ctrl[style={red}]{3}&\ctrl[style={red}]{3}\gategroup[4,steps=3,style={dashed,rounded corners},label style={label position=below,anchor=north,yshift=-0.3cm}]{$\prod T_4^o$}&\targ[style={red}]{}&\ctrl{3}&\gate[4]{M_4^o}&\ctrl{3}\gategroup[4,steps=3,style={dashed,rounded corners},label style={label position=below,anchor=north,yshift=-0.3cm}]{$\prod T_4^o$}&\targ{}&\ctrl{3}&\gategroup[4,steps=4,style={dashed,rounded corners},label style={label position=below,anchor=north,yshift=-0.3cm}]{$\prod T_3^o$}&&&&\gate[4]{M_3^o}&\gategroup[4,steps=4,style={dashed,rounded corners},label style={label position=below,anchor=north,yshift=-0.3cm}]{$\prod T_3^o$}&&&&\gategroup[4,steps=3,style={dashed,rounded corners},label style={label position=below,anchor=north,yshift=-0.3cm}]{$\prod T_2^o$}&&&\gate[4]{M_2^o}&\rstick{...}\\
        \lstick{...}&&&&&&&&&&&&\ctrl{2}&\targ{}&&\ctrl{2}&&\ctrl{2}&\targ[style={red}]{}&&\ctrl[style={red}]{2}&\ctrl[style={red}]{2}&\targ[style={red}]{}&\ctrl{2}&&\rstick{...}\\
        \lstick{...}&&&\targ{}&&&&&&&&&&&\targ{}&&&&&\targ{}&&&&&&\rstick{...}\\
        \lstick{...}&\targ{}&\ctrl[style={red}]{-3}&\ctrl{-1}&\targ[style={red}]{}&\targ[style={red}]{}&\ctrl[style={red}]{-3}&\targ{}&&\targ{}&\ctrl{-3}&\targ{}&\targ{}&\ctrl{-2}&\ctrl{-1}&\targ{}&&\targ{}&\ctrl[style={red}]{-2}&\ctrl{-1}&\targ[style={red}]{}&\targ[style={red}]{}&\ctrl[style={red}]{-2}&\targ{}&&\rstick{...}
        \end{quantikz}        
    \end{subfigure}}
\resizebox{1\textwidth}{!}{
    \begin{subfigure}[b]{1.6\textwidth}
    \centering
        \begin{quantikz}
        \lstick{...}&\gategroup[4,steps=3,style={dashed,rounded corners},label style={label position=below,anchor=north,yshift=-0.3cm}]{$\prod T_2^o$}&&&\gategroup[4,steps=3,style={dashed,rounded corners},label style={label position=below,anchor=north,yshift=-0.3cm}]{$\prod T_1^o$}&&&\gate[4]{M_1^o}&\gategroup[4,steps=3,style={dashed,rounded corners},label style={label position=below,anchor=north,yshift=-0.3cm}]{$\prod T_1^o$}&&&\targ{}\gategroup[4,steps=3,style={dashed,rounded corners},label style={label position=below,anchor=north,yshift=-0.3cm}]{$\prod T_7^e$}&&&\gate[4]{M_7^e}&\targ[style={red}]{}\gategroup[4,steps=3,style={dashed,rounded corners},label style={label position=below,anchor=north,yshift=-0.3cm}]{$\prod T_7^e$}&&&\targ[style={red}]{}\gategroup[4,steps=2,style={dashed,rounded corners},label style={label position=below,anchor=north,yshift=-0.3cm}]{$\prod T_6^e$}&&\gate[4]{M_6^e}&\rstick{...}\\
        \lstick{...}&\ctrl{2}&\targ{}&\ctrl{2}&&&&&&&&&\targ{}&&&&\targ[style={red}]{}&&&\targ[style={red}]{}&&\rstick{...}\\
        \lstick{...}&&&&\ctrl{1}&\targ{}&\ctrl{1}&&\ctrl{1}&\targ{}&\ctrl{1}&&&\targ{}&&&&\targ{}&&&&\rstick{...}\\
        \lstick{...}&\targ{}&\ctrl{-2}&\targ{}&\targ{}&\ctrl{-1}&\targ{}&&\targ{}&\ctrl{-1}&\targ{}&\ctrl{-3}&\ctrl{-2}&\ctrl{-1}&&\ctrl[style={red}]{-3}&\ctrl[style={red}]{-2}&\ctrl{-1}&\ctrl[style={red}]{-3}&\ctrl[style={red}]{-2}&&\rstick{...}
        \end{quantikz}        
    \end{subfigure}}
    \resizebox{1\textwidth}{!}{
    \begin{subfigure}[b]{1.6\textwidth}
    \centering
        \begin{quantikz}
        \lstick{...}&\targ[style={red}]{}\gategroup[4,steps=2,style={dashed,rounded corners},label style={label position=below,anchor=north,yshift=-0.3cm}]{$\prod T_6^e$}&&\targ[style={red}]{}\gategroup[4,steps=2,style={dashed,rounded corners},label style={label position=below,anchor=north,yshift=-0.3cm}]{$\prod T_5^e$}&&\gate[4]{M_5^e}&\targ[style={red}]{}\gategroup[4,steps=2,style={dashed,rounded corners},label style={label position=below,anchor=north,yshift=-0.3cm}]{$\prod T_5^e$}&&\targ[style={red}]{}\gategroup[4,steps=1,style={dashed,rounded corners},label style={label position=below,anchor=north,yshift=-0.3cm}]{$\prod T_4^e$}&\gate[4]{M_4^e}&\targ{}\gategroup[4,steps=1,style={dashed,rounded corners},label style={label position=below,anchor=north,yshift=-0.3cm}]{$\prod T_4^e$}&\gategroup[4,steps=2,style={dashed,rounded corners},label style={label position=below,anchor=north,yshift=-0.3cm}]{$\prod T_3^e$}&&\gate[4]{M_3^e}&\gategroup[4,steps=2,style={dashed,rounded corners},label style={label position=below,anchor=north,yshift=-0.3cm}]{$\prod T_3^e$}&&\gategroup[4,steps=1,style={dashed,rounded corners},label style={label position=below,anchor=north,yshift=-0.3cm}]{$\prod T_2^e$}&\gate[4]{M_2^e}&\rstick{...}\\
        \lstick{...}&&\targ{}&&&&&&&&&\targ{}&&&\targ[style={red}]{}&&\targ[style={red}]{}&&\rstick{...}\\
        \lstick{...}&&&&\targ{}&&&\targ{}&&&&&\targ{}&&&\targ{}&&&\rstick{...}\\
        \lstick{...}&\ctrl[style={red}]{-3}&\ctrl{-2}&\ctrl[style={red}]{-3}&\ctrl{-1}&&\ctrl[style={red}]{-3}&\ctrl{-1}&\ctrl[style={red}]{-3}&&\ctrl{-3}&\ctrl{-2}&\ctrl{-1}&&\ctrl[style={red}]{-2}&\ctrl{-1}&\ctrl[style={red}]{-2}&&\rstick{...}
        \end{quantikz}        
    \end{subfigure}}
    \resizebox{1\textwidth}{!}{
    \begin{subfigure}[b]{1.6\textwidth}
    \centering
        \begin{quantikz}
        \lstick{...}&\gategroup[4,steps=1,style={dashed,rounded corners},label style={label position=below,anchor=north,yshift=-0.3cm}]{$\prod T_2^e$}&\gategroup[4,steps=1,style={dashed,rounded corners},label style={label position=below,anchor=north,yshift=-0.3cm}]{$\prod T_1^e$}&\gate[4]{M_1^e}&\gategroup[4,steps=1,style={dashed,rounded corners},label style={label position=below,anchor=north,yshift=-0.3cm}]{$\prod T_1^e$}&\gate[4]{M_4ZYZ}\gategroup[4,steps=1,style={dashed,rounded corners},label style={label position=below,anchor=north,yshift=-0.3cm}]{sub-factor A}&\gate[4]{Z(\Theta_Z)}&\\
        \lstick{...}&\targ{}&&&&&&\\
        \lstick{...}&&\targ{}&&\targ{}&&&\\
        \lstick{...}&\ctrl{-2}&\ctrl{-1}&&\ctrl{-1}&&&
        \end{quantikz}        
    \end{subfigure}}
\caption{Quantum circuit for general 4-qubit special unitary operators; red gates are simplified gates.}
\label{fig:n4circuit}
\end{figure}

\clearpage
\renewcommand{\thesubfigure}{\arabic{subfigure}}
\section{Tested circuits}\label{app:circuits}
\label{circuitsTested}
\subsection{List of tested circuits for n = 2}~
\begin{figure}[H]
    \begin{subfigure}[b]{0.2\textwidth}
        \centering
        \begin{quantikz}
        \lstick{0}&\ctrl{1}&\\
        \lstick{1}&\targ{}&
        \end{quantikz}
    \end{subfigure}
    \hfill
    \begin{subfigure}[b]{0.2\textwidth}
        \centering
        \begin{quantikz}
        \lstick{0}&\targ{}&\\
        \lstick{1}&\ctrl{-1}&
        \end{quantikz}
    \end{subfigure}
    \hfill
    \begin{subfigure}[b]{0.2\textwidth}
        \centering
        \begin{quantikz}
        \lstick{0}&\gate{X}&\\
        \lstick{1}&\gate{X}& 
        \end{quantikz}
    \end{subfigure}
    \hfill
    \begin{subfigure}[b]{0.2\textwidth}
        \centering
        \begin{quantikz}
        \lstick{0}&\gate{Y}&\\
        \lstick{1}&\gate{Y}&  
        \end{quantikz}
    \end{subfigure}\vspace{0.5cm}

    \begin{subfigure}[b]{0.2\textwidth}
        \centering
        \begin{quantikz}
        \lstick{0}&\gate{Z}&\\
        \lstick{1}&\gate{Z}&  
        \end{quantikz}
    \end{subfigure}
    \hfill
    \begin{subfigure}[b]{0.3\textwidth}
        \centering
        \begin{quantikz}
        \lstick{0}&\gate[2]{\sqrt{\text{iSWAP}}}&\\
        \lstick{1}&&  
        \end{quantikz}
    \end{subfigure}
    \hfill
    \begin{subfigure}[b]{0.2\textwidth}
        \centering
        \begin{quantikz}
        \lstick{0}&\gate{X}&\\
        \lstick{1}&\gate{Z}&  
        \end{quantikz}
    \end{subfigure}
    \hfill
    \begin{subfigure}[b]{0.2\textwidth}
        \centering
        \begin{quantikz}
        \lstick{0}&\gate{Z}&\\
        \lstick{1}&\gate{X}&  
        \end{quantikz}
    \end{subfigure}\vspace{0.5cm}

    \begin{subfigure}[b]{0.2\textwidth}
        \centering
        \begin{quantikz}
        \lstick{0}&\gate{Z}&\\
        \lstick{1}&\gate{Y}&  
        \end{quantikz}
    \end{subfigure}
    \hfill
    \begin{subfigure}[b]{0.2\textwidth}
        \centering
        \begin{quantikz}
        \lstick{0}&\gate{H}&\\
        \lstick{1}&&  
        \end{quantikz}
    \end{subfigure}
    \hfill
    \begin{subfigure}[b]{0.2\textwidth}
        \centering
        \begin{quantikz}
        \lstick{0}&\gate{H}&\\
        \lstick{1}&\gate{H}&  
        \end{quantikz}
    \end{subfigure}
    \hfill
    \begin{subfigure}[b]{0.3\textwidth}
        \centering
        \begin{quantikz}
        \lstick{0}&\gate[2]{\text{iSWAP}}&\\
        \lstick{1}&&  
        \end{quantikz}
    \end{subfigure}\vspace{0.5cm}

    \begin{subfigure}[b]{0.2\textwidth}
        \centering
        \begin{quantikz}
        \lstick{0}&\ctrl{1}&\\
        \lstick{1}&\gate{S}&  
        \end{quantikz}
    \end{subfigure}
    \hfill
    \begin{subfigure}[b]{0.2\textwidth}
        \centering
        \begin{quantikz}
        \lstick{0}&\ctrl{1}&\\
        \lstick{1}&\gate{T}&  
        \end{quantikz}
    \end{subfigure}
    \hfill
    \begin{subfigure}[b]{0.2\textwidth}
        \centering
        \begin{quantikz}
        \lstick{0}&\gate{\sqrt{X}}&\\
        \lstick{1}&&  
        \end{quantikz}
    \end{subfigure}
    \hfill
    \begin{subfigure}[b]{0.3\textwidth}
        \centering
        \begin{quantikz}
        \lstick{0}&\gate{X}&\gate{Y}&\\
        \lstick{1}&\gate{X}&\gate{Y}&  
        \end{quantikz}
    \end{subfigure}\vspace{0.5cm}

    \begin{subfigure}[b]{0.15\textwidth}
        \centering
        \begin{quantikz}
        \lstick{0}&\swap{1}&\\
        \lstick{1}&\targX{}&\\
        \end{quantikz}
    \end{subfigure}
    \hfill
    \begin{subfigure}[b]{0.3\textwidth}
        \centering
        \begin{quantikz}
        \lstick{0}&\gate{H}&\ctrl{1}&\\
        \lstick{1}&&\targ{}&
        \end{quantikz}
    \end{subfigure}
    \hfill
    \begin{subfigure}[b]{0.25\textwidth}
        \centering
        \begin{quantikz}
        \lstick{0}&\gate[2]{\text{QFT}}&\\
        \lstick{1}&&  
        \end{quantikz}
    \end{subfigure}
    \hfill
    \begin{subfigure}[b]{0.25\textwidth}
        \centering
        \begin{quantikz}
        \lstick{0}&\gate[2]{\text{Grover}}&\\
        \lstick{1}&&  
        \end{quantikz}
    \end{subfigure}
\label{fig:N_2Circuits}
\end{figure}
\newpage
\subsection{List of tested circuits for n = 3}~
\begin{figure}[H]
    \begin{subfigure}[b]{0.15\textwidth}
        \centering
        \begin{quantikz}
        \lstick{0}&\ctrl{1}&\\
        \lstick{1}&\targ{}&\\
        \lstick{2}&&
        \end{quantikz}
    \end{subfigure}
    \hfill
    \begin{subfigure}[b]{0.2\textwidth}
        \centering
        \begin{quantikz}
        \lstick{0}&\ctrl{1}&\\
        \lstick{1}&\targ{}&\\
        \lstick{2}&\gate{H}&
        \end{quantikz}
    \end{subfigure}
    \hfill
    \begin{subfigure}[b]{0.15\textwidth}
        \centering
        \begin{quantikz}
        \lstick{0}&&\\
        \lstick{1}&\targ{}&\\
        \lstick{2}&\ctrl{-1}&
        \end{quantikz}
    \end{subfigure}
    \hfill
    \begin{subfigure}[b]{0.15\textwidth}
        \centering
        \begin{quantikz}
        \lstick{0}&\ctrl{2}&\\
        \lstick{1}&&\\
        \lstick{2}&\targ{}&
        \end{quantikz}
    \end{subfigure}\vspace{0.3cm}

    %\hfill
    \begin{subfigure}[b]{0.2\textwidth}
        \centering
        \begin{quantikz}
        \lstick{0}&\ctrl{1}&\\
        \lstick{1}&\targ{}&\\
        \lstick{2}&\gate{X}&
        \end{quantikz}
    \end{subfigure}
    \hfill
    \begin{subfigure}[b]{0.2\textwidth}
        \centering
        \begin{quantikz}
        \lstick{0}&\ctrl{1}&\\
        \lstick{1}&\targ{}&\\
        \lstick{2}&\gate{Y}&
        \end{quantikz}
    \end{subfigure}
    \hfill
    \begin{subfigure}[b]{0.2\textwidth}
        \centering
        \begin{quantikz}
        \lstick{0}&\ctrl{1}&\\
        \lstick{1}&\targ{}&\\
        \lstick{2}&\gate{Z}&
        \end{quantikz}
    \end{subfigure}
    \hfill
    \begin{subfigure}[b]{0.2\textwidth}
        \centering
        \begin{quantikz}
        \lstick{0}&\gate{X}&\\
        \lstick{1}&\gate{X}&\\
        \lstick{2}&\gate{X}&
        \end{quantikz}
    \end{subfigure}\vspace{0.3cm}

    \begin{subfigure}[b]{0.2\textwidth}
        \centering
        \begin{quantikz}
        \lstick{0}&\gate{X}&\\
        \lstick{1}&\gate{Y}&\\
        \lstick{2}&\gate{X}&
        \end{quantikz}
    \end{subfigure}
    \hfill
    \begin{subfigure}[b]{0.2\textwidth}
        \centering
        \begin{quantikz}
        \lstick{0}&\gate{X}&\\
        \lstick{1}&\gate{Y}&\\
        \lstick{2}&\gate{Z}&
        \end{quantikz}
    \end{subfigure}
    \hfill
    \begin{subfigure}[b]{0.2\textwidth}
        \centering
        \begin{quantikz}
        \lstick{0}&\gate{H}&\\
        \lstick{1}&\gate{H}&\\
        \lstick{2}&\gate{H}&
        \end{quantikz}
    \end{subfigure}
    \hfill
    \begin{subfigure}[b]{0.3\textwidth}
        \centering
        \begin{quantikz}
        \lstick{0}&&\ctrl{1}&\\
        \lstick{1}&\ctrl{1}&\targ{}&\\
        \lstick{2}&\targ{}&&
        \end{quantikz}
    \end{subfigure}\vspace{0.4cm}

    \begin{subfigure}[b]{0.28\textwidth}
        \centering
        \begin{quantikz}
        \lstick{0}&&\targ{}&\\
        \lstick{1}&\targ{}&\ctrl{-1}&\\
        \lstick{2}&\ctrl{-1}&&
        \end{quantikz}
    \end{subfigure}
    \hfill
    \begin{subfigure}[b]{0.28\textwidth}
        \centering
        \begin{quantikz}
        \lstick{0}&\ctrl{2}&&\\
        \lstick{1}&&\ctrl{1}&\\
        \lstick{2}&\targ{}&\targ{}&
        \end{quantikz}
    \end{subfigure}
    \hfill
    \resizebox{0.2\textwidth}{!}{
    \begin{subfigure}[b]{0.25\textwidth}
        \centering
        \begin{quantikz}
        \lstick{0}&\gate[3]{\text{Toffoli}}&\\
        \lstick{1}&&\\
        \lstick{2}&&
        \end{quantikz}
    \end{subfigure}}
    \hfill
    \resizebox{0.2\textwidth}{!}{
    \begin{subfigure}[b]{0.25\textwidth}
        \centering
        \begin{quantikz}
        \lstick{0}&\gate[3]{\text{Grover}}&\\ 
        \lstick{1}&&\\
        \lstick{2}&&
        \end{quantikz}
    \end{subfigure}}\vspace{0.4cm}

    \begin{subfigure}[b]{0.15\textwidth}
        \centering
        \begin{quantikz}
        \lstick{0}&\targ{}&\\
        \lstick{1}&&\\
        \lstick{2}&\ctrl{-2}&
        \end{quantikz}
    \end{subfigure}
    \hfill
    \begin{subfigure}[b]{0.25\textwidth}
        \centering
        \begin{quantikz}
        \lstick{0}&\ctrl{2}&\\
        \lstick{1}&\ctrl[open]{1}&\\
        \lstick{2}&\gate{RY(\frac{\pi}{4})}&
        \end{quantikz}
    \end{subfigure}
    \hfill
    \begin{subfigure}[b]{0.25\textwidth}
        \centering
        \begin{quantikz}
        \lstick{0}&\gate{X}&\ctrl{1}&\\
        \lstick{1}&\ctrl{1}&\targ{}&\\
        \lstick{2}&\targ{}&\gate{Y}&
        \end{quantikz}
    \end{subfigure}
    \hfill
    \begin{subfigure}[b]{0.25\textwidth}
        \centering
        \begin{quantikz}
        \lstick{0}&\gate{H}&\gate{X}&\\
        \lstick{1}&\gate{H}&\gate{Y}&\\
        \lstick{2}&\gate{H}&\gate{X}&
        \end{quantikz}
    \end{subfigure}\vspace{0.4cm}

    \begin{subfigure}[b]{0.25\textwidth}
        \centering
        \begin{quantikz}
        \lstick{0}&\gate{H}&\gate{X}&\\
        \lstick{1}&\gate{H}&\gate{Y}&\\
        \lstick{2}&\gate{H}&\gate{Z}&
        \end{quantikz}
    \end{subfigure}
    \hfill
    \begin{subfigure}[b]{0.25\textwidth}
        \centering
        \begin{quantikz}
        \lstick{0}&\gate{H}&\gate{X}&\\
        \lstick{1}&\gate{H}&\gate{X}&\\
        \lstick{2}&\gate{H}&\gate{X}&
        \end{quantikz}
    \end{subfigure}
    \hfill
    \begin{subfigure}[b]{0.3\textwidth}
        \centering
        \begin{quantikz}
        \lstick{0}&\gate{H}&&\\
        \lstick{1}&\gate{X}&\gate{Y}&\\
        \lstick{2}&\gate{X}&\gate{Z}&
        \end{quantikz}
    \end{subfigure}
    \hfill
    \begin{subfigure}[b]{0.15\textwidth}
        \centering
        \begin{quantikz}
        \lstick{0}&\ctrl[open]{2}&\\
        \lstick{1}&\ctrl[open]{1}&\\
        \lstick{2}&\targ{}&
        \end{quantikz}
    \end{subfigure}\vspace{0.4cm}
    
    \resizebox{0.38\textwidth}{!}{
    \begin{subfigure}[b]{0.5\textwidth}
        \centering
        \begin{quantikz}
        \lstick{0}&\gate{\sqrt{X}}&&\gate{Y}&\\
        \lstick{1}&&\gate{H}&\ctrl{1}&\\
        \lstick{2}&&\gate{H}&\gate{S}&
        \end{quantikz}
    \end{subfigure}}
    \hfill
    \resizebox{0.38\textwidth}{!}{
    \begin{subfigure}[b]{0.5\textwidth}
        \centering
        \begin{quantikz}
        \lstick{0}&\gate{H}&\ctrl{1}&&&\\
        \lstick{1}&&\targ{}&\gate{H}&\ctrl{1}&\\
        \lstick{2}&&&&\targ{}&
        \end{quantikz}
    \end{subfigure}}
    \hfill
    \resizebox{0.2\textwidth}{!}{
    \begin{subfigure}[b]{0.25\textwidth}
        \centering
        \begin{quantikz}
        \lstick{0}&\gate[3]{\text{QFT}}&\\
        \lstick{1}&&\\
        \lstick{2}&&
        \end{quantikz}
    \end{subfigure}}
\label{fig:N_3Circuits}
\end{figure}
\newpage
\subsection{List of tested circuits for n = 4}~
\begin{figure}[H]
    \begin{subfigure}[b]{0.15\textwidth}
        \centering
        \begin{quantikz}
        \lstick{0}&\ctrl{1}&\\
        \lstick{1}&\targ{}&\\
        \lstick{2}&\ctrl{1}&\\
        \lstick{3}&\targ{}&
        \end{quantikz}
    \end{subfigure}
    \hfill
    \begin{subfigure}[b]{0.15\textwidth}
        \centering
        \begin{quantikz}
        \lstick{0}&\ctrl{1}&\\
        \lstick{1}&\targ{}&\\
        \lstick{2}&\targ{}&\\
        \lstick{3}&\ctrl{-1}&
        \end{quantikz}
    \end{subfigure}
    \hfill
    \begin{subfigure}[b]{0.29\textwidth}
        \centering
        \begin{quantikz}
        \lstick{0}&\ctrl{1}&\ctrl{2}&\ctrl{3}&\\
        \lstick{1}&\targ{}&&&\\
        \lstick{2}&&\targ{}&&\\
        \lstick{3}&&&\targ{}&
        \end{quantikz}
    \end{subfigure}
    \hfill
    \begin{subfigure}[b]{0.38\textwidth}
        \centering
        \begin{quantikz}
        \lstick{0}&\targ{}&\ctrl{2}&&&\\
        \lstick{1}&\ctrl{-1}&&&\targ{}&\\
        \lstick{2}&&\targ{}&\ctrl{1}&&\\
        \lstick{3}&&&\targ{}&\ctrl{-2}&
        \end{quantikz}
    \end{subfigure}\vspace{0.5cm}

    \resizebox{0.3\textwidth}{!}{
    \begin{subfigure}[b]{0.35\textwidth}
        \centering
        \begin{quantikz}
        \lstick{0}&\gate{H}&&&\\
        \lstick{1}&\gate{H}&\ctrl{1}&&\\
        \lstick{2}&&\targ{}&\gate{H}&\\
        \lstick{3}&&&\gate{H}&
        \end{quantikz}
    \end{subfigure}}
    \hfill
    \resizebox{0.2\textwidth}{!}{
    \begin{subfigure}[b]{0.25\textwidth}
        \centering
        \begin{quantikz}
        \lstick{0}&\gate{H}&\gate{X}&\\
        \lstick{1}&\gate{H}&\gate{Y}&\\
        \lstick{2}&\gate{H}&\gate{Z}&\\
        \lstick{3}&\gate{H}&\gate{X}&
        \end{quantikz}
    \end{subfigure}}
    \hfill
    \resizebox{0.25\textwidth}{!}{
    \begin{subfigure}[b]{0.3\textwidth}
        \centering
        \begin{quantikz}
        \lstick{0}&\swap{1}&&\\
        \lstick{1}&\targX{}&&\\
        \lstick{2}&\gate{\sqrt{X}}&\ctrl{1}&\\
        \lstick{3}&&\targ{}&
        \end{quantikz}
    \end{subfigure}}
    \hfill
    \resizebox{0.18\textwidth}{!}{
    \begin{subfigure}[b]{0.25\textwidth}
        \centering
        \begin{quantikz}
        \lstick{0}&\gate[4]{Grover}&\\
        \lstick{1}&&\\
        \lstick{2}&&\\
        \lstick{3}&&
        \end{quantikz}
    \end{subfigure}}\vspace{0.5cm}

    \resizebox{0.3\textwidth}{!}{
    \begin{subfigure}[b]{0.55\textwidth}
        \centering
        \begin{quantikz}
        \lstick{0}&\gate{H}&\ctrl{1}&&&&&\\
        \lstick{1}&&\targ{}&\gate{H}&\ctrl{1}&&&\\
        \lstick{2}&&&&\targ{}&\gate{H}&\ctrl{1}&\\
        \lstick{3}&&&&&&\targ{}&
        \end{quantikz}
    \end{subfigure}}
    \hfill
    \begin{subfigure}[b]{0.15\textwidth}
        \centering
        \begin{quantikz}
        \lstick{0}&\ctrl{3}&\\
        \lstick{1}&\ctrl{2}&\\
        \lstick{2}&\ctrl{1}&\\
        \lstick{3}&\targ{}&
        \end{quantikz}
    \end{subfigure}
    \hfill
    \begin{subfigure}[b]{0.15\textwidth}
        \centering
        \begin{quantikz}
        \lstick{0}&\ctrl[open]{3}&\\
        \lstick{1}&\ctrl{2}&\\
        \lstick{2}&\ctrl[open]{1}&\\
        \lstick{3}&\targ{}&
        \end{quantikz}
    \end{subfigure}
    \hfill
    \begin{subfigure}[b]{0.25\textwidth}
        \centering
        \begin{quantikz}
        \lstick{0}&\ctrl[open]{3}&\\
        \lstick{1}&\ctrl{2}&\\
        \lstick{2}&\ctrl[open]{1}&\\
        \lstick{3}&\gate{RY(\frac{\pi}{4})}&
        \end{quantikz}
    \end{subfigure}\vspace{0.5cm}

     \resizebox{0.3\textwidth}{!}{
    \begin{subfigure}[b]{0.4\textwidth}
        \centering
        \begin{quantikz}
        \lstick{0}&\gate[2]{\text{iSWAP}}&&&\\
        \lstick{1}&&\ctrl{1}&&\\
        \lstick{2}&&\targ{}&\ctrl{1}&\\
        \lstick{3}&&&\gate{S}&
        \end{quantikz}
    \end{subfigure}}
    \hfill
    \resizebox{0.2\textwidth}{!}{
    \begin{subfigure}[b]{0.3\textwidth}
        \centering
        \begin{quantikz}
        \lstick{0}&\gate{X}&\ctrl{1}&\\
        \lstick{1}&\ctrl{1}&\targ{}&\\
        \lstick{2}&\targ{}&\gate{Y}&\\
        \lstick{3}&\gate{Y}&\gate{X}&
        \end{quantikz}
    \end{subfigure}}
    \hfill
    \resizebox{0.18\textwidth}{!}{
    \begin{subfigure}[b]{0.25\textwidth}
        \centering
        \begin{quantikz}
        \lstick{0}&\gate[3]{\text{Toffoli}}&\\
        \lstick{1}&&\\
        \lstick{2}&&\\
        \lstick{3}&&
        \end{quantikz}
    \end{subfigure}}
    \hfill
    \resizebox{0.18\textwidth}{!}{
    \begin{subfigure}[b]{0.25\textwidth}
        \centering
        \begin{quantikz}
        \lstick{0}&\gate[4]{\text{QFT}}&\\
        \lstick{1}&&\\
        \lstick{2}&&\\
        \lstick{3}&&
        \end{quantikz}
    \end{subfigure}}
\label{fig:N_4Circuits}
\end{figure}
\subsection{List of tested circuits for n = 5, 6}~
\begin{figure}[H]
    \begin{subfigure}[b]{0.35\textwidth}
        \centering
        \begin{quantikz}
        \lstick{0}&\gate[5]{\text{QFT}}&\\
        \lstick{1}&&\\
        \lstick{2}&&\\
        \lstick{\vdots}&&\\
        \lstick{n}&&
        \end{quantikz}
    \end{subfigure}
    \begin{subfigure}[b]{0.35\textwidth}
        \centering
        \begin{quantikz}
        \lstick{0}&\gate[5]{\text{Grover}}&\\
        \lstick{1}&&\\
        \lstick{2}&&\\
        \lstick{\vdots}&&\\
        \lstick{n}&&
        \end{quantikz}
    \end{subfigure}
\label{fig:N_5/6Circuits}
\end{figure}

%%%%%%% APPENDIX E
\section{Example of resulting matrices}
\label{a_exampleMatrices}
\subsection{QFT2}

Ideal QFT$_2$:\\
\[
\begin{bmatrix}
    0.5 + 0j & 0.5 + 0j & 0.5 + 0j & 0.5 + 0j \\
    0.5 + 0j & 0 + 0.5j & -0.5 + 0j & 0 - 0.5j \\
    0.5 + 0j & -0.5 + 0j & 0.5 + 0j & -0.5 + 0j \\
    0.5 + 0j & 0 - 0.5j & -0.5 + 0j & 0 + 0.5j
\end{bmatrix}\]\\
Approximated QFT$_2$ with Frobenius loss:\\\\
\resizebox{1\textwidth}{!}{\(
\begin{bmatrix}
    0.5001 + 0.00124j & 0.50034 + 0.00034j & 0.49992 - 0.00065j & 0.49964 + 0.00004j \\
    0.49992 + 0.00058j & 0.00006 + 0.49964j & -0.5001 + 0.00131j & -0.00036 - 0.50034j \\
    0.49999 + 0.00211j & -0.49989 - 0.00065j & 0.49999 - 0.00071j & -0.50012 - 0.00053j \\
    0.49999 + 0.00074j & -0.00046 - 0.50012j & -0.49999 + 0.00209j & 0.00058 + 0.49989j
\end{bmatrix}\)}\vspace{0.5cm}
Approximated QFT$_2$ with Fidelity loss:\\
%\resizebox{0.7\textwidth}{!}{
\[
\begin{bmatrix}
    0.5 + 10^{-5}j & 0.5 + 0j & 0.5 + 0j & 0.5 + 10^{-5}j \\
    0.50001 + 0j & -10^{-5} + 0.5j & -0.49999 + 0j & -0 - 0.5j \\
    0.49999 + 0j & -0.5 - 10^{-5}j & 0.5 + 0j & -0.5 + 10^{-5}j \\
    0.5 + 0j & -0 - 0.5j & -0.50001 + 0j & -0 + 0.5j
\end{bmatrix}\]\\
\vspace{0.5cm}
Approximated QFT$_2$ using Trace Distance loss:\\
\resizebox{1\textwidth}{!}{\(
\begin{bmatrix}
    0.49956 + 0.00033j & 0.49994 + 0.00015j & 0.50081 - 0.00151j & 0.49968 - 0.00084j \\
    0.50113 + 0.00061j & -0.00076 + 0.50015j & -0.4997 - 0.00017j & -0.00041 - 0.49902j \\
    0.49954 - 0.00048j & -0.49995 - 0.00091j & 0.50009 - 0.00025j & -0.50042 + 0.00115j \\
    0.49976 + 0.00112j & 0.00125 - 0.49995j & -0.4994 + 0.00005j & 0.00061 + 0.50088j
\end{bmatrix}\)}

\subsection{Frobenius loss random matrix}

Ideal random matrix:\\\\
\resizebox{1\textwidth}{!}{\(
\begin{bmatrix}
    -0.29584 + 0.229j & -0.04615 + 0.15104j & -0.17965 - 0.05721j & 0.88599 + 0.1207j \\
    0.60385 - 0.21653j & -0.1687 - 0.347j & 0.49773 + 0.1392j & 0.40585 + 0.08832j \\
    0.08131 - 0.03817j & -0.77354 - 0.04134j & -0.46739 + 0.41132j & -0.06164 - 0.02057j \\
    -0.56843 + 0.34255j & -0.29486 - 0.37306j & 0.53784 + 0.14295j & -0.09534 - 0.12097j
\end{bmatrix}\)}\vspace{0.5cm}
Approximated random matrix:\\\\
\resizebox{1\textwidth}{!}{\(
\begin{bmatrix}
    -0.29592 + 0.22861j & -0.04634 + 0.15084j & -0.17981 - 0.05736j & 0.8859 + 0.12175j \\
    0.60394 - 0.21609j & -0.16906 - 0.34673j & 0.49789 + 0.13962j & 0.40561 + 0.08859j \\
    0.0818 - 0.03893j & -0.7735 - 0.03856j & -0.46708 + 0.41182j & -0.06188 - 0.02029j \\
    -0.56897 + 0.34176j & -0.29565 - 0.37295j & 0.53712 + 0.144j & -0.09522 - 0.12115j
\end{bmatrix}\)}

\subsection{Fidelity loss random matrix}

Ideal random matrix:\\\\
\resizebox{1\textwidth}{!}{$
\begin{bmatrix}
    -0.16928 - 0.26816j & 0.76479 - 0.14228j & -0.07799 + 0.53619j & -0.02627 + 0.00328j \\
    -0.01908 + 0.21297j & -0.04915 - 0.38823j & -0.65553 - 0.00019j & 0.58473 + 0.17177j \\
    -0.07927 + 0.06076j & 0.33226 - 0.12021j & -0.03272 - 0.4948j & 0.11523 - 0.77846j \\
    -0.34491 - 0.85134j & -0.32301 - 0.11193j & -0.12928 - 0.11863j & 0.0702 - 0.06068j
\end{bmatrix}$}\vspace{0.5cm}
Approximated random matrix:\\\\
\resizebox{1\textwidth}{!}{$
\begin{bmatrix}
    -0.2166 - 0.20012j & 0.77381 - 0.12037j & -0.09693 + 0.53626j & -0.05025 - 0.0164j \\
    -0.11102 + 0.17193j & -0.08947 - 0.37078j & -0.66499 + 0.00496j & 0.58424 + 0.17046j \\
    -0.08118 + 0.07857j & 0.31621 - 0.1384j & -0.02304 - 0.50683j & 0.12519 - 0.77137j \\
    -0.3595 - 0.8539j & -0.32064 - 0.1389j & -0.04184 - 0.04047j & 0.03412 - 0.12229j
\end{bmatrix}$}

\subsection{Trace distance random matrix}

Ideal random matrix:\\\\
\resizebox{1\textwidth}{!}{$
\begin{bmatrix}
    -0.54567 - 0.69544j & 0.00628 + 0.03428j & 0.33735 - 0.26454j & 0.05243 - 0.17569j \\
    -0.43366 - 0.02622j & -0.03354 - 0.02404j & 0.04403 + 0.8673j & -0.15501 + 0.17713j \\
    -0.15789 + 0.01538j & -0.42342 - 0.18806j & -0.21585 + 0.02265j & 0.82598 + 0.17559j \\
    0.06695 - 0.01476j & -0.8375 - 0.28464j & 0.0952 - 0.07683j & -0.43968 - 0.06774j
\end{bmatrix}$}\vspace{0.5cm}
Approximated random matrix:\\\\
\resizebox{1\textwidth}{!}{$
\begin{bmatrix}
    -0.54806 - 0.69461j & 0.02581 + 0.02394j & 0.33803 - 0.26225j & 0.05547 - 0.17259j \\
    -0.43273 - 0.02469j & -0.01735 - 0.02715j & 0.04159 + 0.86756j & -0.1555 + 0.18033j \\
    -0.15529 + 0.00533j & -0.42411 - 0.1861j & -0.21809 + 0.03073j & 0.82615 + 0.17414j \\
    0.06245 - 0.03515j & -0.83762 - 0.28567j & 0.09951 - 0.06444j & -0.43896 - 0.07j
\end{bmatrix}$} 

\section*{Acknowledgments}
We thank Rohit S. Sarkar and Bibhas Adhikari for helpful discussions regarding the mathematical aspects of their algorithm on the SRBB. Giacomo Belli and Michele Amoretti acknowledge financial support from the European Union - NextGenerationEU, PNRR MUR project PE0000023-NQSTI. This research benefits from the High Performance Computing facility of the University of Parma, Italy (HPC.unipr.it) and also from IBM Quantum Credits awarded to Michele Amoretti - project Crosstalk-aware Quantum Multiprogramming.

\bibliographystyle{plainnat}
\bibliography{biblio.bib}

\end{document}